\newcommand{\dbtilde}[1]{\accentset{\approx}{#1}}
\newtheorem{theorem}{Theorem}
\newtheorem{proposition}[theorem]{Proposition}
\newtheorem{lemma}[theorem]{Lemma}
\newtheorem{remark}{Remark}
\newtheorem{definition}[theorem]{Definition}
\newcommand{\eps}{\varepsilon}
\newcommand{\p}{\mathfrak p}
\newcommand{\q}{\mathfrak q}
\newcommand{\T}{\mathcal T}
\newcommand{\J}{\mathbf J}
\newcommand{\V}{\mathcal V}
\newcommand{\R}{\mathcal R}
\renewcommand{\aa}{\mathfrak a}
\newcommand{\1}{I}
\newcommand{\ii}{\mathbf i}
\begin{document}

\begin{frontmatter}

\title{Cointegration in large VARs}

\begin{aug}

\author{\fnms{Anna} \snm{Bykhovskaya}\ead[label=e1,mark]{anna.bykhovskaya@wisc.edu}}
\and
\author{\fnms{Vadim} \snm{Gorin}\ead[label=e2,mark]{vadicgor@gmail.com}}

\address{University of Wisconsin-Madison, \printead{e1,e2}}

\end{aug}


\maketitle

\begin{abstract}
The paper analyses cointegration in vector autoregressive processes (VARs) for the cases when both the number of coordinates, $N$, and the number of time periods, $T$, are large and of the same order.  We propose a way to examine a VAR of order $1$ for the presence of cointegration based on a modification of the Johansen likelihood ratio test. The advantage of our procedure over the original Johansen test and its finite sample corrections is that our test does not suffer from over-rejection. This is achieved through novel asymptotic theorems for eigenvalues of matrices in the test statistic in the regime of proportionally growing $N$ and $T$. Our theoretical findings are supported by Monte Carlo simulations and an empirical illustration. Moreover, we find a surprising connection with multivariate analysis of variance (MANOVA) and explain why it emerges.
\end{abstract}

\begin{keyword}
\kwd{High-dimensional VAR}
\kwd{Cointegration}
\kwd{Johansen test}
\kwd{Jacobi ensemble}
\end{keyword}

\end{frontmatter}

\section{Introduction}

\subsection{Motivation}
The importance of cointegration in economics stems from the seminal papers \citet{granger1981} and \citet{engle_granger1987}. For example, as they show, monthly rates on 1-month and 20-year treasury bonds are cointegrated, which means that they are both non-stationary, but they have a stationary linear combination. A lot of other variables in macroeconomics and finance such as price level, consumption, output, trade flows, interest rates and so on are non-stationary, and, thus, are potentially subject to cointegration. When dealing with non-stationary time series, it is always a question whether one should work with levels or with differences. For multivariate settings such as vector autoregression (VAR), the choice of model would depend on whether the series is cointegrated or not.

There are several ways to test for the presence of cointegration (see, e.g., \citet{maddala} for the detailed description of various methods). One popular approach relies on checking whether the residuals from regressing one of the coordinates on the remaining ones are stationary. It is based on \citet{engle_granger1987} and was later extended in \citet{phillips_ouliaris1990}. Another widely used technique is due to S{\o}ren Johansen \citep{johansen1988, johansen1991}\footnote{A related approach was also proposed in \citet{stock_watson1988}.}. This approach assumes VAR structure and relies on the likelihood ratio. It tests a null hypothesis of at most $\rho$ cointegrating relationships versus an alternative of between $\rho$ and $r>\rho$ cointegrating relationships. The Johansen test turns out to be related to the eigenvalues of some random matrix and has a non-standard asymptotic distribution.

Neither of the approaches is commonly used in the analysis of large systems. However, in many situations the data turns out to have both cross-sectional and time dimensions being large. Natural examples are given, e.g., by financial data (stock prices, exchange rates, etc.) or by monthly data on trade and investments between countries (where the number of pairs of countries is large). The main difficulty with applying the above cointegration testing approaches to high-dimensional settings is that both of them require $N$, the cross-sectional dimension of a time series, to be fixed and small. For the former approach, the larger is $N$, the more regressions we need to run and interpreting their results becomes ambiguous. For the latter approach, it turns out, the asymptotic theory stops providing a good approximation for moderate values of $T$. The test starts to over-reject the null of fewer cointegrations in favor of the alternative (see, for example, \citet{ho_sorensen1996}, \citet{gonzalo_pitarakis}). For instance, the simulations reported in Table 1 of \citet{gonzalo_pitarakis} indicate that the empirical rejection rate based on the 95\% asymptotic critical value for the no-cointegration hypothesis (constructed using the asymptotics of \citet{johansen1988, johansen1991}) is $20\%$ at $N=5$, $T=30$. It eventually decreases to the desired $5\%$ as $T$ grows, but even at $T=400$ the empirical size is still $6\%$.

After size distortions became clear econometricians developed various procedures for correcting over-rejection. Popular methods include finite sample correction (e.g., \citet{Reinsel_Ahn}, \citet{johansen_correction}) and bootstrap (e.g., \citet{Swensen}, \citet{cav_et_all}).  Such modifications help to restore correct size for moderate values of $T$ (e.g., $T=50$), when $N$ is of a smaller order (e.g., $N=5$). Yet, the question of larger $N$ remained open.


A recent ground-breaking paper \citet{onatski_ecta} shows that the over-rejection when testing the null of no cointegration can be mathematically explained by considering an alternative asymptotics in which both $N$ and $T$ go to infinity jointly and proportionally. When $N$ is large such asymptotic regime turns out to better suit the finite sample properties of the data. While \citet{onatski_ecta} point this out, they do not provide alternative testing procedures.

The observation that large VARs might have analyzable joint limits opens a new area of research. That is, one can try to develop various sophisticated joint asymptotics and derive, as a result, appropriate ways to test for cointegration in settings where both $N$ and $T$ are large. This, however, requires new tools rooted in the random matrix theory. In our paper we propose such cointegration tests and introduce asymptotic theorems which make testing possible. Let us describe our main results.

\subsection{Results}

By modifying the Johansen likelihood ratio (LR) test, we come up with a way to examine the presence of cointegration in a time series when the cross-sectional dimension, $N$, and the time dimension, $T$, are of the same order. We consider a vector autoregression (VAR) of order $1$ in the error correction representation
\begin{equation}\label{var_p_intro}
\Delta X_t=\mu+\Pi X_{t-1}+\eps_t,\qquad t=1,\ldots,T,
\end{equation}
where $\Delta X_t:=X_t-X_{t-1}$, errors $\{\eps_t\}$ are Gaussian with $N\times N$ covariance matrix $\Lambda$, and $\Pi$, $\mu$ are unknown parameters.\footnote{Section \ref{section_vark} discusses possible generalizations of our approach to VAR($k$) for general $k$.} We do not restrict $\Lambda$ to be diagonal, which means that any cross-sectional correlation structure is allowed. Such cross-sectional heteroscedasticity assumption may be important in applications, where one expects variables, e.g., countries, to be correlated. In contrast, many previous approaches relied on specific forms of covariance $\Lambda$, see, e.g., \citet{Breitung_Pesarann_2008}, \citet[Section 7]{Bai_Ng_2008} for reviews and \citet{ZhangPanGao_2018} for recent developments.

Our model allows for an arbitrary linear trend in $X_t$, which is a desirable feature when one goes to applications. Yet, it also encompasses as a special case a model without a trend ($\mu=0$). The way we deal with the data turns out to be invariant to $\mu$, and we do not propose any different procedure even if a researcher has a prior knowledge that there is no trend. (We discuss this in more detail in Section \ref{Section_no_trend}.)

We are interested in whether the $N$-dimensional non-stationary process $X_t$ is cointegrated or not. That is, whether there is a non-zero vector $\beta$ such that $\beta'X_t$ is trend stationary.  Our main contributions lie in the construction of the appropriate test, analysis of its asymptotic distribution, and computation of critical values (some of them are reported in Table \ref{airy_quantiles} of Section \ref{Section_asy_res}).

Our procedure for cointegration testing relies on the following steps. First, we de-trend $X_t$ by subtracting $\frac{t}{T}\left(X_T-X_0\right)$ from $X_t$. Then we follow Johansen's procedure and regress both first differences, $\Delta X_t$, and lagged de-trended $X_t$ on a constant. That is, we de-mean the data. Then we calculate the squared sample canonical correlations between the residuals from those regressions. This corresponds to the eigenvalues from the modification of the Johansen LR statistics obtained by using the lagged de-trended version of $X_t$ and not de-trended first differences $\Delta X_t$.

The de-trending procedure can be interpreted in the following way: we take the first differences of the observed variables, de-mean them, and re-sum back. This is equivalent (up to a constant $X_0$ which disappears after we further de-mean the sums) to our de-trending. Such interpretation is related to \citet{ERS} which analyzes unit root testing in the presence of a trend $d_t$. Under the null hypothesis of a unit root and when the trend $d_t$ is linear, \citet{ERS} suggests to de-mean the first differences.  

As discussed in \citet{phillips_detrending}, de-trending plays an important role in improving the performance of cointegration tests. In our procedure there is an additional reason why de-trending matters. It leads to an unexpected connection with the Jacobi ensemble, a familiar object in random matrix theory, but a novel one in econometrics.\footnote{We remark that de-trending is implicitly present in the proofs of \citet{onatski_ecta,onatski_joe}, yet in our work it becomes a significant ingredient, rather than a technical detail.} Eventually, this connection is the main technical tool leading to our asymptotic results and construction of the test.

We show that after proper rescaling our test under the null hypothesis of no cointegration converges to the sum of the first $r$ elements of the Airy$_1$ point process (we formally define that process in Section \ref{subsection_limit}). Airy$_1$ process is a known object in the random matrix theory and its marginal distributions can be computed in various ways. We also present in Section \ref{Section_simulations} a simulation study of the speed of convergence, which supports the limiting results even for moderate values of $N$ and $T$. In Table \ref{rej_rate} of that section we demonstrate the significant improvement over the finite sample behavior of the Johansen test and its corrections reported in \citet{gonzalo_pitarakis}.

Along with the size we also report power simulations in Section \ref{Section_simulations}. The power depends on the choice of the alternative and we perform several experiments with random rank $1$ matrix $\Pi$ and random initial condition $X_0$ of varying magnitudes. In many cases the power is very close to $100\%$. In particular, we found high power even for moderate $N,T$, e.g., $N\approx30,\, T\geq150$.  In general, the results are encouraging and we see quite large power envelopes.

We also present a small empirical illustration of our testing procedure on weekly S$\&$P100 log-prices over $10$ years. This gives us approximately $500$ observations across time and $T/N\approx5$, corresponding to high power and close to $5\%$ sample rejection rate in simulations. Log-prices are known to have a unit root, and we do not find any strong evidence that they are cointegrated.

\subsection{Techniques}

Let us briefly indicate technical aspects of our proofs and their mathematical novelty. The key observation that we make is that a small perturbation of the matrix arising in the modified Johansen test has an explicit distribution of its eigenvalues. This distribution is called Jacobi ensemble, and its usual appearances in statistics include sample canonical correlations for two sets of independent data (as opposed to highly dependent in our case, see the discussion in Section \ref{Section_Jacobi_prelim}) and multivariate analysis of variance (MANOVA), see, e.g., \citet{Muirhead_book}. Our method has two main components which are new, as far as we are aware. First, our perturbation of the model is based on a replacement of a certain permutation in the matrix formulation of the modified Johansen test by a uniformly random orthogonal matrix. The second ingredient is a challenging computation of matrix integrals\footnote{Our arguments have some similarities with proofs in \citet{Hua}.} leading to the identity of the law of the perturbed matrix with the Jacobi ensemble.

Put it otherwise, we discover an  exactly-solvable\footnote{A random model is called exactly-solvable or integrable (see \citet{borodin_gorin_review} for an overview) if there exist explicit formulas for the expectations of non-trivial random variables describing the system. Such formulas provide a basis for asymptotic analysis of the systems of interest. This is in contrast to generic models where explicit formulas are rarely available.} model in a small neighborhood of the (random) matrix of the Johansen cointegration test. Let us center our attention on this exactly-solvable model. It can be used as an initial point for perturbative arguments leading to asymptotic theorems for our and possibly other modifications of the Johansen test.
In addition, our exactly-solvable model is not isolated, but rather it is a representative of a whole class of similar cases. We expect that our approach works in several other situation in which various test statistics in the vector autoregression context can be understood through (other) instances of the Jacobi ensemble. Justifications of this point of view are contained in Sections \ref{Section_no_trend} ,\ref{Section_white}, and \ref{Section_white_proof}.

\subsection{Outline of the paper}
Section \ref{Section_setting} describes the setting and the main objects of interest. Section \ref{Section_asy_res} presents asymptotic results, while Section \ref{section_proofs} gives a sketch of their proofs. Section \ref{Section_simulations} shows supporting Monte Carlo simulations and Section \ref{Section_s&p} applies our test to S\&P$100$. Additional results and extensions are presented in Section \ref{sec_extensions}. Finally, Section \ref{Section_conclusion} concludes. All proofs, unless otherwise noted, are in Appendix.

\section{Setting}
\label{Section_setting}
\subsection{Building block}
We consider an $N$-dimensional vector autoregressive process of order $1$, VAR($1$), based on a sequence of i.i.d.~mean zero Gaussian errors $\{\eps_t\}$ with non-degenerate covariance matrix $\Lambda$. That is,
\begin{equation}\label{var_1}
\Delta X_t=\Pi X_{t-1}+\mu+\eps_t,\qquad t=1,\ldots,T.
\end{equation}
where $\Delta X_t:=X_t-X_{t-1}$ and $\Pi,\,\mu$ are unknown parameters. We do not impose any restrictions on $\Lambda$, thus, we allow for arbitrary correlations across coordinates of $X_t$. The process is initialized at fixed $X_0$. We outline possible extensions to autoregressions of higher order, VAR($k$), in Section \ref{section_vark}.

We are interested in analyzing whether $X_t$ is cointegrated. That is, whether there exists a non-zero $N\times r_0$ matrix $\beta$ such that $\beta'X_t$ is (trend) stationary.

If there exists a $N\times r_0$ matrix $\beta$ of rank $r_0$ such that $\beta'X_t$ is (trend) stationary, but there does not exist a $N\times(r_0+1)$ matrix $\tilde{\beta}$ of rank $r_0+1$ such that $\tilde{\beta}'X_t$ is (trend) stationary, then we say that $X_t$ is cointegrated of order $r_0$. As shown in \citet[Granger representation theorem]{engle_granger1987}, \citet[Theorem 4.5]{johansen_book} the necessary condition for $X_t$ to be cointegrated of order $r_0$ is ${\rm rank}(\Pi)=r_0$ and, thus, there exist two $N\times r_0$ matrices $\alpha$ and $\beta$ of rank $r_0$ such that $\Pi=\alpha\beta'$. For the sufficiency one also needs to require an extra non-degeneracy condition involving $\alpha$, $\beta$, and $\Gamma_i$. This additional requirement is used to rule out the $I(2)$ processes, i.e., such that their second differences are stationary, while the first differences are not.

\subsection{Cointegration test}\label{subsection_test}
We test the null hypothesis $H_0$ of no cointegration, which is equivalent to ${\rm rank}(\Pi)=0$ or $\Pi\equiv 0$. The complement to the null is  ${\rm rank}(\Pi)>0$. Yet, in order to design out test we set ${\rm rank}(\Pi)\in[1,r]$, where $r$ is a fixed finite number, to be our alternative hypothesis $H_1$. Thus, our alternative, in line with Granger representation theorem, can be interpreted as having at most $r$ cointegrating relationships. While the test has different power depending on $r$ and the true data-generating process, when working with real data for which the true value of ${\rm rank}(\Pi)\equiv r_0$ is unknown one can use any $r$ in order to reject $H_0$.\footnote{In practice we recommend using reasonably small values of $r$, since our asymptotic theorems are valid in the regime of fixed $r$ and large $N,T$.} Formally, we test
\begin{equation}
\label{hypothesis}
H_0:\:\Pi\equiv0\hskip0.5cm \text{vs.}\hskip0.5cm H_1:\:{\rm rank}(\Pi)\in[1, r],
\end{equation}

Our testing procedure is a modification of the Johansen likelihood ratio (LR) test \citep{johansen1988, johansen1991}.\footnote{Johansen LR test is based on the maximization of the Gaussian likelihood.} We focus on the small $r$ regime, which differs from the classical Johansen LR test, where $r=N$ is more commonly used as an alternative. In that sense, our approach is close to the maximum eigenvalue ($\lambda_{\max}$) test. That turns out to be important in the large $N,T$ setting, see Section \ref{Section_discussion}. Our approach consists of several steps:

\textbf{Step 1.} De-trend\footnote{We discuss the role of de-trending in Section \ref{Section_no_trend}.} the data and define
\begin{equation}
\label{eq_detrending}
 \tilde X_t = X_{t-1} - \frac{t-1}{T} (X_T-X_0).
\end{equation}
Note that we also do a time shift. This shift is in line with Johansen test, where lags and first differences are regressed on the observables.

\textbf{Step 2.} De-mean the data. That is, regress de-trended lags, $\tilde X_t$, and first differences $\Delta X_t$ on a constant. Define the residuals from those regressions as
\begin{equation}\begin{split}\label{residuals}
 R_{0t}=\Delta X_{t}-\frac{1}{T}\sum_{\tau=1}^T \Delta X_{\tau}, \quad
 R_{1t}=\tilde X_{t}-\frac{1}{T}\sum_{\tau=1}^T \tilde X_{\tau}.
\end{split}\end{equation}

\textbf{Step 3.} Calculate the squared sample canonical correlations between $R_0$ and $R_1$. That is, define
\begin{equation}\begin{split}\label{S_matrices}
S_{ij}=\sum\limits_{t=1}^{T} R_{it} R^{\ast}_{jt},\quad i,j=0,1,
\end{split}\end{equation}
where here and below the notation $X^{\ast}$ means transpose of the matrix $X$ (transpose-conjugate whenever complex matrices appear). Then calculate $N$ eigenvalues $\lambda_1\geq\ldots\geq\lambda_N$ of the matrix $S_{10}S^{-1}_{00}S_{01}S^{-1}_{11}$. The eigenvalues solve the equation
\begin{equation}
\label{eq_Johansen_equation}
 \det(S_{10} S_{00}^{-1} S_{01}-\lambda S_{11})=0.
\end{equation}

\textbf{Step 4.} Form the test statistic
\begin{equation}\label{LR_NT}
LR_{N,T}(r)=\sum\limits_{i=1}^{r}\ln(1-\lambda_i).
\end{equation}
The subscript $N,T$ in \eqref{LR_NT} indicates that we modify Johansen LR test to develop the large $N,T$ asymptotics. This statistic after centering and rescaling  will be compared with appropriate critical values to decide whether one can reject $H_0$ or not (see Theorem \ref{theorem_J_stat}).

\medskip

Throughout the proofs and extensions, we will be using an alternative way to write the residuals and matrices $S_{ij}$. For this let us define the \emph{demeaning} operator $\mathcal P$. It is a linear operator in a $T$--dimensional space defined by its matrix
 \begin{equation} \label{eq_demean}
  \mathcal P=\begin{pmatrix} 1-\frac1T & -\frac1T &\dots & -\frac1T\\ -\frac1T & 1-\frac1T& \dots &-\frac1T\\&&\ddots\\ -\frac1T& -\frac1T &\dots & 1-\frac1T\end{pmatrix}.
 \end{equation}
 $\mathcal P$ is an orthogonal projector on the space orthogonal to the vector $(1,1,\dots,1)$.
By definition
$$
 R_{0it}=[\Delta X\mathcal P]_{it}=\Delta X_{it}-\frac{1}{T}\sum_{\tau=1}^T \Delta X_{i\tau},\quad
 R_{1it}=[\tilde X\mathcal P]_{it}=\tilde X_{it}-\frac{1}{T}\sum_{\tau=1}^T \tilde X_{i\tau}.
$$
Using the fact that $\mathcal P^2=\mathcal P$,
\begin{equation}
\label{eq_modified_Joh_matrices}
 S_{00}=\Delta X \mathcal P \Delta X^{*},\quad S_{01}=\Delta X \mathcal P \tilde X^*, \quad S_{10}=S_{01}^*=\tilde X \mathcal P \Delta X^*,\quad S_{11}=\tilde X \mathcal P \tilde X^*.
\end{equation}

Let us emphasize that our test differs from the original Johansen test in the fact that we use $\tilde X_t$ instead of $X_{t-1}$. Note that $\tilde X_t$ can be viewed as a rank $1$ perturbation of $X_{t-1}$. Hence, our test statistic is a finite rank perturbation of the original Johansen procedure.

\section{Asymptotic results}\label{Section_asy_res}

In this section we formulate our main asymptotic results, explain and discuss the role of the precise setting we chose, and indicate directions for generalisations. We provide a sketch of the proof of our asymptotic theorem in Section \ref{section_proofs}.

\subsection{Large \boldmath{$(N,T)$} limit of the test}\label{subsection_limit}

Our results use the Airy$_1$ point process. Thus, let us introduce it before we formulate the main theorems. The Airy$_1$ point process is a random infinite sequence of reals
$$
\aa_1>\aa_2>\aa_3>\dots
$$
which can be defined through the following proposition.
\begin{proposition}[\citet{Forrest_spectr},\citet{Tracy_Widom}] \label{Proposition_Airy_Gauss} Let $Y_N$ be $N\times N$ matrix of i.i.d.~$\mathcal{N}(0,2)$ Gaussian random variables and let $\mu_{1;N}\ge \mu_{2;N}\ge \dots \ge\mu_{N;N}$ be eigenvalues of $\frac{1}{2}\left(Y_N+Y_N^*\right)$. Then in the sense of convergence of finite-dimensional distributions
	\begin{equation}
	\label{eq_GOE_to_Airy}
	\lim_{N\to\infty} \left\{N^{1/6}\left(\mu_{i;N}-2\sqrt{N}\right) \right\}_{i=1}^N = \{ \aa_i\}_{i=1}^\infty.
	\end{equation}
\end{proposition}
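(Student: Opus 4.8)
The plan is to exploit the exactly-solvable structure of the Gaussian Orthogonal Ensemble. With $Y_N$ as in the statement, the matrix $\frac12(Y_N+Y_N^*)$ is a standard GOE matrix: its diagonal entries are $\mathcal N(0,2)$, its off-diagonal entries $\mathcal N(0,1)$, and its law has density proportional to $\exp(-\tfrac14\operatorname{Tr}H^2)$ on real symmetric matrices. Consequently the unordered eigenvalues $\mu_{1;N},\dots,\mu_{N;N}$ have joint density proportional to $\prod_{i<j}|\mu_i-\mu_j|\,\prod_i e^{-\mu_i^2/4}$, i.e.\ they form the $\beta=1$ log-gas. From the classical Wigner semicircle law the empirical spectral distribution of $N^{-1/2}\mu_{\cdot;N}$ converges to the semicircle on $[-2,2]$, so the natural edge window is $\mu=2\sqrt N+\xi N^{-1/6}$, which is exactly the centering and scaling in \eqref{eq_GOE_to_Airy}.

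Next I would invoke the Pfaffian description of $\beta=1$ ensembles. The $k$-point correlation functions $\rho^{(N)}_k$ are Pfaffians of a $2k\times2k$ antisymmetric matrix built from a $2\times2$ matrix-valued correlation kernel $\mathbf K_N(x,y)$, whose entries are expressed through the skew-orthogonal polynomials of the weight $e^{-x^2/4}$; these are classical and reduce to Hermite polynomials together with their derivatives and certain one-sided integrals (Mehta; Tracy--Widom, \emph{On orthogonal and symplectic matrix ensembles}; Adler--Forrester--Nagao--van Moerbeke). The analytic heart of the proof is then the edge asymptotics: substitute $x=2\sqrt N+\xi N^{-1/6}$, $y=2\sqrt N+\eta N^{-1/6}$, apply the Plancherel--Rotach asymptotics for Hermite polynomials (and the corresponding estimates for their derivatives and integrals) uniformly for $\xi,\eta$ in a compact set, and, after conjugating $\mathbf K_N$ by the appropriate diagonal gauge and rescaling, show it converges to the \emph{Airy$_1$ matrix kernel} $\mathbf K_1(\xi,\eta)$ whose entries are the standard Airy-function expressions. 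This gives locally uniform convergence of the rescaled correlation functions $\rho^{(N)}_k(2\sqrt N+\cdot\,N^{-1/6})$ to $\rho^{\mathrm{Airy}_1}_k$ for every fixed $k$.

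Finally I would upgrade convergence of correlation functions to convergence of the ordered edge eigenvalues. For fixed $r$ and Borel sets $B_1,\dots,B_r$, the event $\{N^{1/6}(\mu_{j;N}-2\sqrt N)\in B_j,\ j\le r\}$ is expressed, via inclusion--exclusion, as an alternating series in the $\rho^{(N)}_k$ integrated over the relevant regions; one passes to the limit term by term, the leading contribution being the Pfaffian hole probability $\mathbb P(\text{no Airy}_1\text{ points above }s)$, which is the Tracy--Widom GOE law. To justify exchanging the limit with the infinite sum one needs Hilbert--Schmidt / trace-norm bounds on $\mathbf K_N$ restricted to half-lines $[s,\infty)$ that are uniform in $N$, together with a crude right-tail estimate $\mathbb P\big(\mu_{1;N}>2\sqrt N+sN^{-1/6}\big)\le Ce^{-cs}$ uniform in $N$ to control the tails. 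This produces convergence of all the joint gap/counting probabilities, hence convergence of finite-dimensional distributions, which is precisely \eqref{eq_GOE_to_Airy}; the limiting sequence is by definition the Airy$_1$ point process.

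The main obstacle is this last transfer step combined with the $2\times2$ matrix-kernel bookkeeping: in contrast to the determinantal ($\beta=2$) case, the Pfaffian kernel has several components with different rates of decay and a non-trivial gauge factor, so both the uniform Plancherel--Rotach estimates for all entries and the trace-norm control needed to convert correlation-function convergence into convergence of the ordered points are delicate. An alternative route is to bypass Hermite asymptotics entirely, using the tridiagonal representation of the $\beta=1$ ensemble and the stochastic Airy operator of Ram\'irez--Rider--Vir\'ag, whose spectrum is the Airy$_1$ process; the estimates needed to reach finite-dimensional distributions there are of comparable difficulty.
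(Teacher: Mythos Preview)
The paper does not prove this proposition at all: it is stated with attribution to \citet{Forrest_spectr} and \citet{Tracy_Widom} and used purely as a \emph{definition} of the Airy$_1$ point process (see the sentence immediately preceding the proposition and the footnote following it). There is therefore no paper proof to compare your proposal against.

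That said, your outline is a faithful sketch of the classical argument in the cited references: the GOE eigenvalue density, the Pfaffian structure with a $2\times 2$ matrix kernel built from Hermite skew-orthogonal polynomials, Plancherel--Rotach edge asymptotics yielding the Airy$_1$ kernel, and an inclusion--exclusion/tail-control passage from correlation-function convergence to convergence of the ordered top eigenvalues. Your identification of the main technical obstacle --- uniform control of the several kernel entries and the trace-norm bounds needed for the Pfaffian analogue of Fredholm convergence --- is accurate, and your mention of the Ram\'irez--Rider--Vir\'ag stochastic Airy operator as an alternative is also appropriate (the paper itself lists this as one of the equivalent definitions of Airy$_1$ in its footnote). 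Nothing in your sketch is wrong; it is simply a proof of a result the paper takes as input rather than establishes.
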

The law of the first coordinate $\aa_1$ is known as the Tracy-Widom $F_1$ distribution; its distribution function can be written in terms of a solution of the Painleve $II$ differential equation.

We remark that from the computational point of view \eqref{eq_GOE_to_Airy} gives an efficient way to access the distribution of various functions of $\{\aa_i\}_{i=1}^{\infty}$.\footnote{An even faster way uses tridiagonal matrix models \citep{dumitriu_edelman}.} From the theoretical perspective, one would like to have a more structural definition, which can be used for the analysis. Such definitions exist, yet, unfortunately, none of them is particularly simple.\footnote{
There are several equivalent ways to define Airy$_1$ point process: through Pfaffian formulas for the correlation functions \citep{Forrest_spectr, Tracy_Widom}, through combinatorial formulas for the Laplace transform \citep{Sodin}, through eigenvalues of the Stochastic Airy Operator \citep{RRV}.}

\begin{theorem} \label{theorem_J_stat} Suppose that $T,N\to\infty$ in such a way that the ratio $T/N$ belongs to $[2+\gamma_1, \gamma_2]$ for some $\gamma_1,\gamma_2>0$. Suppose that $H_0$ holds, that is, we have \eqref{var_1} with $\Pi=0$. Then for each finite $r=1,2,\dots$, we have convergence in distribution for the largest eigenvalues defined in Eq.~\eqref{eq_Johansen_equation}:
	\begin{equation}
	\label{eq_statistic_limit}
	 \frac{\sum_{i=1}^{r} \ln(1-\lambda_i)- r \cdot c_1(N,T)}{ N^{-2/3}  c_2(N,T)}  \, \xrightarrow[T,N\to\infty]{d} \sum_{i=1}^r \aa_i,
	\end{equation}
	where
	\begin{equation}
	c_1\left(N,T\right)=\ln\left(1-\lambda_+\right), \qquad
	c_2\left(N,T\right)=-\frac{2^{2/3} \lambda_+^{2/3}}{(1-\lambda_+)^{1/3} (\lambda_+-\lambda_-)^{1/3}} \left(\p+\q\right)^{-2/3}  <0,
	\end{equation}
	\begin{equation}\label{pq_def}
	\p=2-\frac{2}{N}, \qquad \q=\frac{T}{N}-1-\frac{2}{N},\qquad \lambda_\pm=\frac{1}{(\p+\q)^2}\left[\sqrt{\p(\p+\q-1)}\pm \sqrt{\q}  \right]^2.
	\end{equation}
\end{theorem}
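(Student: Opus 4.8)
The plan is to locate, under $H_0$, an exactly-solvable matrix ensemble sitting next to the random matrix that defines the $\lambda_i$, to recognize that ensemble as a classical Jacobi (MANOVA) ensemble, to import the known Airy$_1$ edge asymptotics, and finally to pass to the log-statistic by a delta-method linearization.

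\emph{Step 1: reductions.} I would first observe that Steps~1--4 of Section~\ref{subsection_test} are invariant under $(\mu,X_0)$ and under $\Lambda$. Under $H_0$ one has $\Delta X_t=\mu+\eps_t$, so de-meaning removes $\mu$ from $R_{0t}$, while $\tilde X_t=\sum_{s\le t-1}\eps_s-\tfrac{t-1}{T}\sum_{s\le T}\eps_s+X_0$ and the constant $X_0$ disappears after the de-meaning in $R_{1t}$; hence $R_0,R_1$, the matrices $S_{ij}$, and the $\lambda_i$ depend only on $\{\eps_t\}$, and we may take $\mu=0$, $X_0=0$. The roots of \eqref{eq_Johansen_equation} are moreover unchanged under $\eps_t\mapsto A\eps_t$ for any invertible $A$ (this conjugates $S_{10}S_{00}^{-1}S_{01}S_{11}^{-1}$ by $A$), so with $A=\Lambda^{-1/2}$ we may also assume $\Lambda=\1_N$. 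Writing $E$ for the $N\times T$ matrix of i.i.d.\ standard Gaussians with columns $\eps_t$, one gets $\Delta X=E$ and $\tilde X=EC$ for an explicit deterministic $T\times T$ matrix $C$ (``partial sum minus linear trend''); by \eqref{eq_modified_Joh_matrices} every $S_{ij}$ is then a sandwich $E(\cdot)E^{*}$, and the $\lambda_i$ are exactly the squared cosines of the principal angles in $\mathbb R^{T}$ between the two random $N$-planes $\mathcal P\cdot\mathrm{rowsp}(E)$ and $\mathcal PC^{*}\cdot\mathrm{rowsp}(E)$, where $\mathrm{rowsp}(E)$ is Haar-distributed by rotational invariance of $E$.

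\emph{Step 2: the exactly-solvable Jacobi model (the crux).} Since both $N$-planes are produced from the same $E$, the $\lambda_i$ are not the canonical correlations of two independent samples, so the classical MANOVA law does not apply directly. The key step is to perturb the model: in the matrix formulation one isolates a permutation matrix coming from the lag/partial-sum structure and replaces it by a Haar-distributed orthogonal matrix $U$ independent of $E$. For the perturbed model I would then compute the joint eigenvalue density explicitly via an integral over the orthogonal group and identify it with the real ($\beta=1$) Jacobi ensemble on $[0,1]^{N}$ with parameters $p=\p N=2N-2$ and $q=\q N=T-N-2$, $\p,\q$ as in \eqref{pq_def}. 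This matrix-integral identification --- choosing the correct perturbation and evaluating the resulting orthogonal-group integral --- is the genuinely new ingredient and the step I expect to be the main obstacle. One then transfers back: the replacement perturbs the test matrix only mildly, so controlled perturbation bounds (Cauchy/Weyl interlacing) together with rigidity of the Jacobi spectral edge show that the top $r$ eigenvalues of the true and perturbed models have the same limiting joint distribution on the $N^{-2/3}$ scale.

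\emph{Step 3: Jacobi edge asymptotics and the delta method.} For the Jacobi$(N;\p N,\q N)$ ensemble the limiting spectral distribution is the Wachter law supported on $[\lambda_-,\lambda_+]$ with $\lambda_\pm$ exactly as in \eqref{pq_def}. The hypothesis $T/N\in[2+\gamma_1,\gamma_2]$ keeps $\p$ and $\q$ bounded with, eventually, $\p>1$ and $\q>1$, so $\lambda_+$ is a genuine \emph{soft} edge; by edge universality for real Jacobi ensembles (via the Pfaffian correlation kernel, or via tridiagonal models and comparison with GOE in the spirit of Proposition~\ref{Proposition_Airy_Gauss}) one gets, jointly for $i=1,\dots,r$,
\[
 \lambda_i=\lambda_+ + N^{-2/3}\,\tilde\sigma_{N,T}\,\aa_i+o_P\!\left(N^{-2/3}\right),
\]
where $\tilde\sigma_{N,T}=2^{2/3}\lambda_+^{2/3}(1-\lambda_+)^{2/3}(\lambda_+-\lambda_-)^{-1/3}(\p+\q)^{-2/3}>0$ is the standard Jacobi soft-edge scaling and $\{\aa_i\}$ is the Airy$_1$ process. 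Linearizing $\lambda\mapsto\ln(1-\lambda)$ at $\lambda_+$, with derivative $-1/(1-\lambda_+)$, gives
\[
 \sum_{i=1}^{r}\ln(1-\lambda_i)=r\ln(1-\lambda_+)-\frac{\tilde\sigma_{N,T}}{1-\lambda_+}\,N^{-2/3}\sum_{i=1}^{r}\aa_i+o_P\!\left(N^{-2/3}\right),
\]
which is exactly \eqref{eq_statistic_limit} with $c_1=\ln(1-\lambda_+)$ and $c_2=-\tilde\sigma_{N,T}/(1-\lambda_+)<0$; since $T/N$ is bounded, $c_1$ and $c_2$ stay away from $0$ and $\infty$ and the linearization remainder is uniformly negligible.
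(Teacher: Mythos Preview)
Your proposal follows essentially the same route as the paper: reduce to $\mu=0$, $X_0=0$, $\Lambda=\1_N$ by invariance; recognize the $\lambda_i$ as principal angles between two $N$-planes obtained from the same Gaussian data via a deterministic operator; perturb that operator to land in the Jacobi ensemble; import the known Airy$_1$ soft-edge asymptotics; and finish with the delta method at $\lambda_+$. You even correctly flag the matrix-integral identification as the crux, and your edge scaling $\tilde\sigma_{N,T}$ and the resulting $c_2$ agree with the paper.

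Two clarifications worth making. First, the ``rigidity'' that controls the perturbation is \emph{not} rigidity of the Jacobi spectrum; it is rigidity of the eigenvalues of a Haar orthogonal matrix on the unit circle (they are, with high probability, within $T^{\epsilon-1}$ of equally spaced roots of unity). The paper replaces the cyclic shift $F_V$---whose eigenvalues are exactly those roots of unity---by a Haar element $O$, and it is the closeness of the two spectra on the circle (via \citet{MeckesMeckes}) that drives the $N^{\epsilon-1}$ eigenvalue bound; interlacing is not used. Second, your Jacobi parameters $p=2N-2$, $q=T-N-2$ do not match the paper's convention $\J(N;\tfrac{N}{2},\tfrac{T-2N}{2})$ from Definition~\ref{Definition_Jacobi}; since you ultimately tie everything to the same $\p,\q,\lambda_\pm$ of \eqref{pq_def} this is only a bookkeeping slip, but be careful to use a consistent parametrization when invoking the Jacobi edge theorem.
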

\begin{remark}
 The condition $T/N>2$ is important: The procedure for constructing $\lambda_i$ involves computing squared sample canonical correlations between two $N$--dimensional subspaces in $T$--dimensional space. If $T/N<2$, then two subspaces necessary intersect, leading to $\lambda_1=1$. Hence, Eq.~\eqref{eq_statistic_limit} would need an adjustment in such case.
\end{remark}

Figure \ref{airy_density} shows densities for the random variables $\sum_{i=1}^r \aa_i$ for $r=1,2,3$.  As $r$ grows, the skewness of $\sum_{i=1}^r \aa_i$ decreases, the expectations of $\sum_{i=1}^r \aa_i$ tend to $-\infty$ and the variances tend to $+\infty$.

\begin{figure}[t]
	\centering
	{\scalebox{0.75}{\includegraphics{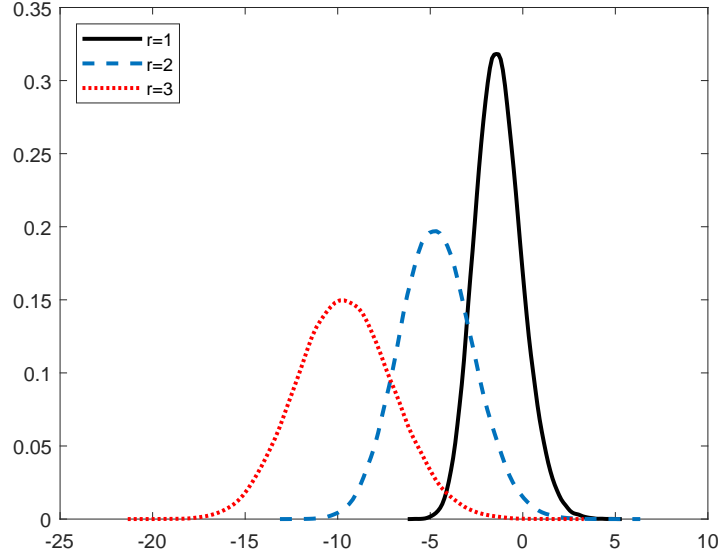}}}
	\caption{The probability density for the random variables $\sum\limits_{i=1}^r\aa_i$. The histogram plots are obtained from 100,000 samples of $200\times 200$ Gaussian matrices following Proposition \ref{Proposition_Airy_Gauss}. }
	\label{airy_density}
\end{figure}

Quantiles of the distribution of $\sum\limits_{i=1}^r \aa_i$ serve as critical values for testing the hypothesis $H_0$ of no cointegrations against the alternative of at most $r$ cointegrations. We report the quantiles for $r=1,2,3$ in Table \ref{airy_quantiles}.

\begin{table}[h]
	\begin{tabular}{c|c|c|c|c}
		\hline
		\diagbox[width=1.5cm, height=0.65cm]{$r$}{$\alpha$} & $0.9$ & $0.95$ & $0.975$ & $0.99$\\
		\hline
		\hline
		1 & 0.44  & 0.97  & 1.45  & 2.01 \\
		\hline
        2 & -1.88 &	-1.09 & -0.40& 	0.41\\
        \hline
        3 & -5.91 &	-4.91 & -4.03 & -2.99\\
        \hline
        \hline
        \multicolumn{5}{c}{}
	\end{tabular}
	\caption{Quantiles of $\sum\limits_{i=1}^r \aa_i$\, for $r=1,2,3$ (based on $10^6$ Monte Carlo simulations of $10^8\times 10^8$  tridiagonal matrices of \citet{dumitriu_edelman}).}	
  \label{airy_quantiles}
\end{table}

Note the shifts by $\frac{2}{N}$ in the definition \eqref{pq_def} of $\p$ and $\q$. Such finite sample correction is inspired by \citet[Discussion before Theorem 1]{Johnstone_Jacobi}. Theorem \ref{theorem_J_stat} remains valid without these shifts, i.e., for the simplified $\p=2$,  $\q=\frac{T}{N}-1$. However, our simulations show that for $T/N<6$ the shifts improve the speed of convergence and, thus, the finite sample behavior of the test, see Sections \ref{Section_simulations} and \ref{Section_Jacobi} for details.

\smallskip

We sketch the proof of Theorem \ref{theorem_J_stat} in Section \ref{section_proofs} and give full details in Appendix. The two main ideas that we use are:
\begin{enumerate}
 \item[(I)] We show that a small (vanishing as $N,T\to\infty$) perturbation of eigenvalues $\lambda_1\ge \lambda_2\ge\dots\ge\lambda_N$ leads to an explicit probability distribution known as the Jacobi ensemble (see Definition \ref{Definition_Jacobi} below). While this distribution appears in many problems of random matrix theory and multivariate statistics, its connection to the law of the Johansen test statistic remained previously unknown.
 \item[(II)]  Further, we rely on the universality phenomenon from random matrix theory, which says
that the particular choice ($\frac{1}{2}\left(Y_N+Y_N^*\right)$) for the law of random matrix made in Proposition \ref{Proposition_Airy_Gauss} is not of central importance.
Instead, the Airy$_1$ point process is a universal scaling limit for the largest eigenvalues in various ensembles of symmetric random matrices of growing sizes, see, e.g., \citet{ErdosYau} and \citet{Tao_Vu}. In particular,
an asymptotic statement similar to Proposition \ref{Proposition_Airy_Gauss} is known for the Jacobi ensemble (see Section \ref{Section_Jacobi}) and, by combining it with the first idea, we eventually arrive at Theorem \ref{theorem_J_stat}.
\end{enumerate}

\smallskip

It is natural to ask about an extension of Theorem \ref{theorem_J_stat} for the case of $r$ growing together with $N$. We distinguish two subcases here:

\begin{itemize}
	\item Slow growth: $r=\lfloor N^{\theta}\rfloor$ for some $0<\theta<1$.
	\item Linear growth: $r=\lfloor \rho N \rfloor $ for some $0<\rho\le 1$.
\end{itemize}

In both cases we expect that (after proper adjustment of $c_1$ and $c_2$) the limit in \eqref{eq_statistic_limit} becomes Gaussian. Although we are not going to pursue this direction here, for the slow growth case the proof of asymptotic normality
can probably  be obtained by the same methods as we use in Theorem \ref{theorem_J_stat}: we can again use the Jacobi ensemble for the computation. For the Jacobi ensemble individual eigenvalues $\lambda_i$ (in the regime of growing $i$ and $N-i$) become asymptotically Gaussian, hence, also their sums.\footnote{\citet{ORourke} proves asymptotic Gaussianity for eigenvalues of Gaussian random matrices and we expect the same proof to work for the Jacobi ensemble case, see also \cite{BPZ} for more details in the complex setting.}

For the linear growth case the situation is more complicated. Our present tools only allow us to prove an asymptotic upper-bound of the form
\begin{equation}\label{lin_growth_gauss}
\sum_{i=1}^{\lfloor \rho N\rfloor} \ln(1-\lambda_i)- N  \cdot c_3(T/N, \rho)=o(N^{\epsilon}), \quad N\to\infty,
\end{equation}
for any $\epsilon>0$ (for $\epsilon=1$ this also follows from the results of \citet{onatski_ecta, onatski_joe}),\footnote{The value of the constant $c_3$ can be computed through certain integrals involving the equilibrium measure of the Jacobi ensemble, see \eqref{eq_Jacobi_equilibrium} and \eqref{eq_Jacobi_LLN} for more details.} while we expect the expression to be $O(1)$. Yet, there exist very general statements on asymptotic Gaussianity of linear statistics of functions of random matrices (see, e.g., \citet{Guionnet_Novak}, \citet{Mingo_Popa}), and, therefore, there is little doubt in the fact that asymptotic distribution is Gaussian. Note, however, that these methods usually give very limited information about asymptotic variance and it is unclear at this moment how to find a reasonably simple explicit formula for it.

\subsection{Theorem \ref{theorem_J_stat} and the role of finite $\mathbf{\emph{r}}$: Discussion} \label{Section_discussion}

An important feature of our setting and our asymptotic results is that $r$ is kept finite as $N,T\to\infty$. (As far as the authors know, Eq.~\eqref{eq_statistic_limit} in Theorem \ref{theorem_J_stat} is the very first statistic for which the computation of the distributional limit became possible in the $N,T\to\infty$, $T/N\in [2+\gamma_1,\gamma_2]$ regime.) In practice, when the dimensions are given to us in the data, this means that $r$ should be much smaller than $N$ and $T$. Notice that for the purpose of rejecting $H_0$ any statistic with known asymptotic distribution can be useful. It could be tempting to instead choose $r=N$ and test $H_0$ of no co-integrations against the alternative of arbitrary number of cointegrations, yet, we believe that in many situations small $r$ allows to infer much more and is a good choice by the following reasons:

\subsubsection{What can be estimated and when?} \label{subsection_finite_r}
The ultimate goal of the cointegration analysis is to find the cointegrating relationships. Turns out, the feasibility of that task crucially depends on their true number. 
Imagine for a second that the true number of cointegrating relationships is large, say $N/2$, then they span an $N/2$--dimensional subspace of $\mathbb{R}^N$. Thus, the number of unknown parameters which need to be estimated is $N^2$ up to a multiplicative constant. On the other hand, the total number of observations that we have is $NT$. Hence, in our asymptotic regime $N,T\to\infty$, $T/N\in[2+\gamma_1,\gamma_2]$, we have only finitely many observations per estimated parameter and consistent estimation is unlikely. The conclusion from this dimension-based heuristics is that in the situation when the number of cointegration relationships is large, our data does not contain enough information for the consistent estimation of all cointegrating relationships and, in a sense, the problem is hopeless. On the other hand, the same heuristics shows that we are in a much better scenario, if we have some a priory reasons to expect that either the number of cointegrating relationships is small, or if this number is in a small neighbourhood of $N$ (in the latter case instead of the space of cointegrating relationships we can estimate its low-dimensional orthogonal complement). The former situation precisely corresponds to our choice of small $r$ in Theorem \ref{theorem_J_stat}; the first step towards the asymptotic analysis in the latter situation is discussed in Section \ref{Section_white}.

\subsubsection{Estimation of rank and cointegration relationships} In the small $N$ cointegration analysis described in detail in \citet{johansen1988, johansen1991, johansen_book}, rejection of the hypothesis of no cointegration is the first step. The next step is the estimation of the true rank $r_0$ of $\Pi$ in \eqref{var_1} by subsequent rejections of the hypotheses of the rank being smaller than $2,3,\dots, r_0$. Finally one wants to consistently estimate $r_0$ cointegrating relationships, closely related to $\Pi$ itself.

Switching to our limit regime $N,T\to\infty$, $T/N\in [2+\gamma_1,\gamma_2]$, there are two separate cases. If $r_0$ also grows linearly in $N$, then the full success of the above program is unlikely: as outlined in Section \ref{subsection_finite_r}, we are trying to consistently estimate too many parameters from too few data points.

On the other hand, if true $r_0$ is small, then the situation is different (another potentially feasible case is that of small $N-r_0$, which we do not discuss here). We believe that our present results --- Theorem \ref{theorem_J_stat} and the new approach underlying it --- open a path to the full realization of the cointegration analysis for growing $N$. Let us outline the reasons.

From the technical point of view a key ingredient which needs to be developed is the asymptotics of squared sample canonical correlations $\lambda_1,\dots,\lambda_N$ and corresponding eigenvectors for the model \eqref{var_1} with $\Pi$ of a fixed and not growing rank $r_0$. In particular Theorem \ref{theorem_J_stat} should be extended from $H_0:\:r_0=0$ to arbitrary finite $r_0$. A natural approach here is by further developing the link to random matrices established in our proof of Theorem \ref{theorem_J_stat} via the Jacobi ensemble. This leads to the theory of spiked random matrices. A basic question of this theory is to infer the information on a large-dimensional small rank matrix $B$ from observing $A+B$, where $A$ is a matrix of pure noise (e.g., one can use $A=\frac{1}{2}\left(Y_N+Y_N^*\right)$ in the notations of Proposition \ref{Proposition_Airy_Gauss}). This is a well-developed theory, see e.g., \citet{johnstone_spiked}, \citet{BBP} for seminal contributions and \citet{spiked_china}, \citet{johnstone_onatski2020} for the most recent progress. In order to obtain a connection with the VAR-framework \eqref{var_1}, we treat the $rank(\Pi)=r_0$ case as a finite rank deformation of the ``pure noise case'' $\Pi=0$, which we consider in this text. Hence, by combining our present results with the spiked random matrix theory, we expect to establish the complete asymptotic theory for any finite as $N\to\infty$ value of $r_0$. (This will require further technical and conceptual efforts and, therefore, is left for future work.)

\subsubsection{Power} Suppose that the true number of cointegrating relationships is small, say, $r_0=1$,  yet we are trying to reject the null of no cointegrations using the LR-statistic with $r=N$, i.e.,
\begin{equation}
\label{eq_full_statistic}
 \sum_{i=1}^N \ln(1-\lambda_i)
\end{equation}
in the notations of Theorem \ref{theorem_J_stat}. As we explain at the end of Section \ref{subsection_limit}, the standard deviation of \eqref{eq_full_statistic} is $O(1)$, i.e., after subtracting a constant as in Eq.~\eqref{lin_growth_gauss}, statistic \eqref{eq_full_statistic} stays finite as $N,T\to\infty$. Hence, in contrast to \eqref{eq_statistic_limit} there should be no $N$--dependent rescaling in a test based on \eqref{eq_full_statistic}. This should lead to very different powers of tests based on statistics \eqref{eq_statistic_limit} and \eqref{eq_full_statistic}.

Indeed, we expect the main difference in the joint law of $(\lambda_1,\dots,\lambda_N)$ under the $H_0$ and under the alternative of one cointegrating relationship to be in the behavior of $\lambda_1$ (which is in line with $\ln(1-\lambda_1)$ being the likelihood ratio test statistic in such setting). Change of $\lambda_1$ by $\nu\in\mathbb{R}$, i.e., $\lambda_1\to \lambda_1+\nu$, is on the same scale as the standard deviation of \eqref{eq_full_statistic}, and $\nu$ needs to be large in order for the test statistic to detect this change and to reject $H_0$. On the other hand, \eqref{eq_statistic_limit} is changing in the same situation by a number of order $N^{2/3} \nu$, which is much larger, and, hence, rejection of $H_0$ is more likely. Therefore, one could anticipate that \eqref{eq_statistic_limit} leads to a test of higher power than \eqref{eq_full_statistic} in this situation.\footnote{We do not analyze \eqref{eq_full_statistic} in this text and, thus, we do not provide any rigorous justification of the above. We remark that the existing in the literature similar comparisons, e.g., \citet{LST2001}, \citet{paruolo2001}, do not apply in our situation, since they only deal with the small $N$ case.}

\subsubsection{Checking model validity} The fact that we assume $r$ is finite and only use first $r$ largest eigenvalues allows us to use the remaining ones to verify the plausibility of our model specification. That is, we can check whether the statistical properties of $\lambda_{r+1},\dots,\lambda_N$ agree with the predictions of our model: for any finite $r$ we expect the empirical distribution function $\frac{1}{N-r} \sum_{i=r+1}^N \mathbf 1_{\lambda_i\le x}$ to converge to the CDF of the Wachter distribution (under $H_0$ this can be deduced from our Theorems \ref{Theorem_main_approximation} and \ref{Theorem_Jacobi_as}, while more general result under $H_1$ with finite $r$ follows from \cite[Theorem 1]{onatski_ecta}). Our empirical example in Section \ref{Section_s&p}, indeed, shows the match of the S$\&$P$100$ data with the Wachter distribution (see Figure \ref{Wachter_data}). 


\section{Outline of the proofs}\label{section_proofs}

The proof of Theorem \ref{theorem_J_stat} rests on the notion of the Jacobi ensemble. Let us first define it and then provide a sketch of the proof of Theorem \ref{theorem_J_stat}.

\subsection{Jacobi ensemble}\label{Section_Jacobi_prelim}

\begin{definition} \label{Definition_Jacobi}
 A (real) Jacobi ensemble $\J(N;p,q)$ is a distribution on $N\times N$ symmetric matrices $M$ of density proportional to
 \begin{equation}
  \label{eq_Jacobi_def}
  \det(M)^{p-1} \det(\1_N-M)^{q-1},\quad 0<M< \1_N,
 \end{equation}
 with respect to the Lebesgue measure, where $p,q>0$ are two parameters and $0< M < \1_N$ means that both $M$ and $\1_N-M$ are positive definite.
\end{definition}
When $N=1$, \eqref{eq_Jacobi_def} is the Beta distribution. For general $N$, the eigenvectors of  random Jacobi-distributed $M$ are uniformly distributed, while $N$ eigenvalues $x_1\ge \dots\ge x_N$ admit an explicit density with respect to the Lebesgue measure given by
\begin{equation}
 \label{eq_Jacobi_eig}
 \frac{1}{Z(N;p,q)}\prod_{1\le i<j \le N} (x_i-x_j) \prod_{i=1}^N x_i^{p-1} (1-x_i)^{q-1},
\end{equation}
where $Z(N;p,q)$ is an (explicitly known) normalization constant.

The Jacobi ensemble is widely used in statistics \citep{Muirhead_book}, theoretical physics \citep{Mehta}, and random matrix theory \citep{forrest}. There are numerous tools for studying it\footnote{Just to mention some: Pfaffian point processes \citep{Mehta}, combinatorics of moments \citep{Bai_Silverstein}, variational problems for log-gases \citep{BenArous}, Kadell integrals \citep{Kadell}, Schwinger-Dyson/Loop equations \citep{Johansson1998}, tridiagonal models \citep{Killip_Nenciu}, Macdonald processes \citep{BorG}.}, and as a result its asymptotic properties (usually as $N\to\infty$) are known in great detail.

Two particularly important appearances of the Jacobi ensemble in statistics are in the multivariate analysis of variance (MANOVA) and in the canonical-correlation analysis, see \citet[Sections 1 and 2]{Johnstone_Jacobi} for more detailed discussions and more examples. Let us explain the latter setting, as it has some resemblance with the Johansen test. We start with two rectangular matrices of data: $X$ of size $N\times T$ and $Y$ of size $K\times T$. One can interpret $X_{it}$ (and $Y_{it}$) as the $i$th variable at time $t$.

If $N=K=1$, then one can think about data sets $X$ and $Y$ as $T$ observations of two variables. Then one can measure the (linear) dependence between these variables by computing the (squared) sample correlation coefficient:
\begin{equation}
\label{eq_cor_1}
 \frac{ \left(\sum_{t=1}^T X_t Y_t\right)^2}{ \sum_{t=1}^T (X_t)^2  \sum_{t=1}^T (Y_t)^2}.
\end{equation}
A direct computation shows that if the elements of $X$ and $Y$ are i.i.d.~mean zero Gaussians, then the law of \eqref{eq_cor_1} is given by Beta distribution for any $T=1,2,\dots$.

A generalization of \eqref{eq_cor_1} to $N,K\ge 1$ defines the  \emph{squared sample canonical correlations} $\{r_i\}$ as solutions to the equation
\begin{equation}
\label{eq_cor_general}
 \det\left( S_{XY} S_{YY}^{-1} S_{YX}- r S_{XX}\right)=0,
\end{equation}
where $S_{XY}= X Y^*$, $S_{YY}=Y Y^*$, $S_{YX}= Y X^*$, $S_{XX}= X X^*$ (by $X^*$ we mean transpose of $X$ when $X$ is a real matrix and conjugate-transpose of $X$ when $X$ is complex). Generically the equation \eqref{eq_cor_general} has $\min(N,K)$ non-zero solutions. They have a variational meaning. For instance, the maximal solution of Eq.~\eqref{eq_cor_general} is the maximal squared correlation coefficient between $a^* X$ and $b^*Y$, where the maximization goes over all $N$--dimensional vectors $a$ and $K$--dimensional vectors $b$.

Now suppose that the columns of $X$ are i.i.d.~$N$--dimensional mean $0$ Gaussians with (non-degenerate) covariance matrix $\Lambda$ and the columns of $Y$ are i.i.d.~$K$--dimensional mean $0$ Gaussians with (non-degenerate) covariance matrix $\Lambda'$. Assume for simplicity that $N\le K$ and $N+K \le T$. If $X$ and $Y$ are independent, then it can be shown that the (squared) sample canonical correlations $r_1>r_2>\dots>r_N$ have the law of the eigenvalues of the Jacobi ensemble $\J(N; \frac{K-N+1}{2},\frac{T-N-K+1}{2})$.

At this point we observe both a similarity and a difference between the above instance of the Jacobi ensemble and the matrix appearing in the Johansen test. On one hand, the latter also deals with sample canonical correlations of two data sets. On the other hand, the data sets are no longer independent, instead one is obtained from another by a deterministic linear transformation. So are the roots of Eq.~\eqref{eq_Johansen_equation} related to the Jacobi ensemble? There is evidence in both directions. First, computer simulations quickly reveal that in one-dimensional case the distribution of the single eigenvalue in the Johansen test is \emph{not} the Beta distribution. Yet, second, recent results of \citet{onatski_ecta,onatski_joe} show that the Law of Large Numbers for the empirical distribution of the eigenvalues appearing in the Johansen test (with roots of Eq.~\eqref{eq_Johansen_equation} being a particular case) matches the one for the Jacobi ensemble (with shifted dimension parameters) in the limit as $N,T\to\infty$. Those articles were asking for an explanation.

\subsection{Sketch of the proof of Theorem \ref{theorem_J_stat}}

To prove Theorem \ref{theorem_J_stat}, we first need to establish the following central statement: a small perturbation of the Johansen test matrix, obtained by replacing the deterministic summation matrix in its definition by a random analogue, exactly matches the Jacobi ensemble. We further show that the perturbation vanishes in the limit as $N,T\to\infty$, thus, allowing us to obtain the asymptotics of the variants of the Johansen test from the known asymptotic results for the Jacobi ensemble.

\begin{theorem}
\label{Theorem_main_approximation}
Suppose that $T,N\to\infty$ in such a way that $T>2N$ and the ratio $T/N$ remains bounded. Under the hypothesis $\Pi=0$ for \eqref{var_1}, one can couple (i.e.~define on the same probability space) the eigenvalues $\lambda_1\ge \lambda_2\ge \dots\ge \lambda_N$ of the matrix $S_{10} S_{00}^{-1} S_{01}S_{11}^{-1}$ and eigenvalues $x_1\ge \dots\ge x_N$ of the Jacobi ensemble $\J(N;\frac{N}{2}, \frac{T-2N}{2})$ in such a way that  for each $\epsilon>0$ we have
 $$
   \lim_{T,N\to\infty} \mathrm{Prob}\left( \max_{1\le i \le N} |\lambda_i-x_i|< \frac{1}{N^{1-\epsilon}}\right)=1.
 $$
\end{theorem}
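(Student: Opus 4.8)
The plan is to rewrite the eigenvalues in \eqref{eq_Johansen_equation} as squared sample canonical correlations between two explicit random $N$--dimensional subspaces, to exhibit an exactly solvable model obtained from those by a vanishing perturbation, and then to transport the known finite--$N$ law of the Jacobi ensemble back to the $\lambda_i$. \emph{First I would reduce to principal angles.} Under $\Pi=0$ write $\eps=\Lambda^{1/2}Z$ with $Z$ an $N\times T$ matrix of i.i.d.\ standard Gaussians; the roots of \eqref{eq_Johansen_equation} are unchanged when $S_{00},\dots,S_{11}$ are conjugated by $\Lambda^{1/2}$, so we may take $\Lambda=\1_N$. From \eqref{var_1} we get $\Delta X\,\mathcal P=Z\mathcal P$ and $\tilde X\mathcal P=ZB\,\mathcal P$ with $B=A-\mathbf 1 w^{*}$, where $A$ is the summation matrix $A_{st}=\mathbf 1_{\{s<t\}}$, $\mathbf 1=(1,\dots,1)^{*}$ and $w_t=(t-1)/T$; the drift $\mu$ and the starting point $X_0$ disappear under $\mathcal P$, and the de--trending is precisely what forces $\mathbf 1^{*}B=0$. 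Fixing an isometry $O\colon\mathbb R^{T-1}\to\mathbb R^{T}$ with range $\mathbf 1^{\perp}=\mathrm{range}\,\mathcal P$ (so $\mathcal P=OO^{*}$) and putting $W:=ZO$, an $N\times(T-1)$ matrix of i.i.d.\ standard Gaussians whose row span $\mathcal W$ is Haar on the Grassmannian of $N$--planes in $\mathbb R^{T-1}$, the identity $\mathbf 1^{*}B=0$ yields $ZBO=W\widehat A$ with $\widehat A:=O^{*}AO$, so the rows of $Z\mathcal P$ span $O\mathcal W$ and the rows of $ZB\mathcal P$ span $O\widehat A^{*}\mathcal W$. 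Since the canonical correlations in \eqref{eq_cor_general} equal the squared cosines of the principal angles between the row spans and $O$ is an isometry,
\begin{equation}\label{plan_angles}
 \lambda_i=\cos^{2}\theta_i\!\left(\mathcal W,\ \widehat A^{*}\mathcal W\right),\qquad i=1,\dots,N .
\end{equation}
Moreover $A+A^{*}=\mathbf 1\mathbf 1^{*}-\1_T$ gives the rigid identity $\widehat A+\widehat A^{*}=-\1_{T-1}$, i.e.\ $\widehat A^{*}=-\tfrac12\1_{T-1}+K$ with $K$ antisymmetric, and $\widehat A^{*}$ is (up to a bounded--rank boundary term) a fixed function of the cyclic--shift permutation implicit in the summation operator.

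\emph{Second I would isolate the exactly solvable model}, replacing that permutation by a Haar--distributed orthogonal $\Omega\in O(T-1)$, independent of $\mathcal W$, and setting $C_{\Omega}:=-(\1_{T-1}-\Omega)^{-1}$, which still obeys $C_\Omega+C_\Omega^{*}=-\1_{T-1}$. The crucial claim is that the squared cosines $x_1\ge\dots\ge x_N$ of the principal angles between $\mathcal W$ and $C_{\Omega}\mathcal W$ are distributed \emph{exactly} as the eigenvalues \eqref{eq_Jacobi_eig} of $\J\!\left(N;\tfrac N2,\tfrac{T-2N}{2}\right)$. To prove this I would fix $\mathcal W=\mathrm{span}(e_1,\dots,e_N)$ (legitimate since $\mathcal W$ is Haar), parametrize $\Omega$ via a Cayley transform of an antisymmetric matrix, and evaluate the resulting matrix integral over $\Omega$ together with the Gaussian weight by Hua--type identities (compare \citet{Hua}); in the answer the exponent $q=\tfrac{T-2N}{2}$ comes out of the ambient dimension $T-1$ exactly as for a generic pair of $N$--planes, and is where the hypothesis $T>2N$ is used, while the non--generic exponent $p=\tfrac N2$ (rather than $\tfrac12$) is forced by the constraint $C_\Omega+C_\Omega^{*}=-\1$, which aligns $C_\Omega\mathcal W$ with $\mathcal W$ along the common $-\tfrac12\1$ part.

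\emph{Third I would build the coupling and show the perturbation is negligible.} Take $\mathcal W$ and $\Omega$ independent and Haar, with $\lambda_i=\lambda_i(\mathcal W)$ from \eqref{plan_angles} and $x_i=x_i(\mathcal W,\Omega)$ from the solvable model; two errors must fit inside $N^{-1+\epsilon}$. (i) $-\widehat A^{*}$ and the object built from $\Omega$ in place of the shift differ, among other things, by a bounded--rank term; a bounded--rank change of a subspace moves each principal angle past only boundedly many of its neighbours, so with rigidity of the Jacobi ensemble (from its local law) this costs $N^{-1+o(1)}$ in the bulk, while an edge--universality input handles the finitely many largest $\lambda_i$. (ii) One must exchange the essentially equispaced spectrum of the cyclic shift on $\mathbf 1^{\perp}$ for the Haar spectrum of $\Omega$; after conjugating by the Haar orthogonal matrix diagonalizing $\mathcal W$, the $\lambda_i$ are a smooth function of a Haar--conjugate of the transform, so the standard resolvent/comparison machinery reduces the swap to closeness of the two spectra, which agree up to the scale $N^{-1+o(1)}$, and a Lindeberg--type interpolation transports the law at the claimed rate.

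\emph{The hard part} will be the second step: the matrix--integral evaluation identifying the perturbed model with $\J\!\left(N;\tfrac N2,\tfrac{T-2N}{2}\right)$ is the genuinely new ingredient, with no analogue in the cointegration literature, and the nonstandard parameter $p=\tfrac N2$ has to be extracted from the antisymmetry constraint $C+C^{*}=-\1$. A secondary difficulty is upgrading the third step from a crude polynomial rate to the uniform precision $N^{-1+\epsilon}$ up to the spectral edge, where rigidity and edge universality for the Jacobi ensemble are needed.
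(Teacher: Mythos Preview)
Your overall architecture matches the paper's: reduce to principal angles between $\mathcal W$ and $L\mathcal W$ for an explicit operator $L$ built from the summation matrix, replace $L$ by an exactly solvable random analogue, and control the perturbation. Your Step~2 is exactly Theorem~\ref{Theorem_Sum_Jacobi}: the paper parametrizes the Haar orthogonal matrix by the Cayley transform $\mathcal R=(\1-O)(\1+O)^{-1}$, splits $\1+\mathcal R$ into blocks, and after a change of variables the density factorizes so that $M=(\1_N+\tilde B\tilde B^*)^{-1}$ has exactly the Jacobi law with $p=N/2$, $q=(T-2N)/2$.

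Where your plan diverges is Step~3, and in two ways that make your route substantially harder than necessary. First, you anticipate a ``bounded--rank boundary term'' between $\widehat A^{*}$ and a function of the cyclic shift. In fact the specific de--trending in \eqref{eq_detrending} was chosen so that this term vanishes \emph{exactly}: Lemma~\ref{Lemma_shift_inversion} shows that the projected summation operator $\mathcal P\Phi\mathcal P$ restricted to $V=\mathbf 1^{\perp}$ equals $-(\1_V-F_V)^{-1}$ on the nose, where $F_V$ is the cyclic lead. So after Proposition~\ref{Prop_gaussian_rotation} the \emph{only} discrepancy between the test eigenvalues and the Jacobi ensemble is that $-F_V$ (deterministic roots of unity) sits where a Haar orthogonal $O$ should be. No finite--rank perturbation argument, no Jacobi rigidity, no edge--universality input are needed.

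Second, for that remaining discrepancy the paper does not interpolate laws. It couples \emph{matrices}: write $-\tilde OF_V\tilde O^{*}=O_2DO_2^{*}$ and $O=O_2D^{\mathrm{rand}}O_2^{*}$ with the same Haar conjugator $O_2$, so the eigenvectors agree exactly and only the eigenvalues differ, by at most $\T^{\delta-1}$ with high probability via the circular--ensemble rigidity of \citet{MeckesMeckes}. Then one shows the map $Z\mapsto M(Z)$ is Lipschitz in spectral norm at scale $\T^{\delta-1}$. The subtlety is that $U(Z)=(\1+Z)^{-1}$ has eigenvalues of order $\T$, so naive continuity fails; the paper inverts the block structure to re--express $M(Z)$ purely through blocks of $\1+Z$ itself (formula~\eqref{eq_x19}), after which Lemma~\ref{Lemma_inv_bounded} bounds $(2\1+Z_{22}+Z_{22}^{*})^{-1}$ and ordinary norm estimates finish. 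Your proposed Lindeberg--type interpolation plus resolvent comparison would in principle work, but it is heavier machinery aimed at comparing \emph{distributions}, whereas here one has a genuine pathwise coupling and only needs operator--norm continuity of an explicit function.
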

The proof of Theorem \ref{Theorem_main_approximation} is based on the following idea. Looking at Eq.~\eqref{var_1} when $\Pi=0$,  one can notice that matrices entering into the test and given by Eq.~\eqref{eq_modified_Joh_matrices} can be expressed in terms of the matrix of data $X$ and the lag operator mapping $X_t\to X_{t-1}$. A computation shows that since we deal with the de-trended and de-meaned data, we can replace the lag operator with its \emph{cyclic version},  which maps $X_1$ to $X_{T}$ rather than $X_0$. The latter is an orthogonal operator whose eigenvalues are roots of unity of order $T$. Then, the idea is to replace this operator by \emph{uniformly random} orthogonal operator. From that we proceed in two steps:
\begin{enumerate}
\item[(I)] We show that when the lag operator is replaced by its random counterpart, the eigenvalues of $S_{10} S_{00}^{-1} S_{01}S_{11}^{-1}$ have distribution $\J(N;\frac{N}{2}, \frac{T-2N}{2})$. We remark that this is a new appearance of the Jacobi ensemble, which was not present in the literature before.
\item[(II)] We show that replacement of the lag operator by its random counterpart introduces an error, which can be upper-bounded by $N^{\epsilon-1}$ for any $\epsilon>0$. This part is based on the rigidity results from random matrix theory, which say that eigenvalues of a uniformly random orthogonal matrix can be very closely approximated by equally spaced roots of unity.
\end{enumerate}
The full proof of Theorem \ref{Theorem_main_approximation} is given in Appendix. In addition to the real case, we also simultaneously prove a similar statement for complex matrices, encountering Jacobi ensemble of Hermitian matrices. Generally complex settings are rare guests in economics. Yet, they play a major role in the spectral analysis of time series data and in other areas, such as high energy physics.

By combining Theorem \ref{Theorem_main_approximation} with known asymptotic results for the Jacobi ensemble we can obtain the asymptotics of test statistic \eqref{LR_NT} in various regimes.

\section{Monte Carlo simulations}

\label{Section_simulations}

In this section we illustrate the performance of our test via Monte Carlo simulations. We consider both size (rejection rate) and power.

\subsection{Rejection rate}\label{Section_rej_rate}
First, we compare the finite sample performance of our approach versus Johansen's LR test and one of its corrected versions for (reasonably small) $T=30$ and $N=5,\dots,10$. The finite sample correction takes the form $\frac{T-N}{T}$ and was suggested in \citet{Reinsel_Ahn}. (Let us note that there are several more advanced empirical finite sample correction procedures for Johansen's LR test and its variations, we refer to \citet[Table 2]{onatski_joe} for a recent comparison of some of those for a wide range of values of $N$ and $T$.) Table \ref{rej_rate} summarizes the results (numbers closer to $5$ mean better performance). The $LR_{N,T}$ column is our test and the last two columns are from \citet{gonzalo_pitarakis}. They correspond to the same null of no cointegration, but use a different from ours alternative hypothesis. Our alternative is at most $r$ cointegrating relationships (in Table \ref{rej_rate} we use $r=1$, which means we look at the largest eigenvalue). The LR and RALR tests consider ``at most $N$'' cointegrations as $H_1$, which means that they use the sum of all eigenvalues. We can see that in finite samples when both $N$ and $T$ are of a similar magnitude, our approach significantly outperforms the alternatives based on small $N$, large $T$ asymptotics.

\begin{table}[h]
\begin{tabular}{c|c|c|c}
  \hline
  $N$ & $LR_{N,T},\,r=1$ & LR & RALR\\
  \hline
  \hline
  5 & 6.60 & 20.75 & 3.59\\
  6 & 5.45 & 31.66 & 2.68\\
  7 & 4.52 & 47.44 & 1.98\\
  8 & 3.80 & 67.42 & 2.00\\
  9 & 3.16 & 85.00 & 1.32\\
  10 & 2.60 & 96.69 & 0.96\\
  \hline
  \hline
  \multicolumn{4}{c}{}
\end{tabular}
\caption{Empirical size under no cointegration hypothesis ($5\%$ nominal level).
Data generating process: $\Delta X_{it}=\eps_{it}$, $\eps_{it}\thicksim$ i.i.d.~$\mathcal{N}(0,1)$, $T=30$, $MC=1,000,000$ replications for $LR_{N,T}$ and $MC=10,000$ for LR and RALR.}
\label{rej_rate}
\end{table}

To illustrate the performance of our test as the sample size increases, we fix the ratio $T/N\equiv c$ and plot the empirical size as a function of $N$. This is shown in Figure \ref{rej_rate_c}, where the target is $5\%$ rejection rate. The picture suggests that the test has rejection rate close to $5\%$. The green solid line corresponding to $c=4$ achieves $5\%$ rejection rate very fast. Other three curves overshoot $5\%$ by couple percents (e.g., both $c=5$ and $c=6$ are always below $7\%$). Moreover, the larger is $c$, the higher is the corresponding rejection curve. E.g.,~$c=10$ curve (blue, short dash) is strictly above other curves.


\begin{figure}[h]
{\scalebox{0.5}{\includegraphics{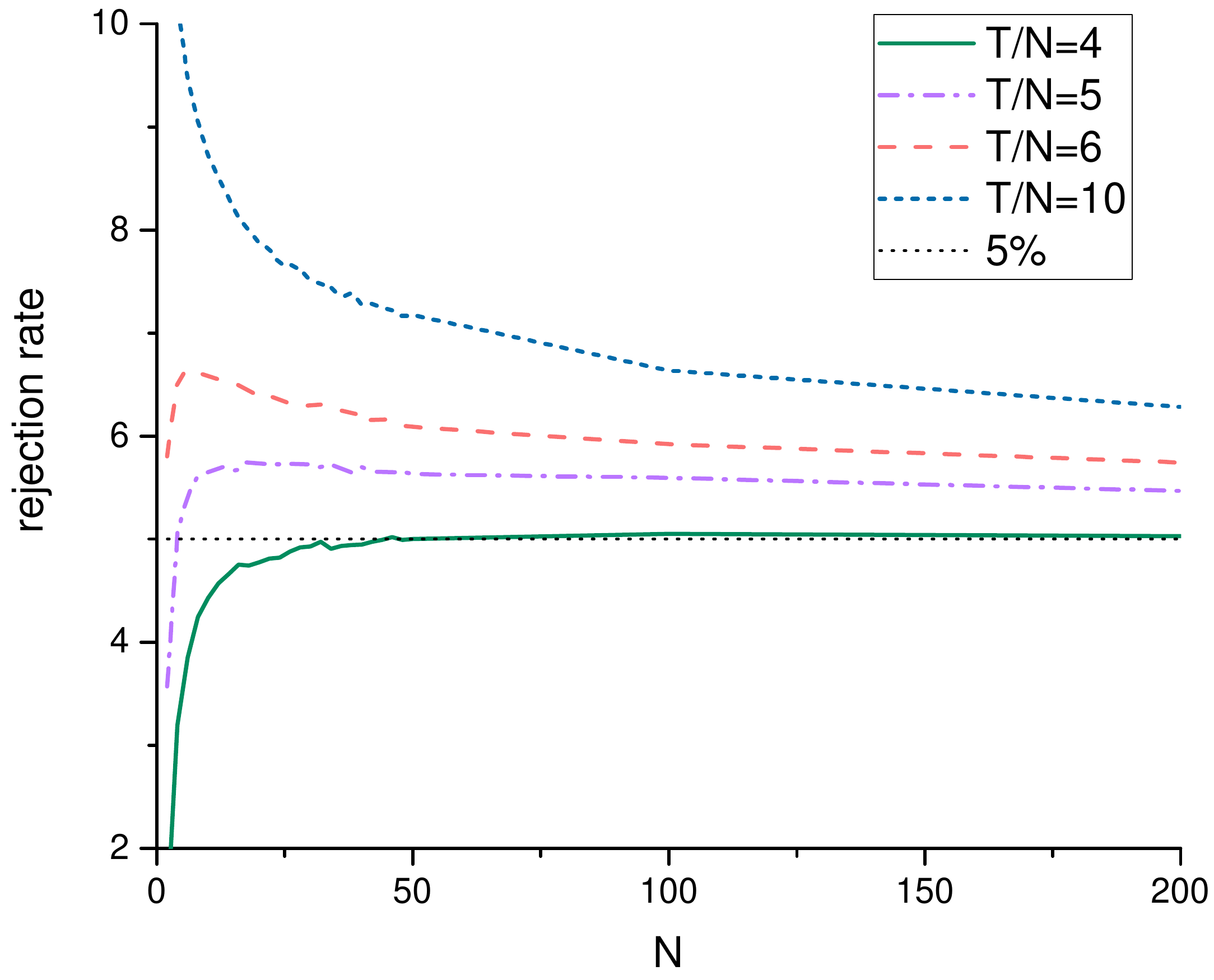}}}
\caption{Empirical size when $T/N$ is fixed and $N$ increases.}
\label{rej_rate_c}
\end{figure}

To improve the finite-sample behavior for large $c$ we suggest to ignore the $\frac{2}{N}$ correction in $\p$ and $\q$ in Theorem \ref{theorem_J_stat}, Eq.~\eqref{pq_def}, when $T/N$ is large. That is, to use simplified formulas instead: $\p=2,\,\q=T/N-1$. We recalculate the empirical rejection rate under the simplified formulas for $\p,\q$ in Figure \ref{rej_rate_c_simpl}. Under the simplified formulas, the larger is $c$, the smaller is over-rejection and the closer the curve is to $5\%$ line. As can be seen by comparing Figures \ref{rej_rate_c} and \ref{rej_rate_c_simpl}, for small $c$ the formula with the correction leads to better rejection rates, while for large $c$ it is the opposite, and we do not gain from finite sample correction. The conclusion is to use the simplified formula when $c=T/N$ is at least $6$.

\begin{figure}[h]
{\scalebox{0.5}{\includegraphics{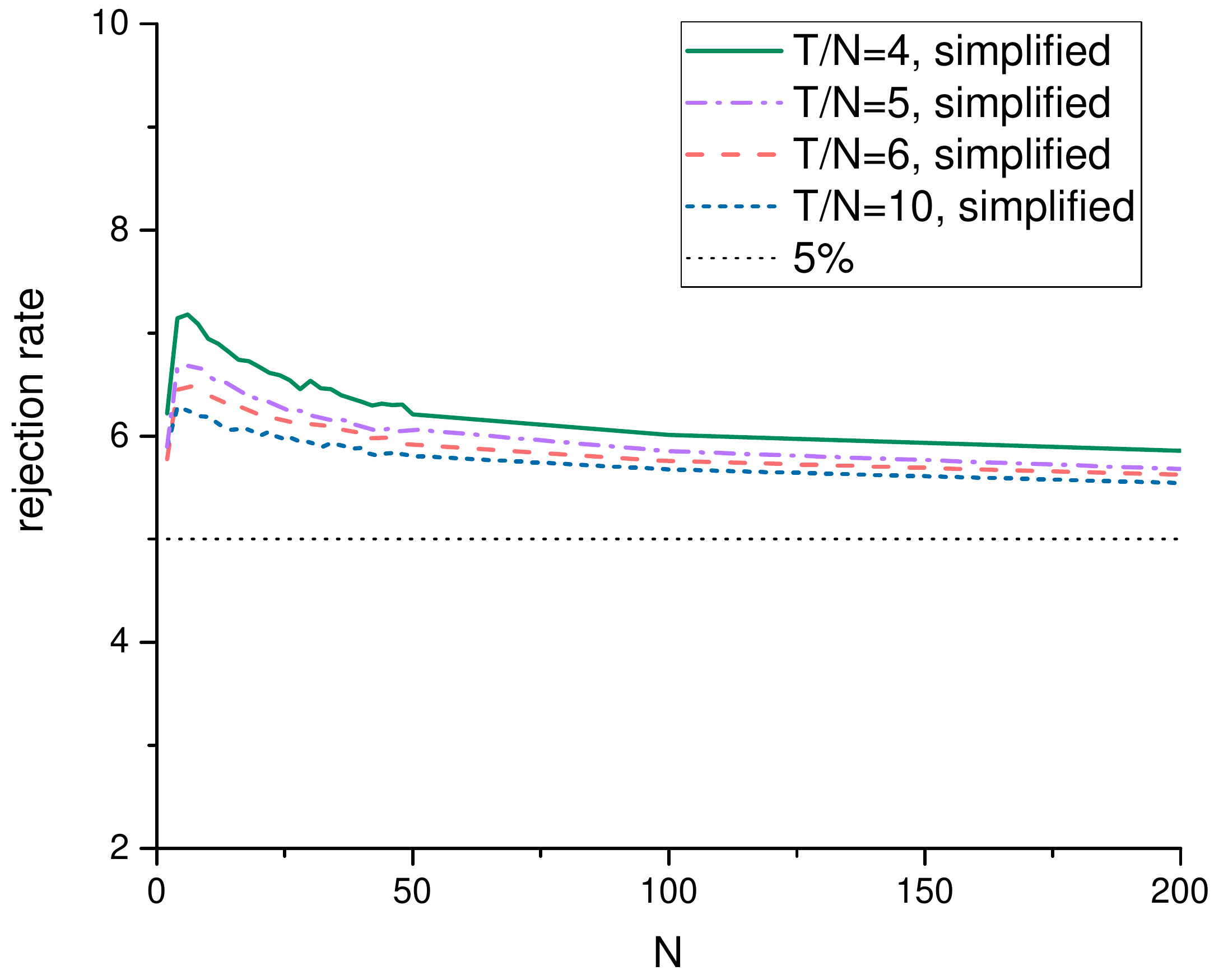}}}
\caption{Empirical size when $T/N$ is fixed and $N$ increases. Test based on $\p=2$, $\q=T/N-1$.}
\label{rej_rate_c_simpl}
\end{figure}

An important feature of Figures \ref{rej_rate_c} and \ref{rej_rate_c_simpl} is that as $N,T\to\infty$ with fixed $T/N$ the results improve towards the perfect match to $5\%$. This is in contrast to various finite sample corrections of Johansen's LR test used earlier. In particular, examination of Table $2$ in \citet{onatski_joe} reveals that each procedure has its own pairs $(N,T)$ where the results are close to perfect (empirical size of the test is very close to the desired $5\%$) and some others where the results are much worse, yet rejection rates in general do not improve as the sample size increases keeping $T/N$ fixed.


\subsection{Power} \label{Section_power}
\begin{figure}[h!]
{\scalebox{0.5}{\includegraphics{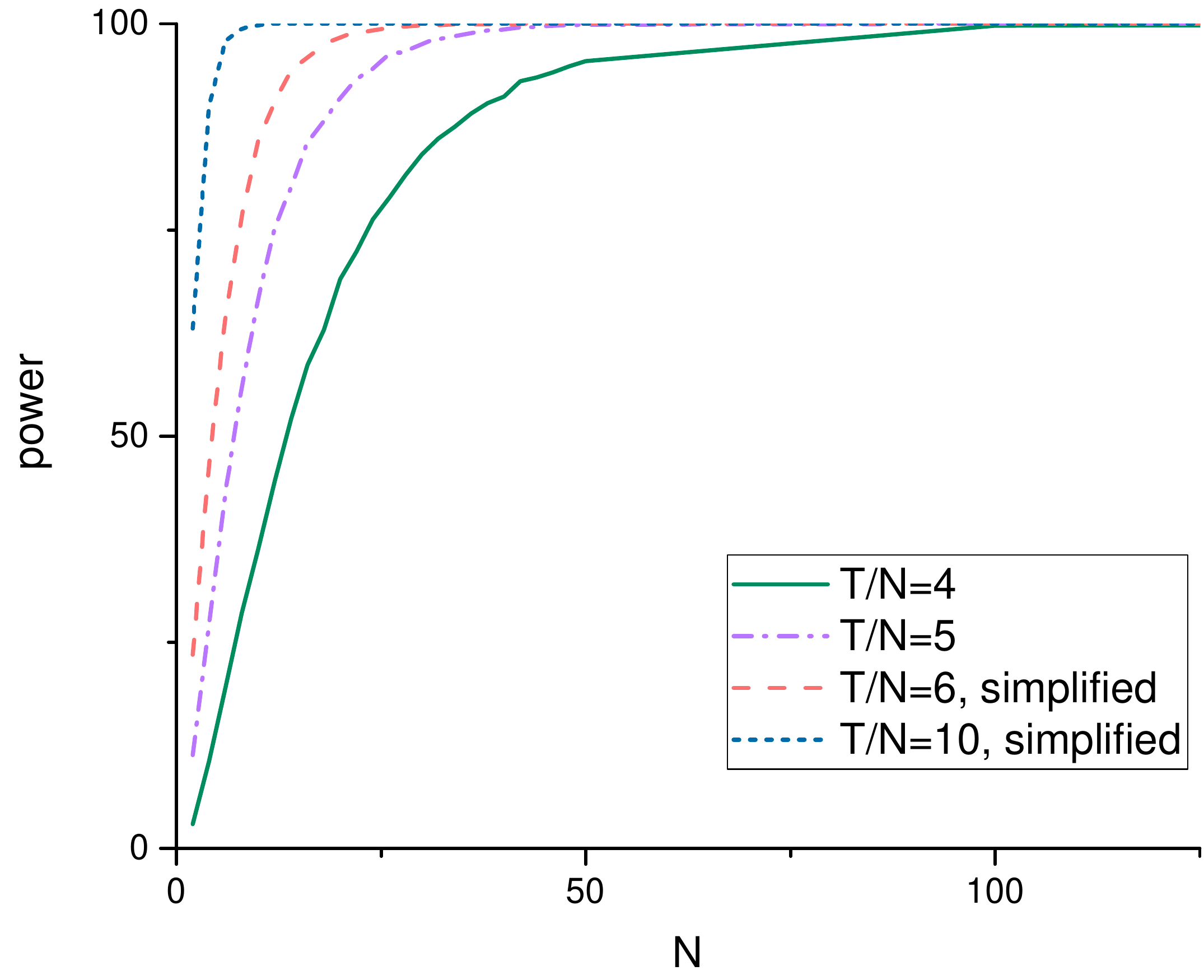}}}
\caption{Power against random alternative of $1$ cointegrating relationship when $T/N$ is fixed and $N$ increases.}
\label{power_asym}
\end{figure}
\begin{figure}[h!]
{\scalebox{0.5}{\includegraphics{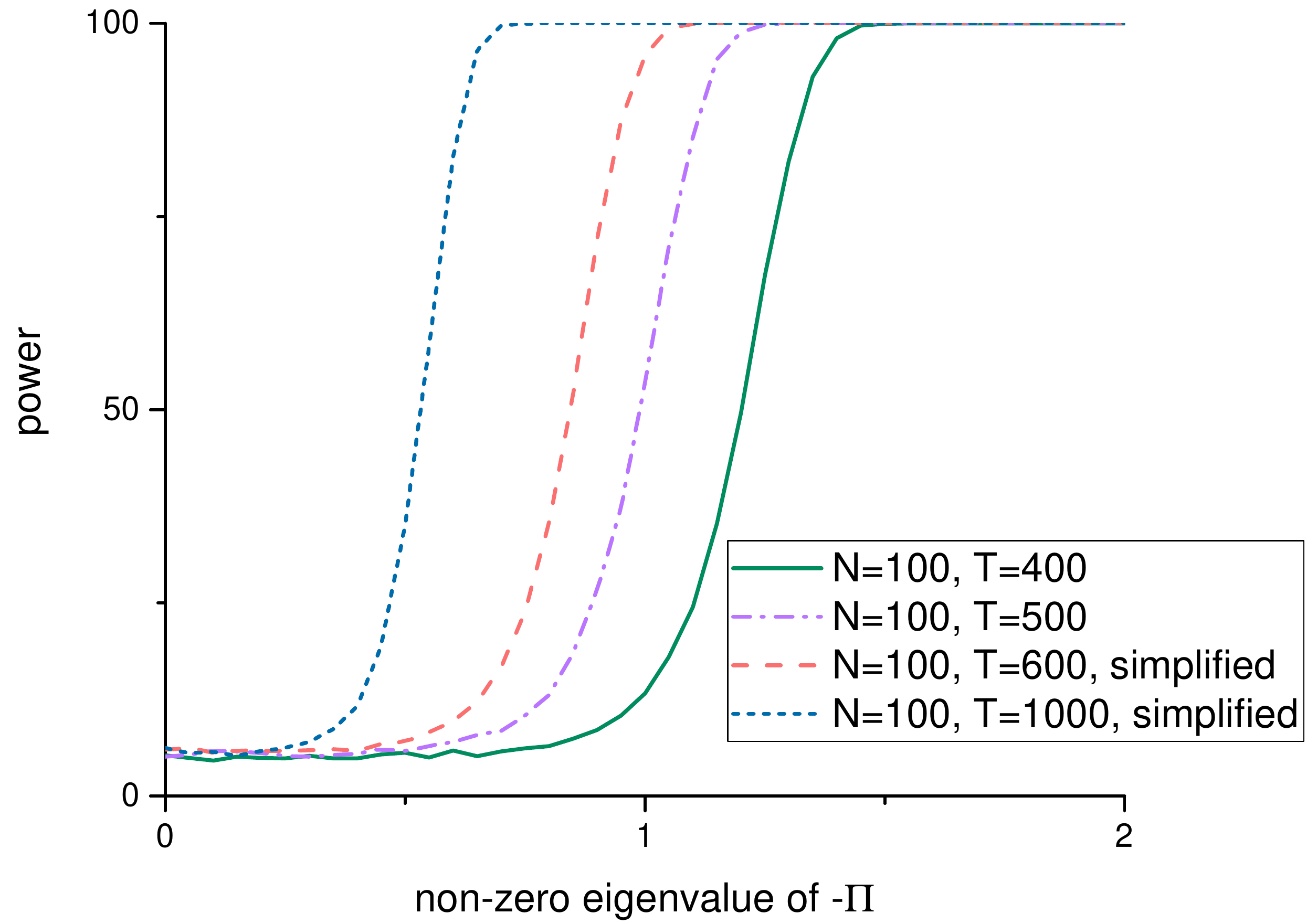}}}
\caption{Power against random alternative of $1$ cointegrating relationship when $\Pi$ is symmetric. To ensure stationarity of $\Delta X$ a non-zero eigenvalue of $-\Pi$ lies between $0$ and $2$.}
\label{power_sym}
\end{figure}

When one analyzes the power of a test, the question is which data generating process (dgp) to use under an alternative $H_1$. In our setting the space of alternative dgps has growing with $N$ dimension. Thus, there is no clear choice of the proxy alternative hypothesis to use in simulations. Hence, instead we proceed with various random alternatives.\footnote{Another approach would be to stick to some ad hoc matrix $\Pi$. Yet, we do not expect that to affect our simulations in a qualitatively significant way.}  Therefore, the corresponding power is also random. For illustrational convenience, we only report averages of those powers.

To analyze the power of our test we conduct several experiments. In all of them the errors $\eps_{it}$ are generated as i.i.d.~$\mathcal{N}(0,1)$. In the first experiment we randomly sample a matrix $\Pi$ of rank $1$. We do this by generating two uniformly random $N$--dimensional unit vectors, $u,v$, such that the non-zero eigenvalue of $uv^{\ast}$ is negative. Then we set $\Pi=uv^{\ast}$. By construction, $\Pi$ has rank $1$, singular value $1$, and one eigenvalue between $-1$ and $0$ (others are zero),\footnote{As $N$ goes to infinity the non-zero eigenvalue is of order $N^{-1/2}$.} so that $X_t$ is non-stationary, while $\Delta X_t$ is stationary. The average power constructed from such random alternative is shown in Figure \ref{power_asym}. As in the previous subsection we fix the ratio $T/N$ and plot the average power as a function of $N$. Following our recommendation, we use simplified formulas for $\p$ and $\q$ when $T/N\geq6$. We can see that the average power quickly reaches $100\%$ for all ratios of $T/N$. The larger is that ratio, the faster we reach $100\%$. This is due to the fact that higher ratio means higher time span. This gives the process more chances to accumulate the effects of the presence of cointegration.

In the second experiment, we randomly generate a symmetric matrix $\Pi$ of rank $1$. We do this by generating a uniformly random $N$--dimensional unit vector $v$. Then we set $\Pi=-\lambda vv^{\ast}$, where $\lambda$ goes from $0$ to $2$. The coefficient $\lambda$ equals to the non-zero eigenvalue of $-\Pi$. The fact that it lies between $0$ and $2$ guarantees that $X_t$ is non-stationary, while $\Delta X_t$ is stationary. Figure \ref{power_sym} shows the power as a function of $\lambda$ for $N=100$ and various values of $T$ corresponding to $T/N=\{4,5,6,10\}$ as in the previous experiments. The larger is $-\lambda$, the larger is power, which eventually reaches $100\%$. When $\lambda=0$, the dgp corresponds to the null $H_0$. Thus, all curves start from $\approx5\%$, which corresponds to the empirical size of the test. We can see that again the larger is $T/N$, the faster we reach $100\%$. The reason is the same as in the previous simulation.

\begin{figure}[t]
{\scalebox{0.5}{\includegraphics{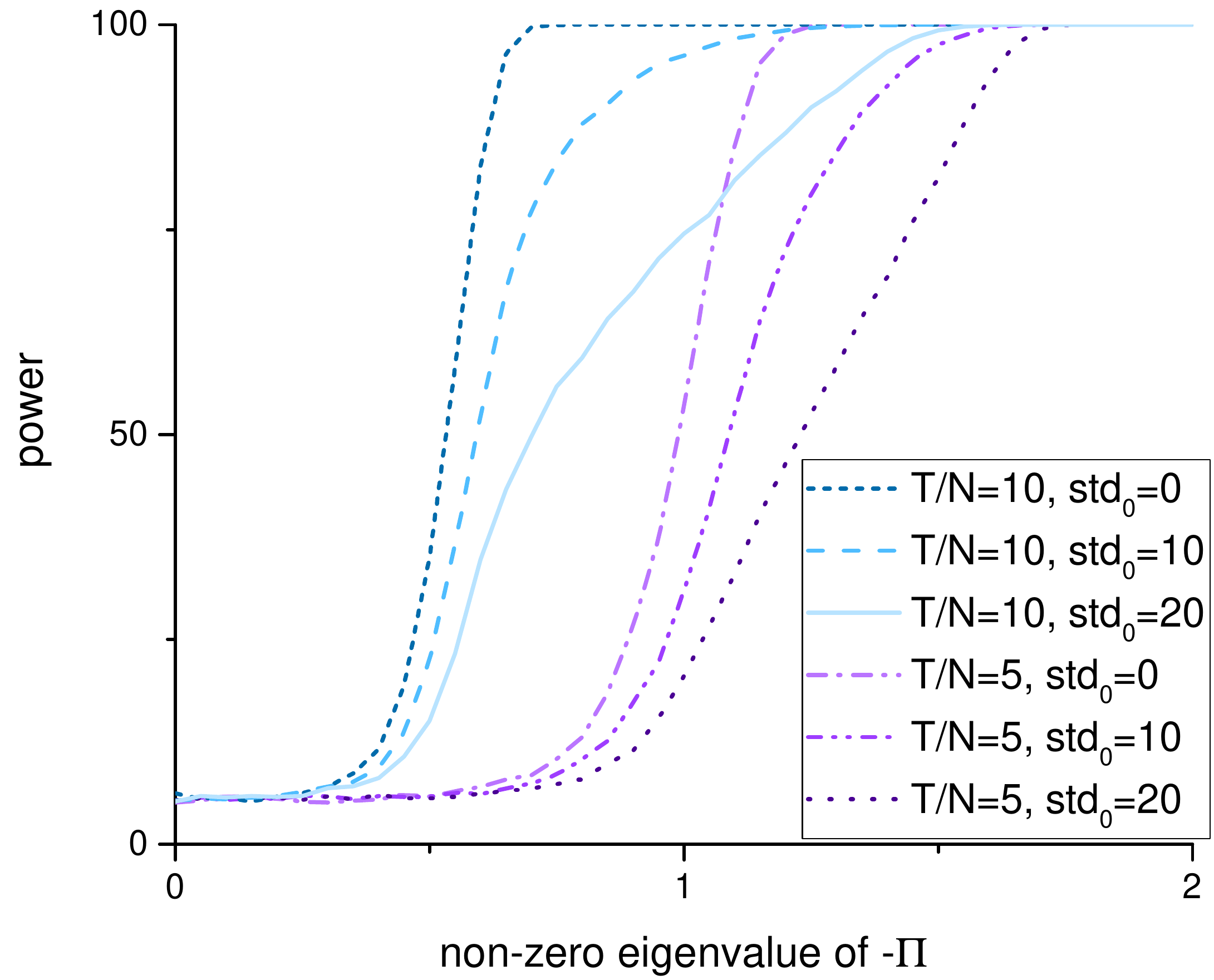}}}
\caption{Power against random alternative of $1$ cointegrating relationship when $\Pi$ is symmetric. Initial condition is chosen as $X_0=\text{std}_0\cdot\mathcal{N}(0,1)$, $N=100$.}
\label{power_X0}
\end{figure}

Drawing the intuition from the unit root testing literature (e.g., \citet{unitroot_X0_ME}, \citet{unitroot_X0_HLT}), we also analyze the sensitivity of power to the choice of initial condition, $X_0$.\footnote{Note that $X_0$ does not affect rejection rates in Section \ref{Section_rej_rate}, because under $H_0$ it cancels out.} In two previous experiments we set $X_0=0$. Figure \ref{power_X0} shows how the power against random alternative of $1$ cointegrating relationship for symmetric $\Pi$ changes for various magnitudes of $X_0$. To be more specific, we redo the same simulations as in the previous paragraph, but start with $X_{i0}\thicksim$ i.i.d.~$\text{std}_0\cdot\mathcal{N}(0,1)$ for each Monte Carlo draw. We consider $T/N=5$ and $T/N=10$. Curves with std$_0=0$ are the same as on Figure \ref{power_sym} (they are also represented with the same style and color on both pictures). We can see that the larger is the magnitude of $X_0$, as measured by its standard deviation std$_0$, the slower the power reaches $100\%$.

In contrast to the previous paragraph, the power against random alternative of $1$ cointegrating relationship when $\Pi$ is asymmetric (as in Figure \ref{power_asym}) does not exhibit any substantial changes with respect to the magnitude of $X_0$. Hence, we do not redraw the analogue of Figure \ref{power_asym} for non-zero $X_0$.

Overall, the simulations suggests the good asymptotic performance of our test procedure both under $H_0$ and $H_1$. The theoretical analysis of the power remains an open and challenging question.

\section{Empirical illustration}\label{Section_s&p}


In this section we illustrate our testing strategy by analyzing cointegration in log prices of various stocks. The search for cointegrations is a part of a stock market strategy called pairs trading, see, e.g., \citet{pairs_trading} and references therein. For us this is a convenient testing ground, as both $N$ and $T$ are large.

We use logarithms of weekly S$\&$P100 prices over ten years: $01.01.2010-01.01.2020$, which gives us $522$ observations across time. The time range is chosen so that we do not need to worry about potential structural breaks due to financial crisis of $2007-2008$ and due to COVID-19. S$\&$P100 includes $101$ (because one of its component companies, Google, has 2 classes of stock) leading U.S.~stocks with exchange-listed options. We use 92 of those stocks (those which were available for the whole ten years span and only one of two Google's stocks). More details on those stocks are in Section \ref{data_subsection} in Appendix. Therefore, $N=92$, $T=521$ and $T/N\approx 5.66$.

Figure \ref{Wachter_data} shows the histogram of eigenvalues which solve Eq.~\ref{eq_Johansen_equation} for the S$\&$P100 data. The key feature of this histogram is that it closely resembles the Wachter distribution, which density is shown by thick orange line in Figure \ref{Wachter_data}. This distribution governs the asymptotics of the eigenvalues of the Jacobi ensemble (see Section \ref{Section_Jacobi} in Appendix for the details). In particular, we see a precise match between supports of the histogram and of the Wachter distribution. The resemblance is in line with our Theorem \ref{Theorem_main_approximation}. Indeed, if we believe that the true data-generating process \eqref{var_1} has $\Pi=0$, then the theorem combined with the asymptotics of the Jacobi ensemble of Theorem \ref{Theorem_Jacobi_as} predicts convergence to the Wachter distribution. The conclusion is that Figure \ref{Wachter_data} shows a close match between theoretical predictions and real data. Simultaneously, this figure is consistent with $H_0$ of no cointegration.

\begin{figure}[h]
{\scalebox{0.7}{\includegraphics{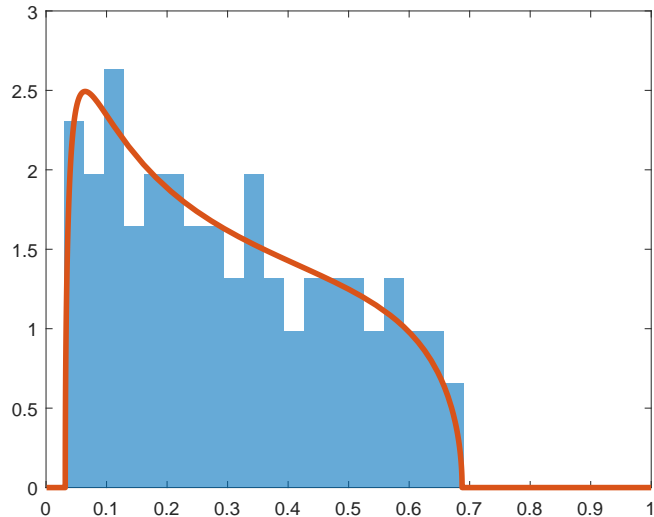}}}
\caption{Eigenvalues from S$\&$P data and Wachter distribution.}
\label{Wachter_data}
\end{figure}

We compute our test statistic for $r=1,2,3$ using the S$\&$P100 data. In neither of the cases the value of the statistic is large enough for a statistically significant rejection of $H_0$. If $r=1$, then the value is $-0.27$, which corresponds to approximately the $0.78$ quantile of the asymptotic distribution shown in Figure \ref{airy_density}. For $r=2,3$ the values are closer to the right tails of the distribution, but still below the (one-sided) $0.95$ quantiles. Hence, we do not see evidence towards the presence of cointegration in S$\&$P100 stock prices for the last 10 years.

\section{Extensions}\label{sec_extensions}

Let us describe possible extensions and modifications of our results. In Subsection \ref{Section_nonGauss} we consider non-Gaussian errors $\eps_t$. Subsection \ref{Section_no_trend} looks at the model without an intercept $\mu$ (linear trend in $X_t$) and considers the effect of de-trending vs.~no de-trending in such setting. Subsection \ref{section_vark} investigates the performance of our test when the true process follows higher order of autoregression. Finally, in Subsection \ref{Section_white} we discuss testing the hypothesis $\Pi=-\1_N$ using the same approach as for $\Pi=0$. 

\subsection{Non-gaussian errors}\label{Section_nonGauss}

The result of Theorem \ref{theorem_J_stat} is obtained under the assumption that the errors $\eps_t$ in Eq.~\eqref{var_1} are Gaussian. However, we believe that it should be possible to remove this restriction and it is reasonable to expect that the very same statement should hold for any (independent and identical across time $t$) distribution of $\eps_t$, as long as it has sufficiently many moments. The underlying reason for this belief is the so-called universality phenomenon in the random matrix theory: asymptotic local spectral characteristics of a random matrix are almost independent of the distributions of the matrix elements, see, e.g., \citet{ErdosYau}, \citet{Tao_Vu} for general reviews and \citet{HanPanZhang_2016}, \citet{HanPanYang_2018} for the recent work in contexts of multivariate analysis of variance and canonical correlations. In particular, for the Wigner matrices ($Y+Y^*$, where $Y$ is a square matrix with real i.i.d.~entries), it is known that the asymptotic behavior of the largest eigenvalues depends only on the first two moments (expectation and variance) of the distribution of an individual matrix element. Note, however, that our asymptotic result in Theorem \ref{theorem_J_stat} holds for any choice of the first two moment of Gaussian noise $\eps_t$: in Eq.~\eqref{var_1} the covariance matrix $\Lambda$ is arbitrary and any shift in expectation can be absorbed into the parameter $\mu$. Hence, we conjecture that the result of Theorem \ref{theorem_J_stat} would hold for any distribution of $\eps_t$, as long as it is sufficiently well-behaved.

In order to test this conjecture we made simulations for the case when elements of $\eps_t$ are non-gaussian, but i.i.d.~across both $i$ and $t$ (corresponding to a diagonal covariance matrix $\Lambda$). We ran Monte-Carlo simulations for three different distributions for $\eps_{it}$: uniform on the interval $[0,1]$, uniform on $3$ points $\{1,2,3\}$, and the product of two independent $\mathcal{N}(0,1)$ random variables. In each case for $T=900$, $N=300$, and small values of $r$, we do not see any significant changes in the distribution of $\sum_{i=1}^r \ln(1-\lambda_i)$ from the limit in Theorem \ref{theorem_J_stat}. However, things go differently when the distribution has heavy tails. In the forth experiment we chose $\eps_{it}$ to be Cauchy-distributed, and then the distribution of $\sum_{i=1}^r \ln(1-\lambda_i)$ dramatically changed from what we saw in the Gaussian case. Hence, we conclude, that the existence of at least some number of moments of the errors $\{\eps_t\}$ should be necessary for the validity of the conjecture. Rigorous proof of the conjecture remains an important problem for the future research.

\subsection{Model without trend}\label{Section_no_trend}

One of the important steps in our testing procedure is de-trending the data. Moreover, the exact form of the de-trending that we use (we subtract the slope of the line which connects $X_0$ at $t=0$ and $X_T$ at $t=T$) is an ingredient substantially used in the proof of Theorem \ref{theorem_J_stat}. Yet, we expect that this is just a technical artifact and statements similar to Theorem \ref{theorem_J_stat} should also be true with other forms of de-trending or in models where this step is not needed at all. Let us provide some evidence.

Our model allows for any linear trend and works even if the true value of $\mu$ is zero. Yet, when one has such prior knowledge it may seem natural to ignore the de-trending and de-meaning steps (Steps $1$ and $2$ in Section \ref{subsection_test}) and proceed without them. In this section we compare tests with and without de-trending for the model without $\mu$:
\begin{equation}\label{var1_no_trend}
\Delta X_t=\Pi X_{t-1}+\eps_t,\qquad t=1,\ldots,T.
\end{equation}
If both de-trending and de-meaning steps are omitted, we get (the small $r$ version of) the classical Johansen test statistic for the model \eqref{var1_no_trend}. If only de-trending is omitted, we get the Johansen test statistic corresponding to our original model \eqref{var_1}. When both de-trending and de-meaning are implemented, we get our procedure described in Section \ref{subsection_test}. To compare the asymptotic properties of those procedures, we perform a Monte Carlo simulation. Results are reported in Figure \ref{MC_trend}.

\begin{figure}[h]
\begin{subfigure}{.32\textwidth}
  \centering
  \includegraphics[width=1.0\linewidth]{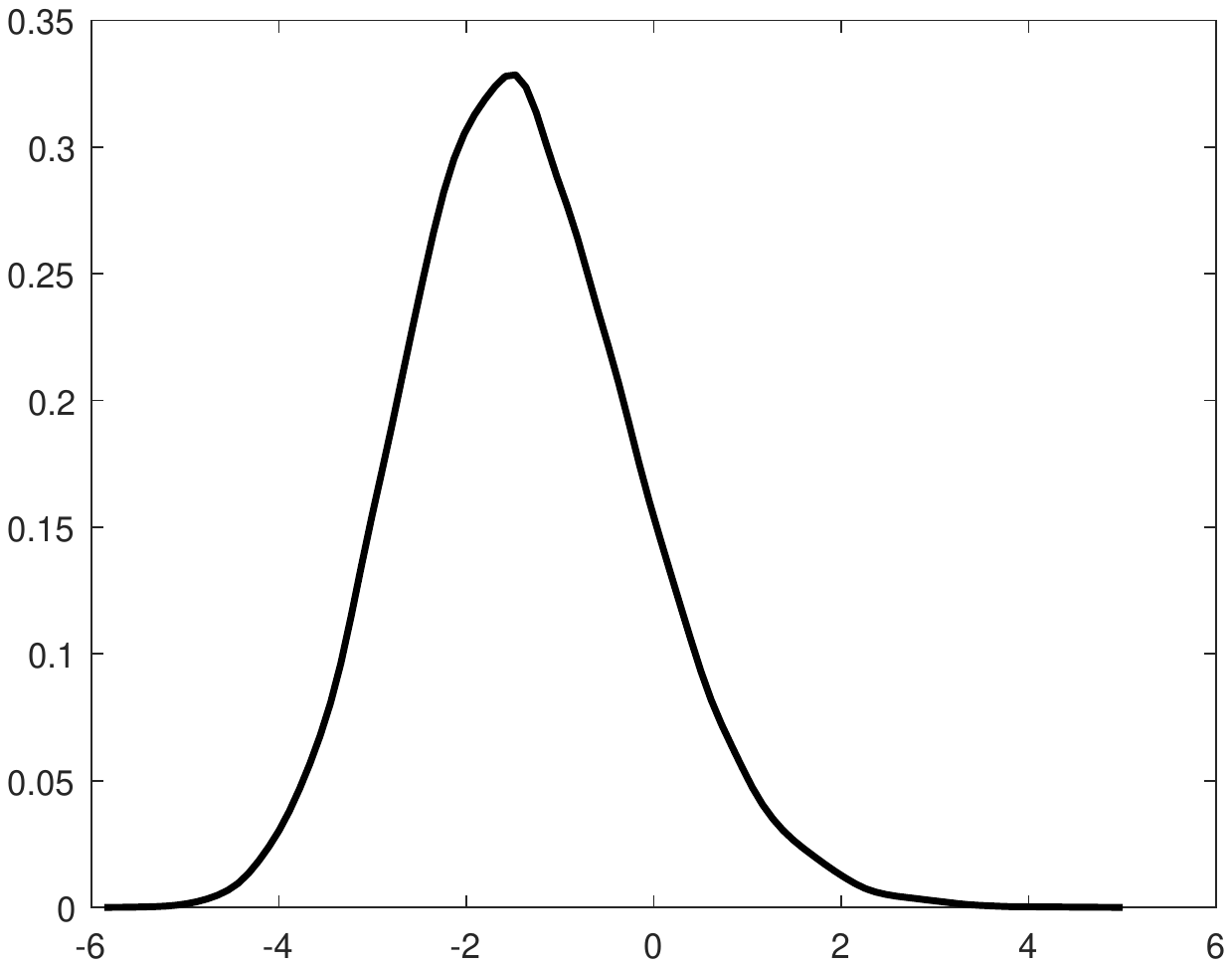}
  \caption{\scriptsize{No de-trending and de-meaning.}}
  \label{trend_sub1}
\end{subfigure}%
\begin{subfigure}{.32\textwidth}
  \centering
  \includegraphics[width=1.0\linewidth]{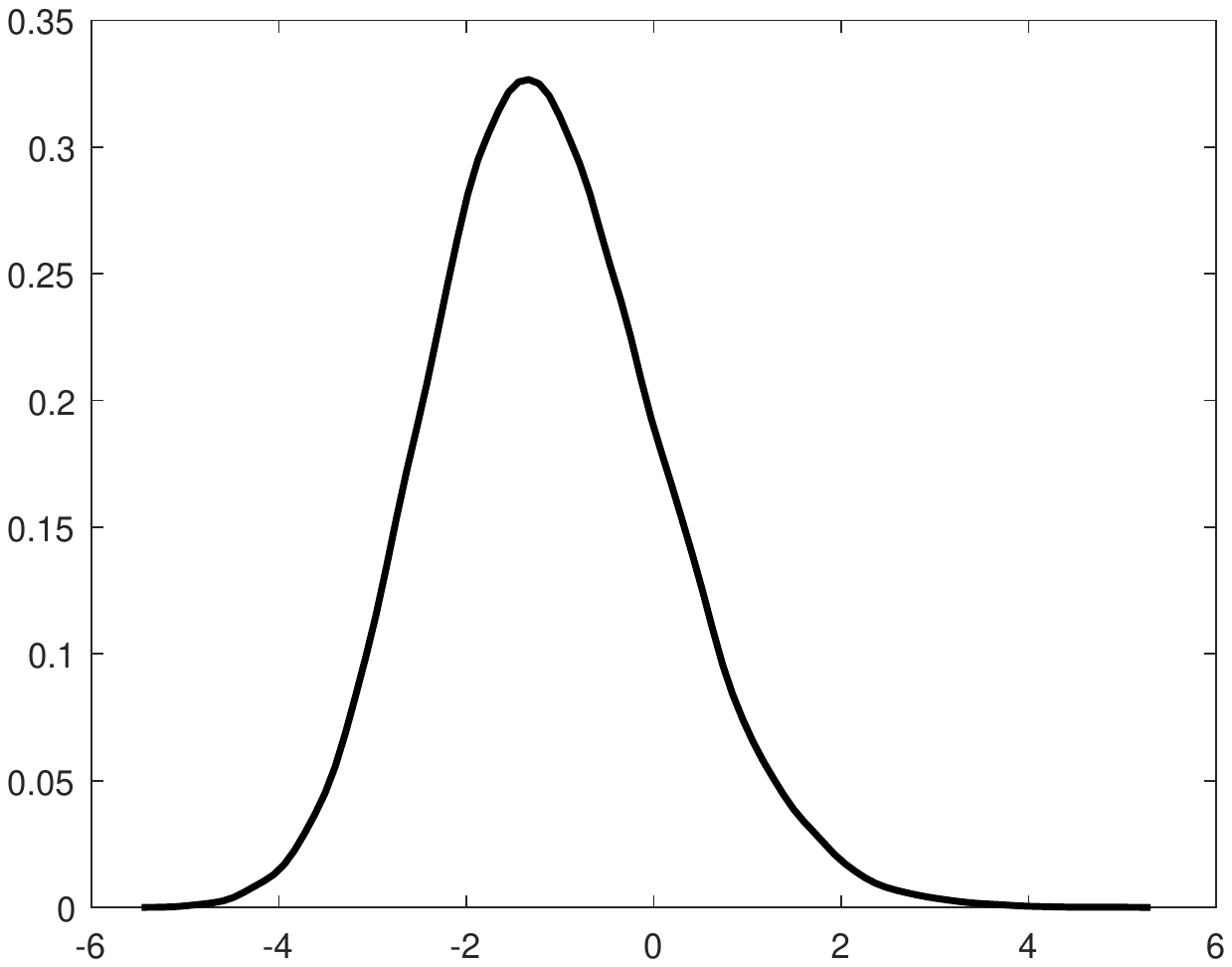}
  \caption{\scriptsize{De-meaning only.}}
  \label{trend_sub2}
\end{subfigure}
\begin{subfigure}{.32\textwidth}
  \centering
  \includegraphics[width=1.0\linewidth]{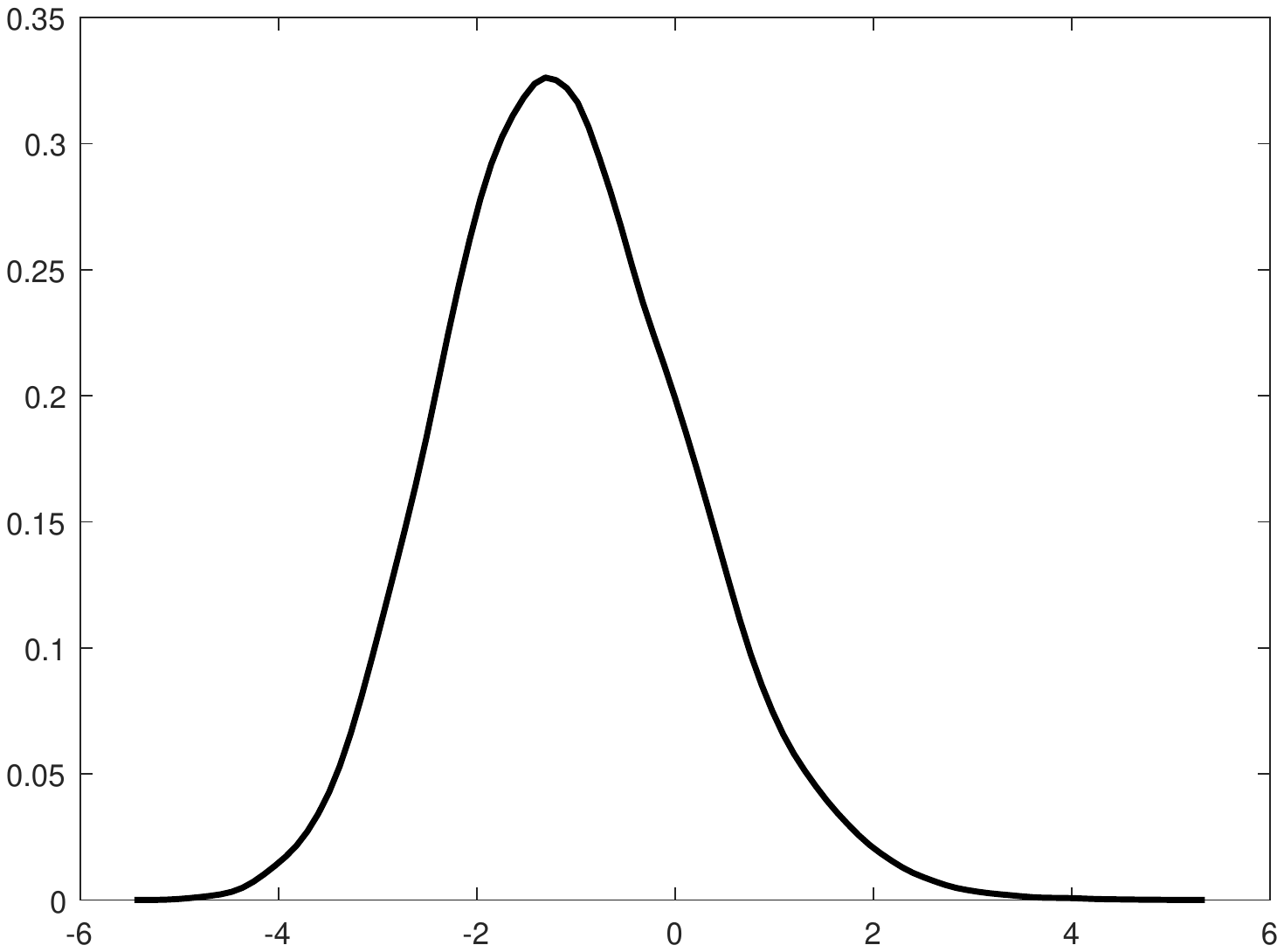}
  \caption{\scriptsize{De-trending and de-meaning.}}
  \label{trend_sub3}
\end{subfigure}
\caption{Asymptotic distribution of the rescaled $\ln(1-\lambda_1)$ under various testing procedures (Eq.~\eqref{eq_statistic_limit} with $r=1$).
Data generating process: $\Delta X_{it}=\eps_{it}$, $\eps_{it}\thicksim$ i.i.d.~$\mathcal{N}(0,1)$, $T=500$, $N=100$, $MC=50,000$ replications.}
\label{MC_trend}
\end{figure}

\begin{figure}[h]
\begin{subfigure}{.45\textwidth}
  \centering
  \includegraphics[width=1.0\linewidth]{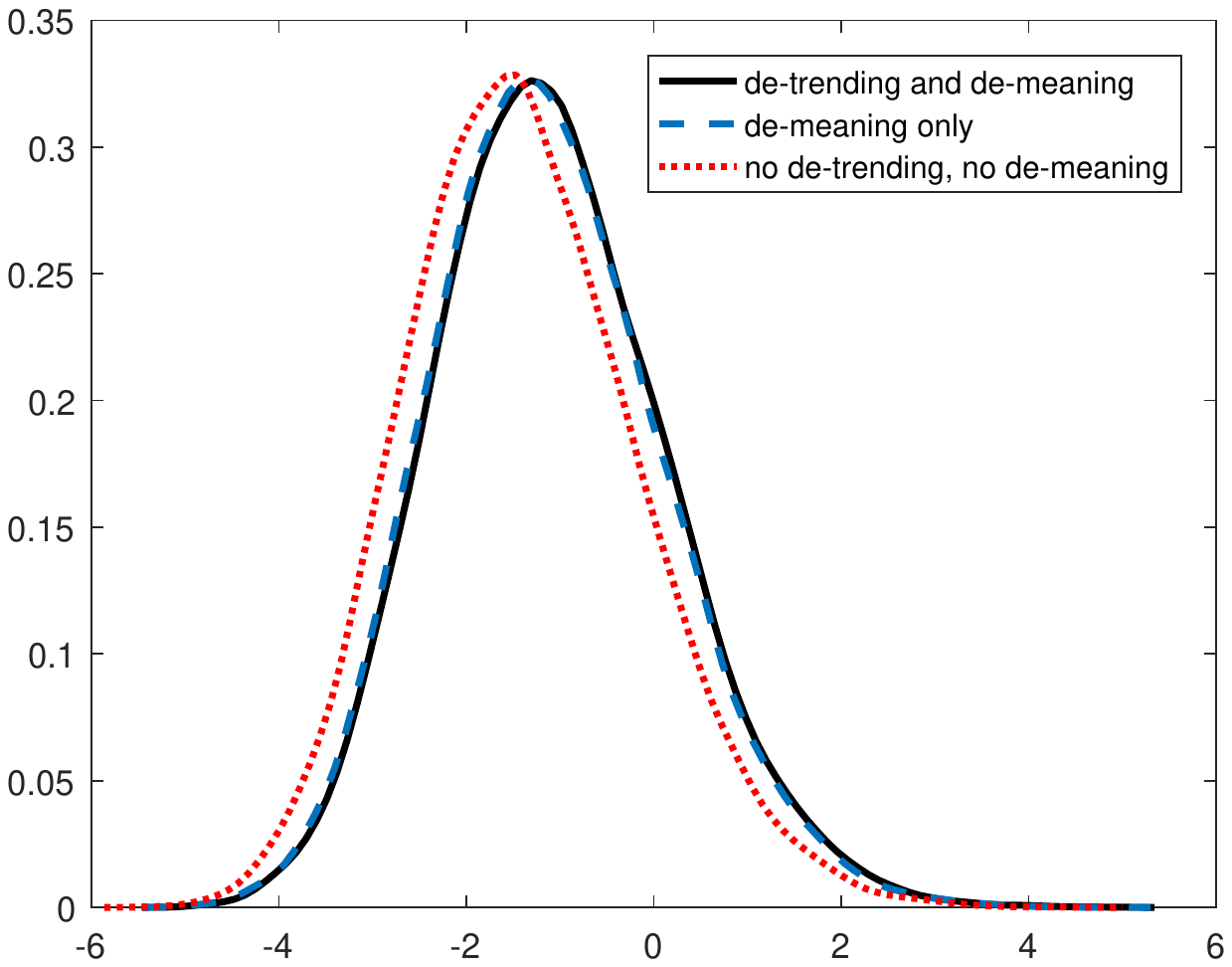}
  \caption{Three curves together.}
  \label{trend_sub_all}
\end{subfigure}%
\begin{subfigure}{.45\textwidth}
  \centering
  \includegraphics[width=1.0\linewidth]{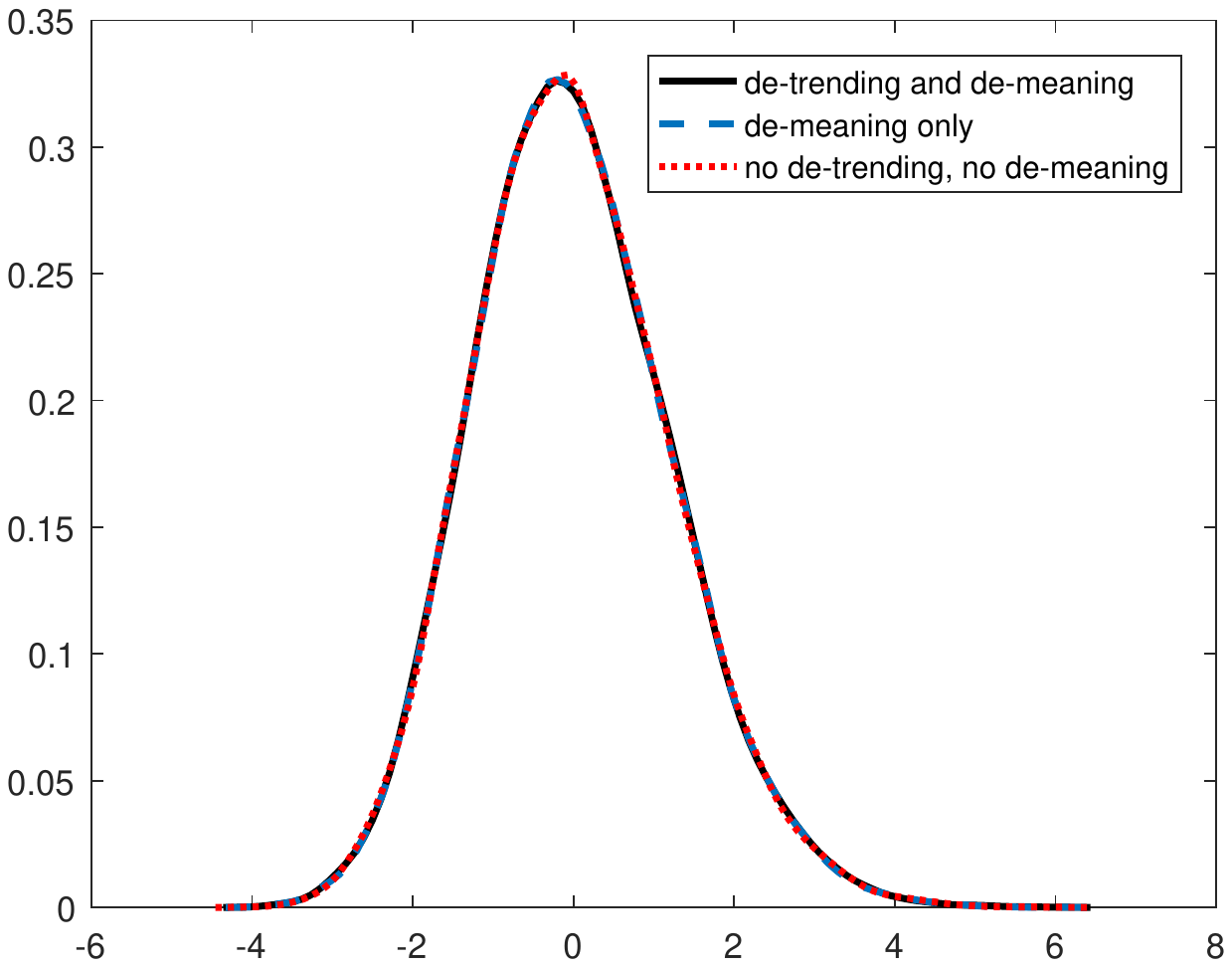}
  \caption{Mean is subtracted from each of curve.}
  \label{trend_sub_all_mean}
\end{subfigure}
\caption{Asymptotic distribution of the rescaled $\ln(1-\lambda_1)$ under various testing procedures with means normalized to zero (Eq.~\eqref{eq_statistic_limit} with $r=1$).
Data generating process: $\Delta X_{it}=\eps_{it}$, $\eps_{it}\thicksim$ i.i.d.~$\mathcal{N}(0,1)$, $T=500$, $N=100$, $MC=50,000$ replications.}
\label{MC_trend_mean}
\end{figure}

As Figure \ref{MC_trend} suggests, the densities have almost identical shape. Figure \ref{MC_trend_mean} plots all three of them together as well as their versions with subtracted mean. As we can see, after subtracting the mean, all three densities are identical. This suggests robustness of the Airy$_{1}$ point process in our asymptotic results of Theorem \ref{theorem_J_stat} (which correspond to Figure \ref{trend_sub3})  and predicts that some modifications, which preserve the limiting distribution, are possible.

Let us emphasize that our limit theorems currently only apply to Figure \ref{trend_sub3}, but not to the settings of Figures \ref{trend_sub1} and \ref{trend_sub2}. The mismatch of the means in Figure \ref{trend_sub_all} makes one suspect that some modifications of the constants in the asymptotic theorems are needed as soon as we start slightly adjusting the setting.

\subsection{Higher order of VAR}\label{section_vark}

In this subsection we discuss the performance of our test when the true data generating process (dgp) is VAR($k$), $k>1$. That is
\begin{equation}\label{var_k_extension}
\Delta X_t=\sum\limits_{i=1}^{k-1}\Gamma_i\Delta X_{t-i}+\Pi X_{t-k}+\mu+\eps_t,\qquad t=1,\ldots,T,
\end{equation}
and the no cointegration situation corresponds to $\Pi\equiv 0$.

Even if $k>1$, we expect to see the Tracy-Widom distribution and marginals of Airy$_1$ process (as in Theorem \ref{theorem_J_stat}) in the asymptotic behavior of the squared sample canonical correlations from Section \ref{subsection_test} under mild restrictions on $\Gamma_i$. The belief is based on the universality intuition of the random matrix theory. However, we do not expect the scalings (such as the coefficients $c_1$ and $c_2$ in Theorem \ref{theorem_J_stat}) to remain the same. The most plain analogy is the dependence of centering and scaling in the classical Central Limit Theorem on the underlying process. Closer to our context is the asymptotic behavior of sample covariance matrices: when the data is i.i.d., the empirical distribution of the eigenvalues of the sample covariance matrix converges to the Marchenko-Pastur law (and the largest eigenvalues concentrate near the right edge of this distribution), while data with general covariance structure leads to much richer limits, see \citet[Chapter 4]{Bai_Silverstein} and references therein. Figuring out (even heuristically) any formulas for the scaling coefficients $c_1$ and $c_2$ as functions of $\Gamma_i$ is a challenging open problem for the future research.

There is an important family of cases where one can hope that the formulas for $c_1$ and $c_2$ from Theorem~\ref{theorem_J_stat} remain valid (perhaps, with minor modifications). This is when $\Gamma_i$ are small and evolution of $X_t$ given by \eqref{var_k_extension} can be treated as a small perturbation of the VAR($1$) process. One way to formalize the ``smallness' of $\Gamma_i$ is by requiring them to be of small rank (cf.\ discussion of rank in Section \ref{Section_discussion}) and of small norm.

In order to investigate the above conjecture, we run Monte Carlo simulations for $k=2$ and $\Gamma_1$ of rank $1$. We consider the null $\Pi\equiv0$ and calculate our test statistic based on $r=1$ under VAR($2$) and VAR($1$) data generating processes and compare their asymptotic distribution. Under VAR($1$) $\Gamma_1\equiv0$, while for VAR($2$) we use $\Gamma_1=0.5E_{11}$ and $\Gamma_1=0.5E_{12}$, where $E_{ij}$ is a matrix which has $1$ on the intersection of row $i$ and column $j$ and $0$ everywhere else; the components of the noise $\eps_{it}$ are  i.i.d.~$\mathcal{N}(0,1)$. The asymptotic distribution of our test statistic with $r=1$ for various dgps is shown in Figure \ref{test_for_VAR2}. We see that the distributions on each panel of Figure \ref{test_for_VAR2} are close to each other. Hence, testing based on Theorem \ref{theorem_J_stat} in such a VAR($2$) setting remains valid. Yet, if we consider more general situation of $\Gamma_1=\theta E_{11}$ or $\Gamma_{1}=\theta E_{12}$, then we observe in simulations (not shown) that the quality of approximations significantly deteriorates as $\theta$ grows to $1$. 

\begin{figure}[h]
\begin{subfigure}{.45\textwidth}
  \centering
  \includegraphics[width=1.0\linewidth]{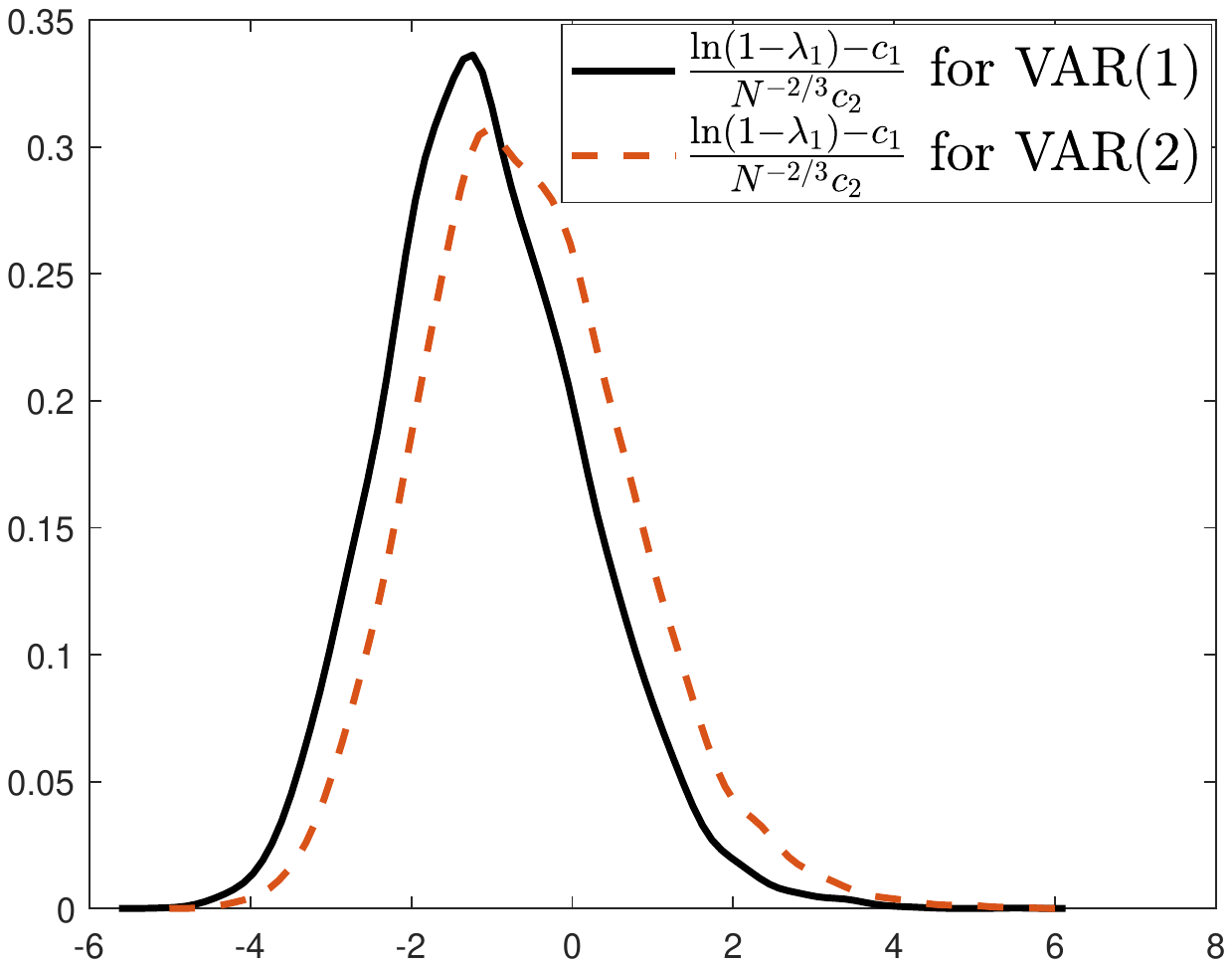}
  \caption{$\Gamma_1=0.5E_{11}$.}
  \label{VAR2E11}
\end{subfigure}%
\begin{subfigure}{.45\textwidth}
  \centering
  \includegraphics[width=1.0\linewidth]{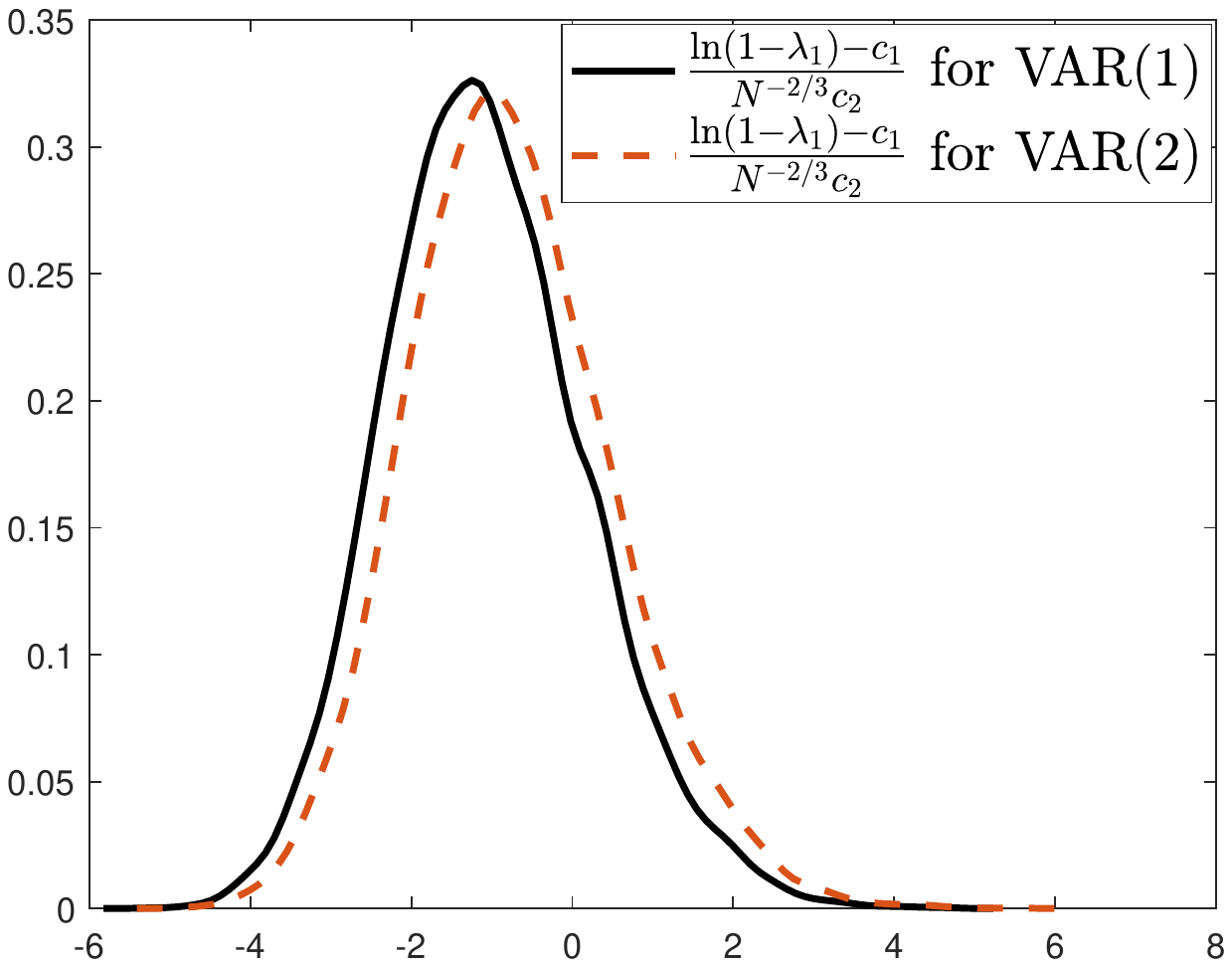}
  \caption{$\Gamma_1=0.5E_{12}$.}
  \label{VAR2E12}
\end{subfigure}
\caption{Asymptotic distribution of the rescaled $\ln(1-\lambda_1)$ (Eq.~\eqref{eq_statistic_limit} with $r=1$) under VAR($1$) with $\Pi=0$ and VAR($2$) with $\Pi=0$ and $\Gamma_1$ of rank $1$.\\ Data generating process: $\eps_{it}\thicksim$ i.i.d.~$\mathcal{N}(0,1)$, $T=500$, $N=100$, $MC=10,000$ replications.}
\label{test_for_VAR2}
\end{figure}

\subsection{Testing for white noise hypothesis in VAR($1$) setting}\label{Section_white}

The main result of Theorem \ref{theorem_J_stat} is a development of a test for the hypothesis $\Pi=0$ in the VAR($1$) model \eqref{var_1}. One could also
try to understand for which other $\Pi$ can the testing be possible. The asymptotic distribution depends on the choice
of $\Pi$, and it is not possible to estimate $\Pi$ consistently in our regime of $T$ and $N$ growing to infinity proportionally. Thus, for general $\Pi$ the problem seems infeasible at this point. However, there is another particular choice of $\Pi$ for
which an approach very similar to Theorem \ref{Theorem_main_approximation} still works:
$\Pi=-\1_N$. Denoting this hypothesis $H_0^{w.n.}$, where $w.n.$ stays for the white noise, the data generating process (Eq.~\eqref{var_1}) becomes:
\begin{equation}
\label{eq_wn_hypothesis}
 H_0^{w.n.}:\qquad X_t=\mu+\eps_t,\qquad t=1,\ldots,T.
\end{equation}
In other words, we are now testing the hypothesis that the time series $X_t$ is independent across time $t$ against various VAR($1$) alternatives. Here is one setup where such testing can be relevant. Suppose that we want to forecast some variable $Y$ and we chose some model for it. After estimating parameters of the model we obtain residuals $X_t$. If we know that $X_t$ are independent, then they are unforecastable and we cannot further improve our forecasting model.
To check the above we can take the residuals $X_t$ and then apply our white noise hypothesis testing procedure.

Let us introduce an adaptation of the Johansen statistic to $H_0^{w.n.}$.
As in Eq.~\eqref{eq_demean}, we use the notation $\mathcal P$ for the de-meaning operator projecting on the hyperplane orthogonal to $(1,1,\dots,1)$.

Following \citet{johansen1988,johansen1991} we are going to use the de-meaned data $X_t \mathcal P$. For the increments $\Delta X_t$ in addition to the conventional de-meaning we use an extra modification: we deal with \emph{cyclic increments} $\Delta^c X_t$ defined as:
$$
 \Delta^c X_t=\begin{cases} X_{t+1}-X_{t}, & t=1,2,\dots,T-1,\\ X_{1}-X_T, & t=T.\end{cases}
$$
While it might seem bizarre to subtract the last observation from the first one, if we recall that our current hypothesis of interest \eqref{eq_wn_hypothesis}  ignores the time ordering, then this becomes less controversial. Note also a shift of index by $1$, as compared to the conventional $\Delta X_t$, which is compensated by the lack of shift $t\to t-1$ in $X_t$, as compared to Eq.~\eqref{eq_detrending}.  Our choice of definition of $\Delta^c$ is important for the following precise asymptotic results. As in Section \ref{Section_setting}, the conventional Johansen statistic should be thought of as a finite rank perturbation of the modified version that we now introduce.
\begin{equation}
\label{eq_modified_Joh_matrices_wn}
 S_{00}^{w.n.}=\Delta^c X \mathcal P (\Delta^c X)^{*},\quad S_{01}^{w.n.}=\Delta^c X \mathcal P X^*, \quad S_{10}^{w.n.}=X \mathcal P (\Delta^c X)^*,\quad S_{11}^{w.n.}=X \mathcal P X^*,
\end{equation}
We further define $N$ numbers $\lambda_1\ge \lambda_2\ge \dots\ge \lambda_N$ as $N$ roots to the equation
\begin{equation}
\label{eq_Johansen_equation_wn}
 \det\bigl(S_{10}^{w.n.} (S_{00}^{w.n.})^{-1} S_{01}^{w.n.}-\lambda S_{11}^{w.n.}\bigr)=0.
\end{equation}
Equivalently, $\{\lambda_i\}$ are eigenvalues of $S_{10}^{w.n.} (S_{00}^{w.n.})^{-1} S_{01}^{w.n.}(S_{11}^{w.n.})^{-1}$.
\begin{theorem}
\label{Theorem_white_noise_approximation}
Suppose that $T,N\to\infty$ in such a way that $T>2N$ and the ratio $T/N$ remains bounded. Under the hypothesis  $H_0^{w.n.}$ one can couple (i.e.~define on the same probability space) the eigenvalues $\lambda_1\ge \lambda_2\ge \dots\ge \lambda_N$ of the matrix $S_{10}^{w.n.} (S_{00}^{w.n.})^{-1} S_{01}^{w.n.}(S_{11}^{w.n.})^{-1}$ and eigenvalues $x_1\ge \dots\ge x_N$ of Jacobi ensemble $\J(N;\frac{T-N-1}{2}, \frac{T-2N}{2})$ in such a way that  for each $\epsilon>0$ we have
 $$
   \lim_{T,N\to\infty} \mathrm{Prob}\left( \max_{1\le i \le N} |\lambda_i-x_i|< \frac{1}{N^{1-\epsilon}}\right)=1.
 $$
\end{theorem}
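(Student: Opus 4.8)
The plan is to mimic the two-step proof of Theorem \ref{Theorem_main_approximation}. Since the $\lambda_i$ of \eqref{eq_Johansen_equation_wn} are squared sample canonical correlations between the de-meaned configurations of $\Delta^c X$ and of $X$, they are invariant under $X\mapsto\Lambda^{-1/2}X$ and under adding a constant vector to every column (which is annihilated by $\mathcal P$). Hence under $H_0^{w.n.}$ of \eqref{eq_wn_hypothesis} we may take $\mu=0$ and $\Lambda=\1_N$, so that $X$ is an $N\times T$ matrix of i.i.d.\ $\mathcal{N}(0,1)$ entries. Writing $C$ for the $T\times T$ cyclic shift permutation, one has $\Delta^c X=X(C-\1_T)$ straight from the definition; since $C$ commutes with $\mathcal P$ and fixes $(1,\dots,1)$, every matrix in \eqref{eq_modified_Joh_matrices_wn} is built from the de-meaned data $X\mathcal P$ and the orthogonal operator $C|_V$ on $V:=(1,\dots,1)^\perp$, whose spectrum is exactly the set of $T$-th roots of unity other than $1$.

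\textbf{Step (I): an exactly-solvable perturbation.} Replace $C|_V$ by a Haar-distributed random orthogonal operator $U$ on the $(T-1)$-dimensional space $V$ (in the complex case, a Haar unitary; the two cases run in parallel and produce the real, resp.\ complex, Jacobi ensemble). I claim the perturbed $\{\lambda_i\}$ are then distributed as the eigenvalues of $\J\!\left(N;\tfrac{T-N-1}{2},\tfrac{T-2N}{2}\right)$. First, the law of $\{\lambda_i\}$ depends on $U$ only through its spectrum: conjugating $U\mapsto VUV^{*}$ is absorbed by the distribution-preserving substitution $X\mathcal P\mapsto(X\mathcal P)V$. By the same rotational invariance we may condition on $\mathrm{rowspace}(X\mathcal P)=\mathrm{span}(e_1,\dots,e_N)$, after which a short computation reduces $\{\lambda_i\}$ to the eigenvalues of $P^{*}(P+P^{*})^{-1}P$ with $P=\1_N-M$, where $M$ is the top-left $N\times N$ corner block of $U$. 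The law of such a corner block is classical: $M$ has density on $N\times N$ matrices proportional to $\det(\1_N-M^{*}M)^{(T-2N-2)/2}$ (this needs $T>2N$). Pushing this density forward through $M\mapsto\{\lambda_i\}$ is a matrix integral in the spirit of \citet{Hua} and of the computation behind Theorem \ref{Theorem_main_approximation}, which I expect to return exactly the Jacobi eigenvalue density \eqref{eq_Jacobi_eig} with the stated parameters. (Sanity check: for $N=1$, $\lambda_1=\tfrac12(1-M_{11})$ with $M_{11}$ a single entry of a Haar $O(T-1)$ matrix, density $\propto(1-M_{11}^{2})^{(T-4)/2}$, which transforms into the $\mathrm{Beta}\!\left(\tfrac{T-2}{2},\tfrac{T-2}{2}\right)$ law predicted by $p=q=\tfrac{T-2}{2}$.)

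\textbf{Step (II): removing the perturbation.} Since the building block here is the very same cyclic shift appearing in Theorem \ref{Theorem_main_approximation}, this step reuses the rigidity input of that proof almost verbatim. Couple the two operators by applying a common Haar rotation to, on one side, the deterministic roots-of-unity spectrum of $C|_V$ and, on the other, a Haar-distributed spectrum, with the two spectra matched in increasing order (legitimate since $\{\lambda_i\}$ sees the operator only through its spectrum); rigidity of eigenvalues of random orthogonal/unitary matrices then gives $\|U-C|_V\|=O(N^{\epsilon-1})$ with probability tending to $1$, for every $\epsilon>0$. A stability analysis of the generalized eigenvalue problem \eqref{eq_Johansen_equation_wn} — which must account for the fact that $S_{00}^{w.n.}$ degenerates along the low-frequency mode of $C|_V$, exactly as in Theorem \ref{Theorem_main_approximation} — then transfers this to $\max_{1\le i\le N}|\lambda_i-x_i|<N^{\epsilon-1}$ with high probability, which is the assertion.

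\textbf{Expected main obstacle.} The genuinely new work, and the hard part, is the matrix integral of Step (I): showing that the nonlinear, \emph{non-orthogonal} twist by $U^{*}-\1$ still collapses the problem onto the Jacobi ensemble, and pinning the first parameter at $\tfrac{T-N-1}{2}$ (rather than a naive guess such as $\tfrac12$) through the explicit corner-block computation and its Jacobian, carried out simultaneously for the orthogonal and the unitary group. A minor technicality is that $U-\1$ is singular on the $SO$-component when $T-1$ is odd; this is a negligible event and does not affect the almost sure coupling, but it must be addressed. Step (II), by contrast, should follow the template of Theorem \ref{Theorem_main_approximation} with only cosmetic changes.
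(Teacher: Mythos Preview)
Your proposal is correct and follows essentially the same route as the paper: Step~(I) is the combination of Proposition~\ref{Prop_gaussian_rotation_white} (the Gaussian rotation reducing the $\lambda_i$ to the eigenvalues of the corner expression---note that $P^*(P+P^*)^{-1}P$ and $P(P+P^*)^{-1}P^*$ are in fact the \emph{same} matrix, both equal to $P-P(P+P^*)^{-1}P$ when $P+P^*$ is written as $S$) with Theorem~\ref{Theorem_Var_0} (the matrix integral you flag as the hard part), and Step~(II) is Proposition~\ref{Proposition_H1_error_white}, which the paper likewise declares ``almost identical'' to the argument behind Theorem~\ref{Theorem_main_approximation}. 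One small correction: for $T-1$ odd every $U\in SO(T-1)$ has $1$ as an eigenvalue, so $U-\1$ is singular always rather than on a negligible event, but this is harmless because only the $N\times N$ corner enters and the relevant inverse $(2\1_N+[U]_{NN}+[U]_{NN}^*)^{-1}$ is controlled by the analogue of Lemma~\ref{Lemma_inv_bounded}.
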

The proof of Theorem \ref{Theorem_white_noise_approximation} follows a similar strategy as Theorem \ref{Theorem_main_approximation} and we refer to Section \ref{Section_white_proof} in Appendix for details; in particular, the proofs rely on yet another novel appearance of the Jacobi ensemble.

The remaining straightforward step to obtain the asymptotics of various statistics built on  the eigenvalues $\{\lambda_i\}$ is to combine Theorem \ref{Theorem_white_noise_approximation} with asymptotic results for the Jacobi ensemble presented in Section \ref{Section_Jacobi}.  This is in the spirit of Theorem \ref{theorem_J_stat}.

Note that the hypothesis $H_0^{w.n.}$ implies the maximal amount of cointegrating relationships: each of the $N$ components of $X_t$ is already stationary. A reasonable alternative hypothesis $H_1$ is the presence of $N-r$ cointegrating relationships. For simplicity, let us concentrate on the case $r=1$. Then the alternative can be also interpreted as a presence of a single growing factor. In this situation we expect the \emph{smallest} eigenvalue $\lambda_N$ to be a good test statistic. We see in numerical simulations that $\lambda_N$ is bounded away from $0$ under $H_0^{w.n.}$. It can also be formally proved by combining Theorem \ref{Theorem_white_noise_approximation} with asymptotics of the Jacobi ensemble from Section \ref{Section_Jacobi}.  Thus, if $\lambda_N$ is close to $0$, then we are able to reject $H_0^{w.n.}$. The same simulations indicate that $\lambda_N$ starts to be close to $0$ when there are at most $N-1$ cointegrating relationships. Hence, the test based on $\lambda_N$ should have a good asymptotic power. We leave rigorous justifications of this observation till future research, and for now only mention the following heuristics: the stationary linear combinations of $X_t$ are strongly correlated with the same linear combinations of $\Delta^c X_t$; on the other hand, the growing linear combinations of $X_t$ have very weak correlation with the same linear combinations of $\Delta^c X_t$ (cf.~correlations of a one dimensional random walk with its increments). Hence, if the latter are present, the smallest canonical correlations of $X \mathcal P$ and $\Delta^c X\mathcal P$, which coincide with the eigenvalues of the matrix $S_{10}^{w.n.} (S_{00}^{w.n.})^{-1} S_{01}^{w.n.} (S_{11}^{w.n.})^{-1}$, should become small leading to close to $0$ value of $\lambda_N$.

\section{Conclusion}
\label{Section_conclusion}

The paper presents a cointegration test which has desirable empirical size when $N$ and $T$ are of the same magnitude. To our knowledge, this is a first paper which constructs and analyzes asymptotic properties of a test that does not suffer from significant distortions (such as over-rejection) for comparable $N$ and $T$. The test is based on the Johansen LR test and incorporates some additional steps. First, our procedure reinforces the importance of de-trending in cointegration testing. It turns out, that de-trending is crucial for deriving desirable asymptotic properties. (E.g., only after de-trending one can rewrite the lagged process as a linear function of its first differences.) Second, our asymptotic results reveal and explain an unexpected connection between the Johansen cointegration test and the Jacobi ensemble --- a classical ensemble of the random matrix theory whose previous appearances in statistics include multivariave analysis of variance (MANOVA) and sample canonical correlations for independent sets of data.

On the theoretical side the next step would be to go from null hypothesis of zero cointegration to analyzing the behaviour of our test under $r$ cointegrations. This will allow us to calculate the power of the test, reinforcing our simulational findings in Section \ref{Section_power},  as well as to perform tests of $r$ versus $r+1$ cointegrations.

On the empirical side it would be interesting to apply our test to other data sets beyond what is presented in Section \ref{Section_s&p}. Annual cross-country data provides a natural example of our setting where the number of years and countries is comparable.  Another example arises if one considers network-type settings which evolve over time (e.g., as in \citet{bykh}). Data on trade or on foreign direct investment can potentially be non-stationary, especially if we focus on largest and the most active countries. Moreover, although such monthly data is available, for many countries it only covers $~20$ years. Thus, we have $T\approx200$. If we look at directed pairs across $10$ largest countries, this gives us $N=90$ cross-section units, which fits ideally in our setting.
Classical cointegration tests are known to perform poorly in the above settings. However, the asymptotic results of our paper open up a possibility of detecting the presence of cointegration in such time-series data.


\section{Appendix}

\subsection{New matrix models for the Jacobi ensemble}

Recall that our testing procedure relies on the squared sample canonical correlations between two correlated data sets. As we later show in the proofs of Propositions \ref{Prop_gaussian_rotation} and \ref{Prop_gaussian_rotation_white}, an equivalent point of view is that we deal with eigenvalues of a product of two orthogonal projectors $P_1$ and $P_2$, where $P_1$ is a projection on a random $N$-dimensional subspace $\V$ of a $T$--dimensional space and $P_2$ is a projection on $U \V$, where $U$ is a certain deterministic linear operator.

We randomize this problem by replacing $U$ with a \emph{random} operator, whose spectrum is close to $U$. The randomized problem turns out to be exactly solvable --- the eigenvalues of new $P_1 P_2$ coincide with the classical Jacobi ensemble. The goal of this section is to prove this fact. The choices of $U$ and $\V$ depend on the hypothesis that we are testing and, hence, we need several theorems. Throughout this section we are going to deal both with real and complex matrices. According to the customary random matrix theory notation, they are referred to as $\beta=1$ and $\beta=2$ cases, respectively.

In what follows $\1_n$ means the identity matrix in $n$--dimensional space. Sometimes we omit $n$ and write simply $\1$ when the dimension is clear from the context. For a matrix $X$ we let $[X]_{NN}$ to be its top-left $N\times N$ corner. The parameter $\T$ used in this section is related to $T$ of the main text through $\T=T-1$.

Throughout this section we repeatedly change the coordinates in various measures, which produces a factor given by the absolute value of the Jacobian of the transformation. We rely on the computation of the Jacobian of the multiplicative change of variables in the space of matrices. We need three forms of it, where in each of them $Q$ stays for a $n\times n$ matrix:
\begin{itemize}
 \item The map $Z\mapsto Q Z$ on $n\times m$ matrices has the Jacobian
 \begin{equation}
 \label{eq_mult_Jacobian}
 \left| \frac{\partial (Q Z)}{\partial Z}\right| = |\det Q|^{\beta m},
 \end{equation}
  \item The map $Z\mapsto Q Z Q^*$ from the space of $n\times n$ symmetric (Hermitian if $\beta=2$) matrices to itself has the Jacobian
 \begin{equation}
\label{eq_sym_Jacobian}
\left|\frac{\partial (Q Z Q^*)}{\partial Z}\right|=|\det Q|^{\beta n+2-\beta}=\begin{cases} |\det Q|^{2 n},& \beta=2,\\ |\det Q|^{n+1},& \beta=1.\end{cases}
\end{equation}
 \item The map $Z\mapsto Q Z Q^*$ from the space of $n\times n$ skew-symmetric (skew-Hermitian if $\beta=2$) matrices to itself has the Jacobian
 \begin{equation}
\label{eq_skew_Jacobian}
\left|\frac{\partial (Q Z Q^*)}{\partial Z}\right|=|\det Q|^{\beta n+ \beta-2}=\begin{cases} |\det Q|^{2 n},& \beta=2,\\ |\det Q|^{n-1},& \beta=1.\end{cases}
\end{equation}
\end{itemize}
The first identity \eqref{eq_mult_Jacobian} follows from the observation that each column of $Z$ is transformed by linear map $Q$ and there are $m$ such columns. The second and third identities are similar and we refer to   \citet[(1.35)]{forrest} for details.

\smallskip

The first theorem of this section is relevant for the setting of Theorem \ref{Theorem_main_approximation} for VAR($1$).

\begin{theorem}
\label{Theorem_Sum_Jacobi}
Assume $\T\ge 2N$. Let $O$ be a random $\T\times \T$ matrix chosen from the uniform measure on the group of real orthogonal  matrices of determinant $1$ if $\beta=1$ or of complex unitary matrices if $\beta=2$. Define $U=(\1_\T+O)^{-1}$. Then the matrix
\begin{equation}
\label{eq_corner_Var1}
 M=[U]_{NN} ([U^* U]_{NN})^{-1} [U^*]_{NN}
\end{equation}
is distributed as the  $N\times N$ real symmetric (if $\beta=1$) or complex Hermitian (if $\beta=2$)   matrix of density proportional to
\begin{equation}
\label{eq_Var1}
  \det (M)^{\frac{\beta}{2}N+\beta-2} \det(\1_N-M)^{\frac{\beta}{2}(\T-2N+1)-1},\quad 0\le M\le \1_N, \qquad \beta=1,2.
 \end{equation}
 with respect to the Lebesgue measure on symmetric/Hermitian matrices.
\end{theorem}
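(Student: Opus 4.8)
The plan is to trade the Haar average over $O$ for an explicit matrix integral via the Cayley transform, and then to evaluate that integral by a chain of multiplicative changes of variables, reading off the Jacobians from \eqref{eq_mult_Jacobian}--\eqref{eq_skew_Jacobian}. The starting point is the identity $U^*=\1_\T-U$, which follows from $U^*=(\1_\T+O^{-1})^{-1}=O(\1_\T+O)^{-1}=\1_\T-(\1_\T+O)^{-1}$. Consequently $A:=2U-\1_\T=(\1_\T-O)(\1_\T+O)^{-1}$ is skew-symmetric when $\beta=1$ and skew-Hermitian when $\beta=2$, and $U=\tfrac12(\1_\T+A)$. Since $O\mapsto A$ is the Cayley transform, a computation of its Jacobian using \eqref{eq_skew_Jacobian} shows that the push-forward of the uniform measure on $O$ is the measure on skew matrices with density proportional to $\det(\1_\T-A^2)^{-\kappa}$, where $\kappa=\tfrac\beta2(\T+1)-1$ (that is, $(\T-1)/2$ for $\beta=1$ and $\T$ for $\beta=2$).

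Next I would pass to the block decomposition $A=\bigl(\begin{smallmatrix}B&C\\-C^*&A_{22}\end{smallmatrix}\bigr)$ with $B$ of size $N\times N$. Writing $[U^*U]_{NN}=V^*V$ for $V$ the matrix of the first $N$ columns of $U$, one checks that $M$ in \eqref{eq_corner_Var1} equals the top-left $N\times N$ corner $[P_{\mathcal W}]_{NN}$ of the orthogonal projector onto $\mathcal W:=\mathrm{span}(V)$. The columns of $V$ span the same space as the columns of $\bigl(\begin{smallmatrix}\1_N+B\\-C^*\end{smallmatrix}\bigr)$, hence, since $\1_N+B$ is invertible, the same space as those of $\bigl(\begin{smallmatrix}\1_N\\D\end{smallmatrix}\bigr)$ with $D:=-C^*(\1_N+B)^{-1}$. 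Therefore $M=(\1_N+D^*D)^{-1}$; in particular $M$ does not involve $A_{22}$, and everything reduces to finding the law of $D^*D$.

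The technical heart is integrating out $A_{22}$. I would first change variables from $C$ to $D$ at fixed $B$, which by \eqref{eq_mult_Jacobian} multiplies the density by $\det(\1_N-B^2)^{\beta(\T-N)/2}$, and then apply Schur complements to $\det(\1_\T+A)$ and $\det(\1_\T-A)$ to get
\[
\det(\1_\T-A^2)=\det(\1_N-B^2)\,\det\bigl[\1_{\T-N}+A_{22}+G\bigr]\,\det\bigl[\1_{\T-N}-A_{22}+G^*\bigr],\quad G:=D(\1_N-B)D^*.
\]
The affine shift $A_{22}\mapsto A_{22}+DBD^*$ preserves skew-symmetry and turns the last two factors into $\det(R+A_{22})\det(R-A_{22})$ with $R:=\1_{\T-N}+DD^*$; then the substitution $A_{22}=R^{1/2}\widehat A_{22}R^{1/2}$ together with \eqref{eq_skew_Jacobian} peels off the constant $\int\det(\1_{\T-N}-\widehat A_{22}^2)^{-\kappa}\,d\widehat A_{22}$, finite because $\kappa>\tfrac\beta2(\T-N+1)-1$. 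The resulting density of $(B,D)$ factorizes, so $B$ and $D$ are independent and $D$ has density proportional to $\det(\1_{\T-N}+DD^*)^{-\nu}=\det(\1_N+D^*D)^{-\nu}$ with $\nu=\tfrac\beta2(\T+N+1)-1$.

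It then remains to push this law forward to $M$. Passing from $D$ to $Y:=D^*D$ contributes the Wishart-type factor $\det(Y)^{\frac\beta2(\T-2N+1)-1}$ (this is where $\T\ge2N$ is used), and the substitution $M=(\1_N+Y)^{-1}$, whose Jacobian is $\det(M)^{-(\beta N+2-\beta)}$ by \eqref{eq_sym_Jacobian}, yields after collecting exponents exactly the density \eqref{eq_Var1}. Conjugation-invariance of the construction under block-diagonal orthogonal/unitary matrices $\mathrm{diag}(g_1,g_2)$ additionally forces the eigenvectors of $M$ to be Haar-distributed, so $M$ is genuinely a Jacobi ensemble in the sense of Definition \ref{Definition_Jacobi}. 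I expect the main obstacle to be the third step: arranging the Schur-complement identity so that $A_{22}$ enters only through $\det(R\pm A_{22})$ after a single affine shift, and recognizing the leftover $A_{22}$-integral as $R$-independent up to a power of $\det R$. Keeping careful track of all exponents through the four steps, and of the convergence conditions, where $\T\ge2N$ and the precise value of $\kappa$ enter, is the delicate part; the remaining manipulations are routine linear algebra and the Jacobian bookkeeping of \eqref{eq_mult_Jacobian}--\eqref{eq_skew_Jacobian}.
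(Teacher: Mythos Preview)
Your proposal is correct and follows essentially the same route as the paper: Cayley transform, block decomposition of the skew matrix, identification $M=(\1_N+\tilde B\tilde B^*)^{-1}$ with $\tilde B$ the ratio of the off-diagonal block and the top corner, and integration of the remaining blocks via multiplicative changes of variables with Jacobians \eqref{eq_mult_Jacobian}--\eqref{eq_skew_Jacobian}. Your $D$ is precisely $-\tilde B^*$ in the paper's notation, and your two-step handling of the lower-right block (affine shift by $DBD^*$, then conjugation by $R^{1/2}$) is the same as the paper's single substitution $\tilde C=(\1_{\T-N}+\tilde B^*\tilde B)^{-1/2}(C-\tilde B^*A\tilde B)(\1_{\T-N}+\tilde B^*\tilde B)^{-1/2}$, just split in two; the paper also economizes slightly by working with $|\det(\1_\T+\R)|$ rather than the two Schur complements separately.
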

\begin{remark}\label{Remark_N_to_T_transform} Define a $\T\times\T$--dimensional matrix $\mathcal M$ by putting $M$ in $N\times N$ corner of $\mathcal M$ and filling the rest with zeros. Non-zero eigenvalues of $M$ and $\mathcal M$ are the same. Simultaneously, $\mathcal M$ can be identified with $P_1 P_2 P_1$, where $P_1$ is the orthogonal projector on space $\V$ spanned by the first $N$ coordinate vectors and $P_2$ is the projector on $U \V$.
\end{remark}
\begin{remark}
 If $\T<2N$, then the spaces $\V$ and $U \V$ necessarily intersect, and therefore $M$ has deterministic eigenvalues which equal $1$. This should be taken into account when extending \eqref{eq_Var1} to this case and we will not pursue this direction here.
\end{remark}
\begin{proof}[Proof of Theorem \ref{Theorem_Sum_Jacobi}]
 We parameterize the orthogonal (or unitary if $\beta=2$) group  by the means of the Cayley transform. Namely, we set
 $$
  \R=(\1_\T-O)(\1_\T+O)^{-1}=\frac{\1_\T-O}{\1_\T+O},\quad \text{ so that } \quad O=\frac{\1_\T-\R}{\1_\T+\R}.
 $$
 Since $O^{-1}=O^*$, $\R$ is a skew-symmetric $\T\times \T$ matrix, i.e., $\R^*=-\R$. The uniform (Haar) measure on $O$ leads to the following distribution on $\R$, in which we omitted the irrelevant for us normalization constant:
 \begin{equation}
 \label{eq_Cayley_distribution}
  \det(\1_\T-\R^2)^{-\frac{\beta}{2}\T+\frac{2-\beta}{2}}d\R, \quad \beta=1,2,
 \end{equation}
 see, e.g., \citet[(2.55)]{forrest}.
 Further, we have
 $$
  U=\frac{\1_\T+\R}{2},
 $$
 so that
 \begin{equation}
   M=[\1_\T+\R]_{NN} ( [\1_\T-\R^2]_{NN})^{-1} [\1_\T-\R]_{NN}.
 \end{equation}
 We partition $\T=N+(\T-N)$ and write $\1_\T+\R$ in a block form according to this split:
 $$
   \1_\T+\R=\begin{pmatrix} \1_N+A & B \\ -B^* & \1_{\T-N}+C\end{pmatrix},
 $$
 where $A$ is a $N\times N$ skew-symmetric matrix, $C$ is $(\T-N)\times(\T-N)$ skew-symmetric matrix and $B$ is an arbitrary $N\times (\T-N)$ matrix.

 We make a change of variables by introducing $\tilde B$ so that
 $$
  B=(\1_N-A) \tilde B, \quad \text{ and } \quad  B^*= \tilde B^* (\1_N+A).
 $$
 Using \eqref{eq_mult_Jacobian} we compute the Jacobian of the transformation:
 \begin{equation}
 \label{eq_x11}
   \left|\frac{ \partial B}{\partial \tilde B}\right| = \left|\det(\1_N-A)\right|^{\beta(\T-N)}.
 \end{equation}
 Note that
 \begin{multline}
 \label{eq_x17}
  M=(\1_N+A) (\1_{N}-A^2 + B B^*)^{-1} (\1_N-A)\\=(\1_N+A) \bigl( (\1_N-A) (\1_{N}+ \tilde B \tilde B^*) (\1_N+A) \bigr)^{-1} (\1_N-A)=\bigl(\1_{N}+\tilde B \tilde B^*)^{-1}
 \end{multline}

 Using the formula for the determinant of a block matrix
 $$
  \det \begin{pmatrix} \mathbf{A}& \mathbf{B}\\ \mathbf{C}& \mathbf{D}\end{pmatrix}=\det \mathbf{A}\,\cdot\, \det (\mathbf{D} - \mathbf{C} \mathbf{A}^{-1}\mathbf{B}),
 $$
 we also have
 \begin{multline}
 \label{eq_x13}
  \det(\1_\T+\R)=\det\begin{pmatrix} \1_N+A & (\1_N-A)\tilde B \\ -\tilde B^* (\1_N+A) & \1_{\T-N}+C\end{pmatrix}\\=\det(\1_N+A) \det\begin{pmatrix} \1_{N} & (\1_N-A) \tilde B \\ -\tilde B^*  & \1_{\T-N}+C\end{pmatrix}=\det(\1_N+A) \det(\1_{\T-N}+C+\tilde B^*(\1_N-A) \tilde B)
  \\=\det(\1_N+A)\det(\1_{\T-N}+\tilde B^* \tilde B)\det\biggl(\1_{\T-N}+(\1_{\T-N} + \tilde B^* \tilde B)^{-1/2} (C-\tilde B^* A \tilde B) (\1_{\T-N}+ \tilde B^* \tilde B)^{-1/2}\biggr) ,
 \end{multline}
 Next, we introduce
 $$
  \tilde C=  (\1_{\T-N} + \tilde B^* \tilde B)^{-1/2} (C-\tilde B^* A \tilde B) (\1_{\T-N}+ \tilde B^* \tilde B)^{-1/2}
 $$
 and notice that the map $C\mapsto \tilde C$ preserves the space of all skew-symmetric $(\T-N)\times (\T-N)$ matrices. Using \eqref{eq_skew_Jacobian} the map has the Jacobian
 \begin{equation}
 \label{eq_x12}
 \frac{\partial C}{\partial \tilde C}= \det(\1_{\T-N} + \tilde B^* \tilde B)^{\frac{\beta}{2} (\T-N)+ \frac{\beta-2}{2}},
 \end{equation}

 Combining \eqref{eq_x11}, \eqref{eq_x13}, and \eqref{eq_x12}, we rewrite the measure \eqref{eq_Cayley_distribution} as follows:
\begin{multline}
\label{eq_measure_factor}
 \det(\1_\T-\R^2)^{-\frac{\beta}{2} \T+\frac{2-\beta}{2}}\,dA\, dB\, dC =
 \left| \det(\1_\T+\R)^{-\beta \T+2-\beta}\right|\,dA\, dB\, dC \\=
   \left| \det(\1_N+A)\det(\1_{\T-N}+\tilde B^* \tilde B)\det(\1_{\T-N}+\tilde C)  \right|^{-\beta \T+2-\beta}
   \\ \times |\det(\1_N-A)|^{\beta(\T-N)} \cdot  \det(\1_{\T-N} + \tilde B^* \tilde B)^{\frac{\beta}{2} (\T-N)+ \frac{\beta-2}{2}} \, dA\, d\tilde B\, d\tilde C.
\end{multline}
 The key property of \eqref{eq_measure_factor} is that the measure has factorized and projecting onto the $\tilde B$--component is straightforward. We conclude that $\tilde B$ is distributed according to the measure
 \begin{equation}
 \label{eq_x14}
  \det(\1_{\T-N} + \tilde B^* \tilde B)^{-\frac{\beta}{2}(\T+ N)+ \frac{2-\beta}{2}} \, d\tilde B=  \det(\1_{N} + \tilde B \tilde B^*)^{-\frac{\beta}{2}(\T+ N)+ \frac{2-\beta}{2}} \, d\tilde B,
 \end{equation}
 where we used $\det(\1+UW)=\det(\1+WU)$ in the last equality.
 According to \citet[Exercise 3.2.q6]{forrest}, \eqref{eq_x14} implies that the symmetric (or Hermitian if $\beta=2$) non-negative definite $N\times N$ matrix $F= \tilde B \tilde B^*$ has the law
 \begin{equation}
 \label{eq_x15}
     \det(F)^{\frac{\beta}{2}(\T-2N+1)-1}\det(\1_N + F)^{-\frac{\beta}{2}(\T+ N)+ \frac{2-\beta}{2}} dF.
 \end{equation}
We have by \eqref{eq_x17}
 $$
  M=\frac{1}{\1_N+F},\quad F= \frac{1-M}{M}, \quad dM = -\frac{1}{\1_N+F} dF \frac{1}{\1_N+F}.
 $$
 Using \eqref{eq_sym_Jacobian} we have
 \begin{equation}
 \label{eq_x16}
  \left| \frac{\partial M}{\partial F}\right|= \left|\det\left(\frac{1}{\1_N+F}\right)\right|^{\beta N -\beta+2 } =\left| \det M\right|^{\beta N-\beta+2}.
 \end{equation}
 Formulas
 \eqref{eq_x15} and \eqref{eq_x16} imply that the matrix $M$ has the distribution
 \begin{multline*}
   \det(\1_N-M)^{\frac{\beta}{2}(\T-2N+1)-1} \det(M)^{-\frac{\beta}{2}(\T-2N+1)+1}\det(M)^{\frac{\beta}{2} (\T+N)- \frac{2-\beta}{2}}  \det(M)^{-\beta N+\beta-2} dM\\
   =\det(\1_N-M)^{\frac{\beta}{2}(\T-2N+1)-1} \det(M)^{\frac{\beta}{2} N+\beta-2} dM,
    \quad 0\le M \le \1_N. \qedhere
 \end{multline*}
\end{proof}

Our next theorem gives another realisation of the Jacobi ensemble, which is relevant to testing $\Pi=-\1_N$ in VAR($1$) setting.

\begin{theorem} \label{Theorem_Var_0} Assume $\T\ge 2N$. Let $O$ be a random $\T\times \T$ real matrix chosen from the uniform measure on the group of orthogonal matrices with determinant $1$ if $\beta=1$ or of complex unitary matrices if $\beta=2$.  Let $A$ be $N\times N$ top-left corner of $O$. Then the matrix
$$
 M=(\1_N+A) (2\1_N+A+A^*)^{-1} (\1_N+A^*)
$$
is distributed as $N\times N$ real symmetric (if $\beta=1$) or complex Hermitian (if $\beta=2$) matrix of density proportional to
\begin{equation}
\label{eq_Jacobi_ensemble}
  \det (M)^{\frac{\beta}{2}(\T-N)+\beta-2} \det(\1_N-M)^{\frac{\beta}{2}(\T-2N+1)-1},\quad 0\le M\le \1_N, \qquad \beta=1,2.
 \end{equation}
 with respect to the Lebesgue measure on symmetric/Hermitian matrices.
\end{theorem}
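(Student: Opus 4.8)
The plan is to mirror the proof of Theorem~\ref{Theorem_Sum_Jacobi}: parametrize the group by the Cayley transform so that the Haar measure becomes the explicit skew-symmetric (skew-Hermitian if $\beta=2$) density \eqref{eq_Cayley_distribution}, reduce $M$ to a matrix-variate Beta ensemble by multiplicative changes of variables, and then pass to $M$ by the substitution $M\mapsto(\1_N+F)^{-1}$ as before. Concretely, set $\R=(\1_\T-O)(\1_\T+O)^{-1}$, a skew-symmetric/skew-Hermitian matrix with $\1_\T+\R$ always invertible, $\1_\T+O=2(\1_\T+\R)^{-1}$, and law \eqref{eq_Cayley_distribution}. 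Partition $\1_\T+\R$ into blocks of sizes $N$ and $\T-N$ exactly as in the proof of Theorem~\ref{Theorem_Sum_Jacobi}, with diagonal blocks $\1_N+a$, $\1_{\T-N}+c$ ($a,c$ skew) and off-diagonal blocks $b$, $-b^*$. A Schur-complement computation gives $A=[O]_{NN}=\bigl((\1_N-a)-G\bigr)(\1_N+a+G)^{-1}$ with $G:=b(\1_{\T-N}+c)^{-1}b^*$, hence $\1_N+A=2(\1_N+a+G)^{-1}$, $\1_N+A^*=2(\1_N-a+G^*)^{-1}$, and after cancelling common factors
$$
M=(\1_N+A)(2\1_N+A+A^*)^{-1}(\1_N+A^*)=2\bigl(2\1_N+G+G^*\bigr)^{-1}=\bigl(\1_N+\hat b\hat b^*\bigr)^{-1},
$$
where $\hat b:=b(\1_{\T-N}+c)^{-1}$ and $\tfrac12(G+G^*)=b(\1_{\T-N}-c^2)^{-1}b^*=\hat b\hat b^*$, using $\1_{\T-N}-c^2=(\1_{\T-N}+c)^*(\1_{\T-N}+c)$. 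Thus $M$ depends on $(a,b,c)$ only through $\hat b\hat b^*$, and it remains to find the law of $\hat b$.

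Next I would change variables $(a,b,c)\mapsto(a,\hat b,c)$ via $b=\hat b(\1_{\T-N}+c)$, with Jacobian $\lvert\det(\1_{\T-N}+c)\rvert^{\beta N}$ by \eqref{eq_mult_Jacobian}. Using $\det(\1_\T+\R)=\det(\1_{\T-N}+c)\,\det(\1_N+a+G)$ together with $G=\hat b(\1_{\T-N}-c)\hat b^*=\hat b\hat b^*-\hat b c\hat b^*$ and the fact that $a-\hat b c\hat b^*$ is again skew, the shift $a\mapsto\mathfrak s:=a-\hat b c\hat b^*$ replaces $\1_N+a+G$ by $(\1_N+\hat b\hat b^*)+\mathfrak s$; rescaling $\mathfrak s$ by $(\1_N+\hat b\hat b^*)^{1/2}$ and applying \eqref{eq_skew_Jacobian} makes the $a$-integral equal to a constant times $\det(\1_N+\hat b\hat b^*)^{-\frac{\beta}{2}(2\T-N)+\frac{2-\beta}{2}}$ (the leftover integral $\int\lvert\det(\1_N+\mathfrak s')\rvert^{-\beta\T+2-\beta}\,d\mathfrak s'$ converges because $\lvert\det(\1_N+\mathfrak s')\rvert\ge1$ for skew $\mathfrak s'$ while the exponent $-\beta\T+2-\beta$ is more negative than the convergent Haar value $-\beta N+2-\beta$), and, crucially, this no longer depends on $c$. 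The residual $c$-integral $\int\lvert\det(\1_{\T-N}+c)\rvert^{-\beta(\T-N)+2-\beta}\,dc$ is finite since by \eqref{eq_Cayley_distribution} at size $\T-N$ it is proportional to the total mass of the Cayley push-forward of Haar measure on the orthogonal (resp.\ unitary) group of size $\T-N$. Collecting factors, $\hat b$ has density proportional to $\det(\1_N+\hat b\hat b^*)^{-\frac{\beta}{2}(2\T-N)+\frac{2-\beta}{2}}$, and by \citet[Exercise 3.2.q6]{forrest} (applicable because $\T-N\ge N$) the matrix $\hat F:=\hat b\hat b^*$ has density proportional to $\det(\hat F)^{\frac{\beta}{2}(\T-2N+1)-1}\det(\1_N+\hat F)^{-\frac{\beta}{2}(2\T-N)+\frac{2-\beta}{2}}$.

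To finish, $M=(\1_N+\hat F)^{-1}$ gives $\hat F=M^{-1}(\1_N-M)$, so $\det\hat F=\det(M)^{-1}\det(\1_N-M)$, $\det(\1_N+\hat F)=\det(M)^{-1}$, and by \eqref{eq_sym_Jacobian} the map $\hat F\mapsto M$ contributes the factor $\lvert\det M\rvert^{-(\beta N+2-\beta)}$. Substituting and simplifying the exponent of $\det(M)$ (it collapses to $\tfrac{\beta}{2}(\T-N)+\beta-2$) yields the density of $M$ proportional to $\det(M)^{\frac{\beta}{2}(\T-N)+\beta-2}\det(\1_N-M)^{\frac{\beta}{2}(\T-2N+1)-1}$ on $0\le M\le\1_N$, which is \eqref{eq_Jacobi_ensemble}; the support is correct because $\hat F\ge0$ and in fact $\hat F>0$ almost surely (here $\T\ge2N$ is used). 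The main obstacle is the two intermediate steps. The first is the purely algebraic reduction of $M$ to $(\1_N+\hat b\hat b^*)^{-1}$, where the Schur complements and adjoints must be tracked carefully. The second is verifying that the change of variables genuinely factorizes the measure, so that $a$ and $c$ can be integrated out and only $\hat b\hat b^*$ survives; this is the analogue of the key identity \eqref{eq_measure_factor} in the proof of Theorem~\ref{Theorem_Sum_Jacobi}, and the slightly different placement of the auxiliary factor --- a right multiplication $b=\hat b(\1_{\T-N}+c)$ here versus the left multiplication $B=(\1_N-A)\tilde B$ there --- is what produces the different exponents in \eqref{eq_Jacobi_ensemble}.
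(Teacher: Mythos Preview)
Your argument is correct, and it takes a genuinely different route from the paper's own proof. The paper does \emph{not} go back to the Cayley transform of the full $\T\times\T$ matrix; instead it starts from the already-known density $\det(\1_N-A^*A)^{\frac{\beta}{2}(\T-2N+1)-1}\,dA$ for the corner $A=[O]_{NN}$ (quoted from \citet[(3.113)]{forrest}), factorizes $\det(\1_N-A^*A)=\det(\1_N-M)\det(2\1_N+A+A^*)$, and then splits $A$ into its symmetric/skew parts $X=\1_N+\tfrac{A+A^*}{2}$, $Y=\tfrac{A-A^*}{2}$. The key step there is a change of variables $(X,Y)\to(M,\tilde Y)$ with $\tilde Y=M^{-1/2}YM^{-1/2}$; the Jacobian of $X\mapsto M$ is shown to be conjugation-invariant (a separate claim proved at the end), and this invariance is what forces the measure to factorize.

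By contrast, you lift the problem back to the block decomposition of $\1_\T+\R$ exactly as in Theorem~\ref{Theorem_Sum_Jacobi}. The payoff is the clean closed formula $M=(\1_N+\hat b\hat b^*)^{-1}$ with $\hat b=b(\1_{\T-N}+c)^{-1}$, which makes the $a$-independence of $M$ and the subsequent factorization entirely transparent; the Jacobian manipulations are then routine applications of \eqref{eq_mult_Jacobian}--\eqref{eq_skew_Jacobian} rather than requiring the extra invariance claim. The paper's approach is shorter in that it leans on the known corner law and works entirely in $N\times N$ space, whereas yours recovers the corner law implicitly but unifies the two Jacobi theorems under the same mechanism --- with the twist, as you note, that the right multiplication $b=\hat b(\1_{\T-N}+c)$ (versus the left multiplication $B=(\1_N-A)\tilde B$ in Theorem~\ref{Theorem_Sum_Jacobi}) is precisely what shifts the first exponent from $\tfrac{\beta}{2}N+\beta-2$ to $\tfrac{\beta}{2}(\T-N)+\beta-2$.
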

\begin{proof}
 The computation of the law of $A$ is well-known and \citet[(3.113)]{forrest} provides a formula for the density with respect to the Lebesgue measure on all real/complex $N\times N$ matrices. It is
 \begin{equation}
 \label{eq_corner_measure}
  \det(\1_N-A^* A)^{\frac{\beta}{2}(\T-2N+1)-1}\, dA, \quad 0\le  A^* A\le \1_N,
 \end{equation}
 where we omit here and below the normalization constant, which makes the total mass of measure equal to $1$. The matrix $M$ is a function of $A$ and in the rest of the proof we transform the measure \eqref{eq_corner_measure} to make the distribution of this function explicit.

 Our first step is to rewrite \eqref{eq_corner_measure} in terms of $M$. We claim that
 \begin{equation}
 \label{eq_density_factor}
  \det(\1_N-A^* A)=\det(\1_N-M) \det (2\1_N+ A + A^*)
 \end{equation}
 Indeed, we first define
 $$
  M_1=(\1_N+A^*) (\1_N+A) (2\1_N+A+A^*)^{-1}
 $$
and notice that
$$
 \1_N-A^* A= 2\1_N+A+A^*- M_1 (2\1_N+A+A^*) = (\1_N-M_1)(2\1_N+A+A^*).
$$
Hence,
\begin{equation}
\label{eq_x1}
 \det(\1_N-A^* A)=\det(\1_N-M_1)\det(2\1_N+A+A^*)
\end{equation}
and \eqref{eq_density_factor} is obtained from \eqref{eq_x1} by using the  identity $\det(\1-CD)=\det(\1-DC)$.

We conclude that the density \eqref{eq_corner_measure} has the form proportional to
\begin{equation}
\label{eq_corner_measure_form}
\det(\1_N-M)^{\frac{\beta}{2}(\T-2N+1)-1} \det (2\1_N+ A + A^*)^{\frac{\beta}{2}(\T-2N+1)-1}\, dA.
\end{equation}
Note that the first factor has the desired form from the statement of the theorem.

We now split the Euclidean space of all $N\times N$ real (or complex) matrices into the symmetric (or Hermitian) and skew-symmetric (or skew-Hermitian) parts. For that we define
$$
 X=\1_N+\frac{A+A^*}{2},\quad Y= \frac{A-A^*}{2}.
$$

Since $A$ belongs to a unit matrix ball (because $A A^*+B B^*=\1_N$, where $B$ is the top-right $N\times (\T-N)$ corner of $O$), $X$ is a positive-definite symmetric (or Hermitian) matrix, i.e., $X\ge 0$. The relation
$$
 (\1_N+A) (2\1_N+A+A^*)^{-1} (\1_N+A^*)=M
$$
is now rewritten as
\begin{equation}
\label{eq_x2}
(X+Y) (2X)^{-1} (X-Y)=M \quad \text{ or } \quad
 X=2M + Y X^{-1} Y.
\end{equation}
Note that $X^{-1}$ is positive definite, while $Y$ is skew-Hermitian. This implies that $Y X^{-1} Y\le 0$, i.e., $Y X^{-1} Y$ is a negative semi-definite Hermitian matrix. Hence, \eqref{eq_x2} implies
$$
 X\le 2 M,
$$
which means that $2M-X$ is a non-negative definite matrix.

Using \eqref{eq_x2} we make a change of coordinates in the space of matrices
$
 (X,Y)\to (M,Y).
$
The Jacobian of this change of coordinates is
$$
 \det\begin{pmatrix} \frac{\partial M}{\partial X}& \frac{\partial Y}{\partial X}\\ \frac{\partial M}{\partial Y} & \frac{\partial Y}{\partial Y} \end{pmatrix}= \det\begin{pmatrix}  \frac{\partial M}{\partial X}  & 0 \\  \frac{\partial M}{\partial Y}  & \1\end{pmatrix}= \det\left(\frac{\partial M}{\partial X}\right)
$$
Therefore, we need to compute the Jacobian of the map
$$
 X\mapsto M=\tfrac{1}{2}\left(X- Y X^{-1} Y\right),
$$
which maps the Eucledian space of symmetric (Hermitian if $\beta=2$) matrices to itself. Differentiating, and using $d X^{-1}=X^{-1} dX X^{-1}$, we see the matrix identity
\begin{equation}
\label{eq_dM}
 2 dM= dX - Y X^{-1} dX X^{-1} Y= dX + (Y X^{-1}) dX (Y X^{-1})^*,
 \end{equation}
where we used $X=X^*$, $Y=-Y^*$ in the last identity. Hence, the Jacobian of the map $X\mapsto M$ is a function of $Y X^{-1}$ and we denote this function $J( Y X^{-1})$. Therefore, \eqref{eq_corner_measure_form} becomes
\begin{equation}
\label{eq_x5}
 \sim \det(\1_N-M)^{\frac{\beta}{2}(\T-2N+1)-1} \det (X)^{\frac{\beta}{2}(\T-2N+1)-1}\, J^{-1}(Y X^{-1}) \, dM\, dY.
\end{equation}
We now make another change of variables by setting
$$
 \tilde Y=M^{-1/2} Y M^{-1/2}, \qquad Y=M^{1/2} \tilde Y M^{1/2}.
$$
The Jacobian of the map $(M,Y)\to (M,\tilde Y)$ is the same as the Jacobian of the map $Y \to \tilde Y$, which is computed by \eqref{eq_skew_Jacobian} as
\begin{equation}
\left|\frac{\partial \tilde Y}{\partial Y}\right|=|\det M|^{-\frac{\beta}{2} N+ \frac{2-\beta}{2}}=\begin{cases} (\det M)^{-N},& \beta=2,\\ (\det M)^{-(N-1)/2},& \beta=1.\end{cases}
\end{equation}
This converts \eqref{eq_x5} into
\begin{equation}
\label{eq_x6}
  \sim \det(\1_N-M)^{\frac{\beta}{2}(\T-2N+1)-1}  (\det M)^{\frac{\beta}{2} N+ \frac{\beta-2}{2}} \det (X)^{\frac{\beta}{2}(\T-2N+1)-1}\, J^{-1}(Y X^{-1}) \, dM\, d\tilde Y.
\end{equation}
It remains to figure out the factors involving $\det X$ and $ Y X^{-1}$ in the last formula. For that let us introduce the notation
$$
\tilde X= M^{-1/2} X M^{-1/2},
$$
and notice that \eqref{eq_x2} implies
\begin{equation}
\label{eq_x7}
 \tilde X=2\1_N+ \tilde Y \tilde X^{-1} \tilde Y, \qquad 0\le \tilde X\le 2\1_N.
\end{equation}
Solution to \eqref{eq_x7} gives us a function $\tilde X=\tilde X(\tilde Y)$, such that $X=M^{1/2} \tilde X M^{1/2}$.
Hence, we can transform
\begin{equation}
\label{eq_x8}
 \det (X)^{\frac{\beta}{2}(\T-2N+1)-1}= \det (M)^{\frac{\beta}{2}(\T-2N+1)-1} \det (\tilde X)^{\frac{\beta}{2}(\T-2N+1)-1}.
\end{equation}
For $J(Y X^{-1})$ we notice two properties. First,
$$
 Y X^{-1}= M^{1/2} \tilde Y \tilde X^{-1} M^{-1/2}.
$$
In addition, we have the following statement, which we prove later:

{\bf Claim.} For the above Jacobian $J(\cdot)$, we have
\begin{equation}
\label{eq_x9}
 J(A)=J(B^{-1} A B),\qquad\text{whenever } B=B^*.
\end{equation}
Thus, $J(YX^{-1})=J(\tilde Y \tilde X^{-1})=J(f(\tilde Y))$ for a certain function $f$, and
using \eqref{eq_x8} and \eqref{eq_x9} we rewrite \eqref{eq_x6} as
\begin{equation}
  \det(\1_N-M)^{\frac{\beta}{2}(\T-2N+1)-1}  (\det M)^{\frac{\beta}{2} N+ \frac{\beta-2}{2}} \det (M)^{\frac{\beta}{2}(\T-2N+1)-1}\, g(\tilde Y)\, dM\, d\tilde Y
\end{equation}
for a certain function $g(\tilde Y)$. Since the parts involving $\tilde Y$ and $M$ are now decoupled, we can integrate out $\tilde Y$ arriving at the desired expression for the distribution of $M$:
\begin{equation}
  \det(\1_N-M)^{\frac{\beta}{2}(\T-2N+1)-1}   \det (M)^{\frac{\beta}{2}(\T-N)+ \beta-2}\, dM.
\end{equation}
It remains to prove the claim \eqref{eq_x9}. By definition \eqref{eq_dM}, $J(B^{-1} A B)$ is the Jacobian of the linear map
$$
 dX \mapsto \tfrac12dX+ \tfrac12(B^{-1} A B) dX (B A^* B^{-1})
$$
We split this map into a composition of three:
$$
 dX \mapsto B\, dX\, B \mapsto \tfrac12(B\, dX\, B) + \tfrac12A (B\, dX\, B) A^* \mapsto B^{-1} \left[\tfrac12(B\, dX\, B) + \tfrac12A (B\, dX\, B) A^*\right] B^{-1}.
$$
$J(B^{-1} A B)$ is the product of Jacobians of these three maps.
The first and the last maps are inverse to each other (above we used the explicit Jacobian of these maps, but this is not even needed) and their Jacobians cancel. The Jacobian of the middle map is the desired $J(A)$.
\end{proof}

\subsection{Proof of Theorem \ref{Theorem_main_approximation}}
\label{Section_proof_of_main_approx}
The proof of Theorem \ref{Theorem_main_approximation} is split into two steps. First, Proposition \ref{Prop_gaussian_rotation} uses rotation invariance of the Gaussian law to rewrite the  matrix
 $S_{10} S_{00}^{-1} S_{01} S_{11}^{-1}$ as a product of corners of certain matrices, reminiscent of \eqref{eq_corner_Var1}. Next, we show in Proposition \ref{Proposition_H1_error} that replacement of a certain deterministic matrix (in that product) by its random perturbation leads precisely to \eqref{eq_corner_Var1} and simultaneously has controlled effect on the change of eigenvalues.

\medskip

We need to introduce some deterministic $T\times T$ matrices. The summation matrix  $\Phi$ has $1$'s below the diagonal and $0$'s on the diagonal and everywhere above the diagonal:
$$
 \Phi=\begin{pmatrix} 0&0&0&\dots&0\\ 1& 0 &0&\dots &0\\ 1& 1& 0 &\dots & 0 \\ && \ddots \\ 1&1 &\dots & 1 &0\end{pmatrix}.
$$

\begin{definition} $V$ is the $(T-1)$--dimensional hyperplane orthogonal to $(1,1,\dots,1)$.
\end{definition}
   We let $\mathcal P$ be the orthogonal projector on $V$ (see Eq.~\eqref{eq_demean}).  Finally, we set
$$
 \tilde \Phi =\mathcal P \Phi \mathcal P.
$$
We also need the cyclic version of the lead operator $F$ mapping $(x_1,x_2,\dots,x_{T})$ to $(x_{2},x_3, x_4,\dots,x_{T},x_1)$. $V$ is an invariant subspace for $F$ and we denote through $F_V$ the restriction of $F$ onto $V$.

\begin{lemma} \label{Lemma_shift_inversion}
 The operator $\tilde \Phi$ preserves the space $V$. Its restriction onto $V$ coincides with $-(\1_V-F_V)^{-1}$,
where $\1_V$ is the identical operator acting in $V$.
\end{lemma}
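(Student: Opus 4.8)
The plan is as follows. The first assertion is immediate: since $\tilde\Phi=\mathcal P\Phi\mathcal P$ and the range of $\mathcal P$ equals $V$, we have $\tilde\Phi(x)\in V$ for every $x$, so in particular $\tilde\Phi(V)\subseteq V$. For the second assertion I would first record that $\1_V-F_V$ is invertible on $V$: if $(\1_V-F_V)v=0$ for some $v\in V$, then $Fv=v$, so $v$ is a fixed vector of the cyclic shift, hence a multiple of $(1,1,\dots,1)$; being simultaneously orthogonal to $(1,1,\dots,1)$, it must be $0$. It therefore suffices to establish the operator identity $\tilde\Phi|_V\circ(\1_V-F_V)=-\1_V$ on $V$.

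The crux of the argument is the elementary matrix identity
\begin{equation*}
 \Phi(\1_T-F)=-\1_T+\mathbf{1}\,e_1^{*},
\end{equation*}
where $\mathbf{1}=(1,1,\dots,1)^{*}$ and $e_1=(1,0,\dots,0)^{*}$, so that $\mathbf{1}\,e_1^{*}$ is the $T\times T$ matrix whose first column is all ones and whose remaining entries vanish. I would prove this by a direct entrywise computation: one has $\Phi_{ij}=\mathbf 1_{i>j}$, while $(\Phi F)_{ij}=\Phi_{i,j-1}=\mathbf 1_{i\ge j}$ for $j\ge 2$ and $(\Phi F)_{i1}=\Phi_{iT}=0$; subtracting, the $(i,j)$ entry of $\Phi-\Phi F$ equals $-1$ when $i=j\ge 2$, equals $1$ when $j=1$ and $i\ge 2$, and equals $0$ in all other cases (in particular at $(1,1)$), which is precisely $-\1_T+\mathbf{1}\,e_1^{*}$.

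Granting this identity, I would conclude as follows. Fix $v\in V$. Since $F$ leaves $V$ invariant, $(\1_T-F)v\in V$ and hence $\mathcal P(\1_T-F)v=(\1_T-F)v$; applying the identity above, together with $\mathcal P v=v$ and $\mathcal P\mathbf{1}=0$,
\begin{equation*}
 \tilde\Phi(\1_V-F_V)v=\mathcal P\Phi\mathcal P(\1_T-F)v=\mathcal P\Phi(\1_T-F)v=\mathcal P\bigl(-v+(e_1^{*}v)\,\mathbf{1}\bigr)=-v .
\end{equation*}
This is exactly $\tilde\Phi|_V\circ(\1_V-F_V)=-\1_V$, and combined with the invertibility of $\1_V-F_V$ it gives $\tilde\Phi|_V=-(\1_V-F_V)^{-1}$. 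I do not anticipate a genuine obstacle here: the computation is routine, and the only point requiring care is pinning down the index conventions for $\Phi$ (ones strictly below the diagonal, zeros on and above it) and for the cyclic lead operator $F$ so that the signs in the entrywise identity come out correctly.
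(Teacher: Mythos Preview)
Your proof is correct. The paper's argument is also a direct verification, but it proceeds in the opposite order of composition: it takes the basis $(e_i-e_{i+1})_{i=1}^{T-1}$ of $V$, computes $\tilde\Phi(e_i-e_{i+1})=\mathcal P e_{i+1}$, and then checks that $(\1-F)\mathcal P e_{i+1}=e_{i+1}-e_i$, i.e.\ it verifies $(\1_V-F_V)\circ\tilde\Phi|_V=-\1_V$ on each basis vector. You instead establish the global matrix identity $\Phi(\1_T-F)=-\1_T+\mathbf{1}e_1^{*}$ and sandwich by $\mathcal P$ to obtain $\tilde\Phi|_V\circ(\1_V-F_V)=-\1_V$. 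The two routes are equally elementary; yours has the small advantage of making the invertibility of $\1_V-F_V$ explicit rather than relying on finite-dimensionality, while the paper's basis computation avoids the need to state a separate matrix identity.
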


\begin{proof}[Proof of Lemma \ref{Lemma_shift_inversion}]
 Let $e_i$, $i=1,\dots,T$ denote the $i$th coordinate vector. Then $(e_i-e_{i+1})_{i=1}^{T-1}$ gives a linear basis of $V$. Since $\Phi e_i=e_{i+1}+e_{i+2}+\dots+e_{T}$, we have
 $$
  \tilde \Phi (e_i-e_{i+1})= \mathcal P \Phi (e_i-e_{i+1})= \mathcal P e_{i+1}=e_{i+1}-\frac{1}{T} \sum_{j=1}^{T} e_j.
 $$
 Applying $\1-F$ to the last vector and noticing that $F e_{i+1}=e_i$, we conclude that
 $$
  (\1-F) \tilde \Phi (e_i-e_{i+1})=e_{i+1}-e_i. \qedhere
 $$
\end{proof}

Let us introduce one more notation. Choose some orthonormal basis $\tilde e_1,\tilde e_2,\dots,\tilde e_{T-1}$ of $V$, e.g., this can be an orthogonalization of $e_1-e_2$, $e_2-e_3$, \dots, $e_{T-1}-e_{T}$ (but the exact choice is irrelevant for the following theorems).
\begin{definition}
For an operator $A$ acting in $V$ we set $[ A ]_{NN}^V$ to be the $N\times N$ corner of the operator $A$, taken in the basis $\{\tilde e_i\}$.
\end{definition}

Next, we take a uniformly-random orthogonal (or unitary if $\beta=2$) operator $\tilde O$ acting in $(T-1)$--dimensional space $V$ and define an operator $\tilde U$ acting in $V$:
$$
 \tilde U= (\1_V- \tilde O F_V \tilde O^*)^{-1}.
$$
 Let us introduce a symmetric (or Hermitian if $\beta=2$) $N\times N$ matrix $\tilde M$,
\begin{equation}
\label{eq_tilde_M}
 \tilde M= [\tilde U]_{NN}^V ([\tilde U^* \tilde U]_{NN}^V)^{-1} [\tilde U^*]_{NN}^V,
\end{equation}
\begin{proposition}
\label{Prop_gaussian_rotation}
Choose an arbitrary positive definite covariance matrix $\Lambda$. Let $\eps$ be $N\times T$ matrix of random variables (real if $\beta=1$ and complex if $\beta=2$), such that $T$ columns of $\eps$ are i.i.d., and each of them is an $N$-dimensional mean zero Gaussian vector with covariance $\Lambda$. Fix arbitrary $N$--dimensional vectors $X_0$ and $\mu$. Define the $N\times T$ matrix $X=(X_1,X_2,\dots,X_T)$ as in Eq.~\eqref{var_1} via recurrence
$$
 X_t=X_{t-1}+\mu+\eps_t, \quad t=1,\dots,T.
$$
Further set $\Delta X_t=X_t-X_{t-1}$ and $\tilde X_t= X_{t-1}-\frac{t-1}{T} (X_T-X_0)$, $t=1,\dots,T$.
 Define $N\times N$ matrices:
 $$
   S_{00}=   \Delta X \mathcal P \Delta X^*,\quad S_{01}=\Delta X \mathcal P \tilde X^* ,\quad S_{10}=S_{01}^*=\tilde X \mathcal P \Delta X^*, \quad S_{11} = \tilde X\mathcal P \tilde X^*.
 $$
 Then the eigenvalues of the matrix  $S_{10} S_{00}^{-1} S_{01} S_{11}^{-1}$ have the same distribution as those of $\tilde M$ in \eqref{eq_tilde_M}.
\end{proposition}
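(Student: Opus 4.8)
The plan is to strip away the nuisance parameters $\Lambda,\mu,X_0$, recognize the statistic as a squared sample canonical correlation problem between two random subspaces tied together by a deterministic operator, and then use rotational invariance to trade ``random subspace / fixed operator'' for ``fixed subspace / random operator'', at which point the operator becomes exactly $-\tilde U$. \textbf{Step 1 (reductions).} The eigenvalues of $S_{10}S_{00}^{-1}S_{01}S_{11}^{-1}$ depend on neither $\Lambda$, $\mu$ nor $X_0$: writing $\eps=\Lambda^{1/2}Z$ with $Z$ an $N\times T$ matrix of i.i.d.\ standard Gaussians (real if $\beta=1$, complex if $\beta=2$), each $S_{ij}$ equals $\Lambda^{1/2}(\,\text{a matrix built from }Z\,)\Lambda^{1/2}$, so the $\Lambda^{\pm1/2}$ factors telescope and $S_{10}S_{00}^{-1}S_{01}S_{11}^{-1}$ is conjugate by $\Lambda^{1/2}$ to the same expression built from $Z$; moreover $\mu$ and $X_0$ enter $\Delta X$ and $\tilde X$ only through multiples of the all-ones row, which $\mathcal P$ annihilates. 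So I take $\Lambda=\1_N$, $\mu=0$, $X_0=0$. A cumulative-sum computation then gives $\Delta X\,\mathcal P=Z\,\mathcal P$ and $\tilde X\,\mathcal P=ZW\,\mathcal P$, where $W=\Phi^{*}-\tfrac1T\mathbf{1}\,d^{*}$ with $d=(0,1,\dots,T-1)^{*}$ is deterministic; the de-trending is arranged exactly so that (i) the $\mu$- and $X_0$-terms cancel, and (ii) each column of $W$ is orthogonal to $\mathbf{1}$, whence $W\,\mathbb R^{T}\subseteq V$ and $\mathcal P W\mathcal P=\mathcal P\Phi^{*}\mathcal P=\tilde\Phi^{*}$.

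\textbf{Step 2 (canonical correlations).} Let $\Theta$ be the $T\times(T-1)$ isometry whose columns are $\tilde e_1,\dots,\tilde e_{T-1}$, so $\mathcal P=\Theta\Theta^{*}$ and $\Theta^{*}\Theta=\1$. Set $Y:=Z\Theta$, which by orthogonal invariance of the i.i.d.\ Gaussian law is again an $N\times(T-1)$ matrix of i.i.d.\ standard Gaussians, and $D:=\Theta^{*}W\Theta$, the matrix of the operator $\tilde\Phi^{*}|_V$ in the basis $\{\tilde e_i\}$. Because $W\,\mathbb R^{T}\subseteq V$ one gets $Z\,\mathcal P=Y\Theta^{*}$ and $ZW\,\mathcal P=YD\Theta^{*}$, hence $S_{00}=YY^{*}$, $S_{01}=YD^{*}Y^{*}$, $S_{10}=YDY^{*}$, $S_{11}=YDD^{*}Y^{*}$. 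Therefore the eigenvalues of $S_{10}S_{00}^{-1}S_{01}S_{11}^{-1}$ are precisely the squared sample canonical correlations between the data matrices $Y$ and $YD$, i.e.\ (by the classical identification of sample canonical correlations with cosines of principal angles) the squared cosines of the principal angles between the $N$-dimensional subspaces $\mathcal G:=\mathrm{colsp}(Y^{*})$ and $D^{*}\mathcal G=\mathrm{colsp}\bigl((YD)^{*}\bigr)$ of the $(T-1)$-dimensional space. Moreover, by Lemma~\ref{Lemma_shift_inversion}, $D^{*}$ is the matrix of $\tilde\Phi|_V=-(\1_V-F_V)^{-1}$ (and $D$ is in fact real, so $D^{*}=D^{\top}$ in both the $\beta=1$ and $\beta=2$ cases).

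\textbf{Step 3 (the rotational move).} Since $Y$ is i.i.d.\ Gaussian, $\mathcal G=\mathrm{colsp}(Y^{*})$ is Haar on the Grassmannian of $N$-planes, so $\mathcal G\stackrel{d}{=}\tilde O\,\mathcal E_N$ for a Haar-random orthogonal (unitary if $\beta=2$) operator $\tilde O$ on $V$, with $\mathcal E_N:=\mathrm{span}(\tilde e_1,\dots,\tilde e_N)$. Principal angles are invariant under a common orthogonal/unitary transformation, so, applying $\tilde O^{*}$, the eigenvalues of the statistic equal the squared cosines of the principal angles between $\mathcal E_N$ and $\tilde O^{*}D^{*}\tilde O\,\mathcal E_N$. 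Now $\tilde O^{*}D^{*}\tilde O=-(\1_V-\tilde O^{*}F_V\tilde O)^{-1}$, which has the same law as $-(\1_V-\tilde OF_V\tilde O^{*})^{-1}=-\tilde U$ because $\tilde O^{*}\stackrel{d}{=}\tilde O$; and scaling a subspace by $-1$ changes nothing, so these eigenvalues are distributed as the squared cosines of the principal angles between $\mathcal E_N$ and $\tilde U\mathcal E_N$. Finally, with $E$ the $(T-1)\times N$ isometry onto $\mathcal E_N$, the orthogonal projector onto $\tilde U\mathcal E_N$ is $(\tilde UE)\bigl(E^{*}\tilde U^{*}\tilde UE\bigr)^{-1}(\tilde UE)^{*}$, and its compression $E^{*}(\cdot)E$ to $\mathcal E_N$ is exactly $[\tilde U]_{NN}^{V}\bigl([\tilde U^{*}\tilde U]_{NN}^{V}\bigr)^{-1}[\tilde U^{*}]_{NN}^{V}=\tilde M$ of \eqref{eq_tilde_M}, whose eigenvalues are the squared cosines of those same principal angles. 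Chaining Steps 1--3 identifies the law of the eigenvalues of $S_{10}S_{00}^{-1}S_{01}S_{11}^{-1}$ with that of the eigenvalues of $\tilde M$.

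\textbf{Main obstacle.} The laborious part is the algebraic bookkeeping of Steps 1--2: verifying that de-trending together with de-meaning cancels $\mu$ and $X_0$ exactly and leaves the clean deterministic operator $W$ with $\mathcal P W\mathcal P=\tilde\Phi^{*}$; confirming that $Y=Z\Theta$ is i.i.d.\ Gaussian; and running the real ($\beta=1$) and complex ($\beta=2$) cases in parallel (conjugate transposes, orthogonal vs.\ unitary groups --- with $D$ itself real throughout). The one conceptually substantive step, and the source of the eventual exact solvability, is the rotational move in Step 3: trading the deterministic operator $D$ and a Haar-random subspace for the deterministic subspace $\mathcal E_N$ and the Haar-random conjugate $\tilde OD^{*}\tilde O^{*}$, and recognizing, via Lemma~\ref{Lemma_shift_inversion}, that this conjugate is --- up to an irrelevant sign --- precisely the operator $\tilde U$ appearing in the definition of $\tilde M$.
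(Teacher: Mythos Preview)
Your proposal is correct and follows essentially the same route as the paper's proof: eliminate $\Lambda,\mu,X_0$, express the $S_{ij}$ through $\eps$ and the deterministic operator $\tilde\Phi$, identify the eigenvalues as squared canonical correlations (equivalently, products of projectors), and then use rotational invariance of the Gaussian law together with Lemma~\ref{Lemma_shift_inversion} to transfer the randomness from the subspace to the operator, landing on $\tilde U$ and hence $\tilde M$. The only differences are cosmetic --- you pass explicitly to $(T-1)$-dimensional coordinates via the isometry $\Theta$ and argue purely distributionally (invoking $\tilde O^{*}\stackrel{d}{=}\tilde O$), whereas the paper works with operators on $V$ and constructs an explicit coupling between $\eps$ and $\tilde O$ --- but the substance is the same.
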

\begin{proof}
We start by expressing the matrices $S_{ij}$ via $\eps$. Clearly, $\Delta X_t=\mu+\eps_t$. Further,
$$
  \left(\Delta X \mathcal P\right)_t= \mu+\eps_t-\frac{1}{T}\sum_{i=1}^T (\mu+\eps_i), \qquad \Delta X \mathcal P=\eps \mathcal P .
$$
Hence, $S_{00}=\eps \mathcal P \eps^*$. Next,
$$
 \tilde X_t= X_0 + (t-1)\mu+ \sum_{i=1}^{t-1} \eps_i - \frac{t-1}{T} \left(T\mu +\sum_{i=1}^{T} \eps_i\right)= X_0 +  \sum_{i=1}^{t-1} \eps_i - \frac{t-1}{T} \sum_{i=1}^{T} \eps_i
$$
We claim that $\tilde X \mathcal  P= \eps \tilde \Phi^*$. Indeed, $ \tilde X \mathcal P$ coincides with $\dbtilde X \mathcal P $, where
$$
\dbtilde X_t=\tilde X_t- X_0= \sum_{i=1}^{t-1} \eps_i - \frac{t-1}{T} \sum_{i=1}^{T} \eps_i=\sum_{i=1}^{t-1} \left( \eps_i - \frac{1}{T} \sum_{j=1}^{T} \eps_j\right).
$$
Since $\Phi$ is the summation operator, we have $\dbtilde X = ( \Phi ( \eps \mathcal P)^* )^* = \eps \mathcal P \Phi^*$ and the claim is proven as $\tilde \Phi^*=\mathcal P\Phi^*\mathcal P$.

We conclude that
 \begin{equation}
 \label{eq_x27}
   S_{00}=  \eps \mathcal P \eps^*,\quad S_{01}=\eps \tilde\Phi \eps^* ,\quad S_{10}=S_{01}^*=\eps \tilde \Phi^* \eps^*, \quad S_{11} = \eps \tilde \Phi^* \tilde \Phi \eps^*.
 \end{equation}
Next, note that we can (and will) assume without loss of generality that the covariance matrix $\Lambda$ is identical, i.e., all matrix elements of $\eps$ are i.i.d.~random variables $\mathcal{N}(0,1)$ if $\beta=1$ and $\mathcal{N}(0,1)+\ii \mathcal{N}(0,1)$ if $\beta=2$. Indeed, using \eqref{eq_x27}, if we take any non-degenerate $N\times N$ matrix $A$ and replace $\eps$ by $A\eps$, then the  matrix $S_{10} S_{00}^{-1} S_{01} S_{11}^{-1} $ is conjugated by $A$, which keeps eigenvalues unchanged. By choosing appropriate $A$, we can guarantee that the covariance of columns of $A\eps$ is identical, and then rename $A\eps$ as $\eps$.

In the remaining proof we explicitly couple $\eps$ with $\tilde O$ (arising in the definition of $\tilde M$) by constructing $\tilde O$ using randomness coming from $\eps$.

Since for any two rectangular matrices $A$ and $B$ of the same sizes, $A B^*$ and $B^* A$ have the same non-zero eigenvalues,  the eigenvalues of $S_{10} S_{00}^{-1} S_{01} S_{11}^{-1} $ can be identified (using also $\mathcal P \tilde \Phi=\tilde \Phi$) with those of
\begin{equation}
\label{eq_product_proj}
  \bigl( \mathcal P \eps^* \, S_{00}^{-1}\, \eps \mathcal P \bigr) \bigl( \tilde \Phi \eps^* \,  S_{11}^{-1} \, \eps \tilde \Phi^*  \bigr)= P_1 P_2,
\end{equation}
where $P_1$ is the orthogonal projector onto the space spanned by $N$ columns of $\mathcal P \eps^* $ and $P_2$ is the orthogonal projector onto the $N$ columns of $\tilde \Phi \eps^* $. Since, the eigenvalues of $P_1 P_2$ are the same as those of $P_1 P_2 P_1$, and the latter operator acts as $0$ on the orthogonal complement of $V$, we can restrict all the operators to $V$. Due to invariance of the Gaussian law with respect to rotations by orthogonal (or unitary if $\beta=2$) matrices, the columns of $\mathcal P \eps^*$ span a rotationally-invariant $N$--dimensional subspace in $V$. The columns of $\tilde \Phi \eps^*=\tilde \Phi \mathcal P\eps^* $  then span the  $\tilde \Phi$--image of this subspace.

The eigenvalues of $P_1 P_2 P_1$ are preserved when we conjugate each of the projectors with an orthogonal transformation $O$, i.e.~replace $P_1 P_2 P_1$ with $ O P_1 O^* O P_2 O^* O P_1 O^*=OP_1 P_2 P_1 O^*$. In this transformation the spaces to where $P_1$ and $P_2$ are projecting are replaced by their $O$--images. We take $O$ to be a random operator satisfying two conditions:
\begin{itemize}
\item $O$ maps the span of the columns of $\mathcal P \eps^*$ to the subspace spanned by the first $N$ basis vectors, $\tilde e_1,\dots,\tilde e_N$ in $V$;
\item $O$ is distributed as uniformly random orthogonal operator acting in $V$, i.e., $O\stackrel{d}{=}\tilde O$.
\end{itemize}
The existence of such $O$ follows from the rotational invariance of the Gaussian law.\footnote{One way to see it is by noticing that the span of the columns of $\mathcal P \eps^*$ and the span of $N$ columns of $O^{-1}$ (for uniformly random $O$) have the same distribution given by the uniform measure on all $N$--dimensional subspace of $T$--dimensional space.} Hence, $O P_1 O^*$ is a projector onto $\langle\tilde e_1,\dots,\tilde e_N\rangle$, and we conclude that the eigenvalues of $P_1 P_2 P_1$ coincide with those of
$$
 [ O P_2 O^*]_{NN}^V.
$$
Since $P_2$ is the projector on the span of columns of $\tilde \Phi \eps^*= O^* (O \tilde \Phi O^*) O \eps^*$, $O P_2 O^*$ is the projector on the $O$--image of this span, i.e., $O P_2 O^*$ is the projector onto the columns of $ (O\tilde \Phi O^*) O \eps^*$. Since columns of $O\eps^*$ span $\langle\tilde e_1,\dots,\tilde e_N\rangle$ and $O \tilde \Phi O^*=O(-(\1_V-F_V)^{-1})O^*=-\tilde U$ on $V$ by Lemma \ref{Lemma_shift_inversion}, we reach the formula for $\tilde M$ as the $N\times N$ corner of the projector on the first $N$ columns of $\tilde U$.
\end{proof}

Set $\T = T-1$ and let $O$ be uniformly random real orthogonal of determinant $1$ (if $\beta=1$) or complex unitary (if $\beta=2$) $\T\times \T$ matrix. Define $U=(\1_\T+ O)^{-1}$. Then we set, as in \eqref{eq_corner_Var1},
\begin{equation}
\label{eq_M_def}
 M=[U]_{NN} ([U^* U]_{NN})^{-1} [U^*]_{NN}
\end{equation}

For a matrix $A$ with real spectrum, we set $\lambda_i(A)$ to be the $i$th largest eigenvalue of $A$.
\begin{proposition}
\label{Proposition_H1_error}
Assume $\T = T-1 \ge 2N$.
 One can couple $M$ from \eqref{eq_M_def} with $\tilde M$ from \eqref{eq_tilde_M} in such a way that  for each $\epsilon>0$ we have
 $$
   \mathrm{Prob}\left( \max_{1\le i \le N} |\lambda_i(M)-\lambda_i(\tilde M)|< \frac{1}{N^{1-\epsilon}}\right)\to 1,
 $$
 as $T,N\to\infty$ in such a way that the ratio $T/N$ remains bounded.
\end{proposition}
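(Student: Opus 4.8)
The plan is to recast both $M$ and $\tilde M$ as squared cosines of principal angles between a fixed $N$-plane and its image under an inverse-of-($\1$+orthogonal) map, to couple the two underlying random orthogonal matrices through eigenvalue rigidity, and to propagate the resulting spectral closeness to the squared canonical correlations.

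\textbf{Step 1 (a common functional).} By Remark \ref{Remark_N_to_T_transform}, the nonzero eigenvalues of $M$ from \eqref{eq_M_def} are the eigenvalues of $P_1 P_2 P_1$, i.e.\ the squared cosines of the principal angles between the coordinate $N$-plane $\V=\langle e_1,\dots,e_N\rangle$ and its image $(\1_\T+O)^{-1}\V$. Writing $\tilde U=(\1_V-\tilde O F_V\tilde O^*)^{-1}=(\1_V+O_2)^{-1}$ with $O_2:=-\tilde O F_V\tilde O^*$ and fixing an orthonormal identification $V\cong\mathbb{R}^\T$ (resp.\ $\mathbb{C}^\T$), the same computation presents $\tilde M$ in the identical form with $O$ replaced by $O_2$. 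Thus $M$ and $\tilde M$ are one deterministic functional of a random orthogonal (unitary) matrix, applied to $O$ (Haar on $SO(\T)$, resp.\ $U(\T)$) and to $O_2$ (which is Haar-conjugate to the fixed matrix $-F_V$). The two laws share the uniform eigenvector marginal but differ in their eigenvalue marginals: a circular-ensemble point process versus the deterministic $T$-th roots of unity rotated by $\pi$; the choices of $SO(\T)$ over $O(\T)$ and of $F_V$ over $F$ are exactly what make the parity of the fixed real eigenvalues $\pm1$ agree on the two sides.

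\textbf{Step 2 (coupling via rigidity).} I would couple $O$ and $O_2$ by assigning them a common Haar eigenframe $\Omega$ and matching eigenvalues in order of argument, $O=\Omega D_1\Omega^*$ and $O_2=\Omega D_2\Omega^*$, where $D_2$ carries the sorted eigenvalues of $-F_V$ and, conditionally on $\Omega$, $D_1$ carries an independent draw of the sorted eigenvalue process of the Haar ensemble; since eigenvalues and eigenframe of a Haar orthogonal/unitary matrix are independent, $\Omega D_1\Omega^*$ is again exactly Haar, so this is a legitimate coupling. Under it, $\lambda_i(M)$ and $\lambda_i(\tilde M)$ become the squared cosines of the principal angles between one fixed uniformly random $N$-plane $\mathcal S=\Omega^*\V$ and its images $(\1_\T+D_1)^{-1}\mathcal S$ and $(\1_\T+D_2)^{-1}\mathcal S$ under two \emph{diagonal} maps, and by Weyl's inequality applied to $P_{\mathcal S}P_{(\1_\T+D_i)^{-1}\mathcal S}P_{\mathcal S}$ it is enough to control $\|P_{(\1_\T+D_1)^{-1}\mathcal S}-P_{(\1_\T+D_2)^{-1}\mathcal S}\|$. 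The external input is optimal eigenvalue rigidity for Haar orthogonal/unitary matrices: with probability $\to1$ the sorted arguments of $D_1$ lie within $O(\log\T/\T)$ of their classical locations, which are $O(1/\T)$-close to the equally spaced arguments of $D_2$; hence, for every $\epsilon>0$, with probability $\to1$ one has $|\arg(D_1)_j-\arg(D_2)_j|\le N^{\epsilon-1}$ for all $j$ at once, so $\|O-O_2\|=\|D_1-D_2\|\le N^{\epsilon-1}$.

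\textbf{Step 3 (propagating the bound; the main obstacle).} The crux is to show that this $N^{\epsilon-1}$ perturbation of the orthogonal matrix moves the squared canonical correlations by at most $N^{\epsilon'-1}$. The difficulty is that $\|(\1_\T+O)^{-1}\|$ is itself of order $\T$, since the spectrum of a Haar orthogonal matrix passes within $O(1/\T)$ of the singular point $-1$; consequently $(\1_\T+D_1)^{-1}-(\1_\T+D_2)^{-1}$ is \emph{not} small in operator norm, and a direct Lipschitz estimate for $O\mapsto(\lambda_1,\dots,\lambda_N)$ loses powers of $\T$. What saves the day is that the $\lambda_i$ are a bounded, purely geometric functional — cosines of principal angles — which is Lipschitz in the \emph{subspace} (gap) metric rather than in the entries of $(\1_\T+D_i)^{-1}$, and that the directions responsible for the blow-up are few and tightly pinned down: only $\T^{o(1)}$ eigenvalues lie within $N^{\epsilon-1}$ of $-1$, and after the coupling these are common coordinate directions of $D_1$ and $D_2$ on which the two resolvents differ only through a diagonal phase factor controlled by rigidity. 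The plan is therefore to split $\mathbb{R}^\T$ (resp.\ $\mathbb{C}^\T$) into the resonant coordinate block and its complement, and to estimate $\|P_{(\1_\T+D_1)^{-1}\mathcal S}-P_{(\1_\T+D_2)^{-1}\mathcal S}\|$ by reducing to a genuinely operator-norm-small perturbation on the complement (a Davis--Kahan/Weyl argument) plus a correction, localized to the $\T^{o(1)}$ resonant directions, handled using the explicit small-rank-type structure and the delocalization of the random frame $\mathcal S$. I expect this decomposition — making the near-singularity at $-1$ harmless for the squared canonical correlations — to be the technically heaviest part; once $\|P_{(\1_\T+D_1)^{-1}\mathcal S}-P_{(\1_\T+D_2)^{-1}\mathcal S}\|\le N^{\epsilon'-1}$ is established, Weyl's inequality gives $\max_i|\lambda_i(M)-\lambda_i(\tilde M)|<N^{\epsilon-1}$ and completes the proof.
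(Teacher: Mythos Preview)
Your Steps 1 and 2 are essentially identical to the paper's setup: the paper also reduces to a single functional $Z\mapsto M(Z)$ of an orthogonal matrix, couples $O$ and $-\tilde O F_V\tilde O^*$ through a common Haar eigenframe, and invokes the Meckes--Meckes rigidity estimate to get $\|D_1-D_2\|\le \T^{\delta-1}$ with high probability. So far so good.

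The gap is in Step 3. Your resonant/non-resonant splitting does not close: if ``resonant'' means eigenvalues within $N^{\epsilon-1}$ of $-1$, there are $\sim N^{\epsilon}$ of them, not $\T^{o(1)}$, so the ``small-rank'' intuition is unavailable; and on the complement the resolvent still has norm $\sim N^{1-\epsilon}$, so the resolvent \emph{difference} there is only $O(N^{\epsilon'-1}\cdot N^{2(1-\epsilon)})$, which is small only for $\epsilon>1/2$ --- at which point the resonant block is genuinely large. The claim that on resonant directions ``the two resolvents differ only through a diagonal phase factor'' is also not right: near $-1$ the magnitudes of $(1+d_{1,j})^{-1}$ and $(1+d_{2,j})^{-1}$ can differ by a bounded multiplicative factor even when $|d_{1,j}-d_{2,j}|$ is tiny, and since $\mathcal S$ is not coordinate-aligned this rescaling \emph{does} move the subspace. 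Delocalization of $\mathcal S$ alone does not obviously compensate.

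The paper bypasses the singularity altogether by an algebraic rewriting rather than a spectral decomposition. Writing $\1_\T+Z$ in $N+(\T-N)$ block form and using block-inverse identities, one finds that $\tilde B^*(Z)=(\1_{\T-N}+Z_{22})^{-1}Z_{21}$, so the nonzero eigenvalues of $M(Z)$ coincide with those of
\[
M'(Z)=\bigl(2\cdot\1_{\T-N}+Z_{22}+Z_{22}^*\bigr)^{-1}\bigl(\1_{\T-N}+Z_{22}+Z_{22}^*+Z_{22}Z_{22}^*\bigr).
\]
The point is that the unbounded resolvent $(\1_\T+Z)^{-1}$ has been replaced by $(\1_{\T-N}+Z_{22})^{-1}$, and the $(\T-N)\times(\T-N)$ corner $Z_{22}$ of a Haar orthogonal matrix satisfies $\|Z_{22}\|<1$ strictly (with high probability, uniformly in $N,\T$ with $\T/N$ bounded). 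The paper proves this last fact as a separate lemma (their Lemma~\ref{Lemma_inv_bounded}), showing that the smallest eigenvalue of $2\cdot\1_{\T-N}+Z_{22}+Z_{22}^*$ is bounded below by a positive constant; the proof uses that corners of Haar matrices and their symmetrizations are again Jacobi ensembles with known edge behavior. Once $M'(Z)$ is in hand, it is manifestly Lipschitz in $Z$ with an $O(1)$ constant, and the rigidity bound $\|Z-\tilde Z\|\le\T^{\delta-1}$ transfers directly to $\|M'(Z)-M'(\tilde Z)\|$ and hence, by Weyl, to the eigenvalues.

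In short: the obstacle you name is exactly the right one, but the remedy is not a resonant/off-resonant analysis --- it is to observe that the principal angles between $\V$ and $(\1+Z)^{-1}\V$ can be expressed through the \emph{corner} $Z_{22}$ of $Z$ rather than through the full resolvent, and that this corner is strictly contractive.
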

\begin{remark}
 We believe that the condition $\T \ge 2N$ can be weakened and replaced by $T/N$ bounded away from $0$, but we stick to it, since that's the situation when Theorem \ref{Theorem_Sum_Jacobi} applies.
\end{remark}
Let us compare the definitions of $M$ and $\tilde M$. There is only one difference between them: the eigenvalues of $-\tilde O F_V \tilde O^*$ in the definition of $\tilde M$ are deterministic, while the eigenvalues of $O$ are random. Note that both matrices $-\tilde O F_V \tilde O^*$ and $O$  have uniformly-random eigenvectors. Then the idea of the proof is to rely on the phenomenon of \emph{rigidity} for random matrix eigenvalues, which says that as $T\to\infty$, the eigenvalues of $O$ are very close to their deterministic expected positions, which, in turn, essentially coincide with eigenvalues of $-F_V$. One technical difficulty is that we need to deal with $(\1+O)^{-1}$, whose eigenvalues can be large (the absolute value of the largest eigenvalue grows linearly in $T$), however, as we will see, one can study $(\1+O)$ instead of $(\1+O)^{-1}$, and the problem of unbounded operators disappears. We proceed to the detailed proof.

\begin{proof}[Proof of Proposition \ref{Proposition_H1_error}]

 We start by explicitly constructing the desired coupling. The eigenvalues of $F_V$ are all roots of unity of order $T$ different from $1$. If $\beta=2$, then we can diagonalize $F_V$ to turn it into $\T\times \T = (T-1)\times (T-1)$ diagonal matrix with the roots of unity on the diagonal. If $\beta=1$, the matrix $F_V$ should be  block-diagonalized (with blocks of size $2$ and one additional block of size $1$ corresponding to eigenvalue $-1$ if $\T$ is even): the pair of complex conjugate roots of unity $\omega $ and $\bar \omega$ gives rise to the $2\times 2$ matrix of rotation by the angle $|\arg(\omega)|$. Let us denote by $D$ the resulting (block) diagonal matrix multiplied by $-1$. In order to avoid ambiguity about the order of eigenvalues, we assume that the blocks correspond to the increasing order of $|\arg(-\omega)|$, i.e.~the top-left $2\times 2$ corner of $D$ corresponds to the pair of the closest to $1$ eigenvalues of $D$.

 The eigenvalues of $O$ also lie on the unit circle and if $\beta=1$ they come in complex-conjugate pairs. Hence, $O$ can be similarly block-diagonalized (we do not need to multiply by $-1$ this time) and we denote through $D^{\text{rand}}$ the result. The distinction with $F_V$ is that the eigenvalues are \emph{random} and so is $D^{\text{rand}}$. The law of the eigenvalues of $O$ is explicitly known in the random-matrix literature. Both for $\beta=1$ and $\beta=2$ they form a determinantal point process on the unit circle with explicit kernel. The \emph{repulsion} between the eigenvalues leads to them being very close to evenly spaced as $T\to\infty$. We summarize this property in the following statement (which is a manifestation of much more general rigidity of eigenvalues, see, e.g., \citet{ErdosYau}), whose proof can be found in \citet[Lemma 10, $m=1$, $u=T^{\delta}$ case, and Section 5]{MeckesMeckes}.

 {\bf Claim.} There exist constants $c_1(\beta),c_2(\beta)>0$, such that for $\beta=1,2$, every $\delta>0$, there exists $\T_0(\delta)$ and for every  $\T>\T_0(\delta)$ we have \footnote{All the constants can be made explicit, following \citet{MeckesMeckes}.}
 \begin{equation}
 \label{eq_MM_bound}
  \mathrm{Prob}\left(\max_{1\le i,j<\T} \bigl| D-D^{\text{rand}}\bigr|_{ij} > \frac{1}{\T^{1-\delta}}\right) < c_1(\beta) \cdot \T \cdot \exp\left( - c_2(\beta) \frac{\T^{2\delta}}{\log \T} \right).
 \end{equation}

\medskip

 We remark that since $D$ and $D^{\text{rand}}$ are block-diagonal, the bound on the maximum matrix element of their difference is equivalent to a similar bound for any other norm, e.g., for the maximum absolute value among the eigenvalues of $D-D^{\text{rand}}$.

 We now choose another $\T\times \T$ uniformly-random orthogonal (or unitary if $\beta=2$) matrix $O_2$ (independent from the rest), replace $-\tilde O F_V \tilde O^*$ with $O_2 D O_2^*$ and replace $O$ with $O_2 D^{\text{rand}} O_2^*$. The invariance of the uniform measure on the orthogonal group $SO(N)$ (or on the unitary group $U(N)$ if $\beta=2$) with respect to right/left multiplications, implies the distributional identities:
 $$
  -\tilde O F_V \tilde O^* \stackrel{d}{=} O_2 D O_2^*,\qquad O\stackrel{d}{=} O_2 D^{\text{rand}} O_2^*.
 $$
 The right-hand sides of the identities provide the desired coupling and \eqref{eq_MM_bound} implies that these two random matrices are close to each other as $\T\to\infty$. Both matrices $M$ and $\tilde M$ are obtained from the above two matrices by the following mechanism:

 {\bf Map $M(Z)$:} Given an orthogonal (or unitary if $\beta=2$) matrix $Z$ in $\T$--dimensional space, we define $U(Z)=(\1_\T+Z)^{-1}$ and an $N\times N$ symmetric (or hermitian) matrix $M(Z)$ through
 \begin{equation}
 \label{eq_MZ_def}
  M(Z)=[U(Z)]_{NN} ([U^*(Z)U(Z)]_{NN})^{-1} [U^*(Z)]_{NN}.
 \end{equation}

 \medskip

 Clearly, under our coupling
 $$
   M=M(O_2 D^{\text{rand}} O_2^*),\qquad \tilde M=M(O_2 D O_2^*),
 $$
 and it remains to prove a form of the uniform continuity of the map $Z\mapsto M(Z)$ as $N,\T\to\infty$.

 We note that
 $$
 2U(Z)-\1_\T= \frac{\1_\T-Z}{\1_\T+Z}
$$
is a skew-symmetric matrix and we use it to write $2 U(Z)$ in the block form according to the splitting $\T=N+(\T-N)$.
$$
  2 U(Z)= \begin{pmatrix} \1_N+ A(Z)& B(Z)\\ -B^*(Z) & \1_{\T-N}+ C(Z)\end{pmatrix}, \qquad A^*(Z)=-A(Z),\quad C^*(Z)=-C(Z).
$$
Then, as in the proof of Theorem \ref{Theorem_Sum_Jacobi}, we have
\begin{equation}
\label{eq_x18}
 M(Z)=\bigl(\1_{N}+\tilde B (Z) \tilde B^*(Z)\bigl)^{-1},\qquad \tilde B(Z)=\bigl(\1_N-A(Z)\bigr)^{-1} B(Z).
\end{equation}
and our statement boils down to the uniform continuity of the map $Z\mapsto \tilde B(Z)$. We remark that in our limit regime the matrix $B(Z) B^*(Z)$ might have large eigenvalues (in fact, of order $\T$) and therefore, $B(Z)$ is potentially an exploding factor in the definition of $\tilde B(Z)$. However, the exploding parts would precisely cancel out with similarly growing parts in $(\1_N-A(Z))$. In order to see that, we use the formula for the inverse of the block matrix, which reads
$$
 \begin{pmatrix} \mathbf{A}& \mathbf{B}\\ \mathbf{C}& \mathbf{D}\end{pmatrix}^{-1} = \begin{pmatrix} \mathbf{A}^{-1}+ \mathbf{A}^{-1} \mathbf{B} \mathbf{Q} \mathbf{C} \mathbf{A}^{-1} & - \mathbf{A}^{-1}\mathbf{B} \mathbf{Q} \\ - \mathbf{Q} \mathbf{C}\mathbf{A}^{-1}& \mathbf{Q} \end{pmatrix} ,\qquad \mathbf{Q} =  (\mathbf{D}- \mathbf{C} \mathbf{A}^{-1}\mathbf{B} )^{-1}
$$
The advantage of this formula is that when used for $2U(Z)$, the ratio $-\mathbf{C} \mathbf{A}^{-1}=B^*(z)(1+A(z))^{-1}=\tilde B^*(Z)$ gets expressed as $\mathbf{Q}^{-1} (-\mathbf{Q} \mathbf{C} \mathbf{A}^{-1})$, which is the ratio of two blocks in $U(Z)^{-1}=1+Z$.\footnote{The same observation can be used to link Theorems \ref{Theorem_Sum_Jacobi} and \ref{Theorem_Var_0} to each other.} Thus, if we write $1+Z$ itself in the block form according to $\T=N+(\T-N)$ splitting:
$$
 \1_\T+Z=\begin{pmatrix} \1_N + Z_{11} & Z_{12}\\ Z_{21} & \1_{\T-N}+Z_{22}\end{pmatrix},
$$
then
$$
 \tilde B^*(Z)=\mathbf{Q}^{-1} (-\mathbf{Q} \mathbf{C} \mathbf{A}^{-1})= (\1_{\T-N}+ Z_{22})^{-1} Z_{21},
 $$
 $$
 M(Z)=\biggl(\1_{N}+Z_{21}^*  \bigl( (\1_{\T-N}+ Z_{22})( \1_{\T-N}+ Z_{22}^*)\bigr)^{-1} Z_{21} \biggr)^{-1}
$$
Hence, the matrix $M(Z)$ has the same (other from $1$) eigenvalues as
$$
 M'(Z)=\biggl(\1_{\T-N}+ \bigl( (\1_{\T-N}+ Z_{22})( \1_{\T-N}+ Z_{22}^*)\bigr)^{-1} Z_{21} Z_{21}^*  \biggr)^{-1}
$$
The latter can be simplified, since $Z$ is orthogonal, implying $Z_{21} Z_{21}^*+Z_{22} Z_{22}^*=\1_{\T-N}$:
\begin{multline}
\label{eq_x19}
 M'(Z)=\biggl(\1_{\T-N}+ \bigl( (\1_{\T-N}+ Z_{22}+ Z_{22}^*+Z_{22}Z_{22}^* )\bigr)^{-1} (\1_{\T-N}-Z_{22} Z_{22}^*)  \biggr)^{-1}
 \\= \biggl(\bigl( (\1_{\T-N}+ Z_{22}+ Z_{22}^*+Z_{22}Z_{22}^* )\bigr)^{-1} (2\cdot \1_{\T-N}+ Z_{22}+ Z_{22}^*)  \biggr)^{-1}
 \\= (2\cdot \1_{\T-N}+ Z_{22}+ Z_{22}^*)^{-1} \bigl( \1_{\T-N}+ Z_{22}+ Z_{22}^*+Z_{22}Z_{22}^* \bigr)
\end{multline}
At this point we rely on the lemma, which is proven below:
\begin{lemma}
\label{Lemma_inv_bounded} Fix any small $\upsilon>0$.
 If $Z$ is a uniformly random element of the group $SO(\T)$ of orthogonal determinant $1$ matrices (or $U(\T)$ if $\beta=2$), then there exists $r>0$ such that with probability tending to $1$ (uniformly) as $\T,N\to\infty$ in such a way that $1+\upsilon<\T/N<\upsilon^{-1}$, the smallest eigenvalue of\, $2\, \1_{\T-N}+ Z_{22}+ Z_{22}^*$ is larger than $r$.
\end{lemma}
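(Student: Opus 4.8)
The plan is to use the orthogonality (resp.\ unitarity) of $Z$ to reduce the eigenvalue bound to a quantitative transversality statement between a fixed coordinate subspace and the part of the spectrum of $Z$ lying near $-1$.

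\textbf{Step 1: reduction via $ZZ^{*}=\1_{\T}$.} Let $W$ be the span of the last $\T-N$ coordinate vectors, so that $Z_{22}$ is the compression of $Z$ to $W$ and $Z_{22}^{*}$ that of $Z^{*}$. Since $ZZ^{*}=\1_{\T}$, we have $2\,\1_{\T}+Z+Z^{*}=(\1_{\T}+Z)(\1_{\T}+Z^{*})$, hence for a unit $v\in W$
\[
 v^{*}\bigl(2\,\1_{\T-N}+Z_{22}+Z_{22}^{*}\bigr)v=v^{*}(\1_{\T}+Z)(\1_{\T}+Z^{*})v=\bigl\|(\1_{\T}+Z^{*})v\bigr\|^{2},
\]
so that $\lambda_{\min}\bigl(2\,\1_{\T-N}+Z_{22}+Z_{22}^{*}\bigr)=\min_{v\in W,\ \|v\|=1}\|(\1_{\T}+Z^{*})v\|^{2}$. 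Thus it suffices to show that, with probability $\to1$, no unit vector of $W$ is almost annihilated by $\1_{\T}+Z^{*}$.

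\textbf{Step 2: spectral bookkeeping.} I would diagonalize the normal matrix $Z$ over $\mathbb{C}$, with eigen-angles $\theta_{j}$ and orthonormal eigenvectors $u_{j}$; for $v=\sum_{j}c_{j}u_{j}$ one has $\|(\1_{\T}+Z^{*})v\|^{2}=\sum_{j}|c_{j}|^{2}\,4\cos^{2}(\theta_{j}/2)$. Fix a small $\delta>0$ and set $E_{\rho}(Z):=\operatorname{span}\{u_{j}:\ |\theta_{j}-\pi|<\rho\}$, with $\rho=\rho(r,\delta)$ tending to $0$ as $r\to0$ (one may take $\rho\asymp\sqrt{r/\delta}$). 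An elementary estimate then shows that $\|(\1_{\T}+Z^{*})v\|^{2}<r$ forces $\|P_{E_{\rho}(Z)}v\|^{2}>1-\delta$. So the lemma reduces to: for a suitable fixed $\delta$ and then small $r$, with probability $\to1$ every unit $v\in W$ has $\|P_{E_{\rho}(Z)}v\|^{2}\le1-\delta$. (For $\beta=1$ one works with the real invariant subspaces of $Z$, the argument being identical.)

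\textbf{Step 3: counting the bad eigenvalues, and transversality.} First, by the eigenvalue rigidity of Haar-distributed orthogonal/unitary matrices already invoked above (the Claim, \citet{MeckesMeckes}), the number of eigen-angles of $Z$ within $\rho$ of $\pi$ is at most $\rho\T/\pi+o(\T)$ with probability $\to1$; since $\T/N<\upsilon^{-1}$, this is $\le N/2$ once $\rho\le\rho_{0}(\upsilon)$, hence $\dim E_{\rho}(Z)\le N/2$ w.h.p. Second, to get transversality I would use conjugation invariance of Haar measure: $Z\stackrel{d}{=}QZQ^{-1}$ for an independent Haar $Q$, whence $\max_{v\in W,\,\|v\|=1}\|P_{E_{\rho}(Z)}v\|$ has the same law as $\max_{v\in Q^{-1}W,\,\|v\|=1}\|P_{E_{\rho}(Z)}v\|$, with $Q^{-1}W$ a Haar-random $(\T-N)$-dimensional subspace independent of $Z$. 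Conditioning on $Z$ and realizing $Q^{-1}W$ as the column span of a $\T\times(\T-N)$ matrix $G$ of i.i.d.\ Gaussians, I would apply the classical lower/upper bounds for the extreme singular values of rectangular Gaussian matrices: on the event $\dim E_{\rho}(Z)\le N/2$ the complement $E_{\rho}(Z)^{\perp}$ has dimension $\ge\T-N/2$, exceeding $\T-N$ by $\ge N/2\ge(\upsilon/2)\T$, so the projection of $G$ onto $E_{\rho}(Z)^{\perp}$ is a ``tall'' Gaussian matrix with smallest singular value $\gtrsim N/\sqrt{\T}$ w.h.p.; dividing by $\sigma_{\max}(G)\lesssim\sqrt{\T}$ (to pass to an orthonormal basis of $Q^{-1}W$) gives $\min_{v\in Q^{-1}W,\,\|v\|=1}\|P_{E_{\rho}(Z)^{\perp}}v\|^{2}\gtrsim(N/\T)^{2}\ge\upsilon^{2}$, that is $\|P_{E_{\rho}(Z)}v\|^{2}\le1-c(\upsilon)$ for all such $v$, with $c(\upsilon)>0$ depending only on $\upsilon$.

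\textbf{Step 4: conclusion, and the main obstacle.} Choosing $\delta<c(\upsilon)$ and then $r=r(\upsilon)>0$ small enough that $\rho(r,\delta)\le\rho_{0}(\upsilon)$, Steps~2 and~3 contradict one another with probability $\to1$, so w.h.p.\ $\lambda_{\min}\bigl(2\,\1_{\T-N}+Z_{22}+Z_{22}^{*}\bigr)\ge r$; all failure probabilities are driven by $\T\to\infty$ and are uniform over $1+\upsilon<\T/N<\upsilon^{-1}$. The step I expect to be the main obstacle is the transversality estimate in Step~3: when $\T/N$ is large the coordinate subspace $W$ has dimension close to $\T$, so one must track that its codimension $N\ge\upsilon\T$ still leaves a margin proportional to $\T$, and be careful that the Gaussian singular-value bounds are applied with the correct normalization; the hypothesis $\T/N<\upsilon^{-1}$ is precisely what makes the constants $\rho_{0}(\upsilon),c(\upsilon),r(\upsilon)$ depend on $\upsilon$ alone.
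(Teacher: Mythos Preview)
Your proof is correct, and the overall architecture---reduce to showing that no unit vector in $W$ lies almost entirely in the eigenspace of $Z$ near $-1$, then control (i) the dimension of that eigenspace and (ii) its angle with a Haar-random subspace---matches the paper's. The implementations differ, though. The paper works with $H=\tfrac12(Z+Z^{*})$ and the variational formula \eqref{eq_x23}, splits the sum at the $k$-th eigenvalue with $k=N/2$, and then recognizes the transversality quantity $\max_{u\in W}\sum_{i\le k}|(u,v_i(H))|^{2}$ as the top eigenvalue of a Jacobi ensemble \eqref{eq_x24}, invoking the edge asymptotics of \citet{Johnstone_Jacobi} to place it strictly below $1$; the bound on $\lambda_{k+1}(H)$ comes from the Wachter law for $H$. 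Your route is more elementary: the identity $2\,\1+Z+Z^{*}=(\1+Z)(\1+Z^{*})$ in Step~1 is a clean shortcut, the eigenvalue count in Step~3 needs only the circular law (rigidity is more than necessary), and the angle bound is obtained from Davidson--Szarek type singular-value estimates for rectangular Gaussians rather than Jacobi edge asymptotics. The paper's approach stays ``in theme'' with the surrounding Jacobi machinery and in principle yields the sharp limiting constant, while yours is self-contained and gives an explicit $r\asymp\upsilon^{2}$; either is adequate here since only \emph{some} $r>0$ is required.
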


Now we are ready to compare $M'(Z)$ with $M'(\tilde Z)$, where $Z=O_2 D^{\text{rand}} O_2^*$, $\tilde Z=O_2 D O_2^*$. Fix $\delta>0$. By \eqref{eq_MM_bound}, with probability tending to $1$ as $N,\T\to\infty$, $\|Z-\tilde Z\|<\T^{\delta-1}$, where $\|\cdot\|$ is the spectral norm of the matrix (i.e.~its largest singular value). Hence, since cutting corners of the matrix only decreases the spectral norm, with probability tending to $1$,
\begin{equation}
\left\|\bigl( \1_{\T-N}+ Z_{22}+ Z_{22}^*+Z_{22}Z_{22}^* \bigr)- \bigl( \1_{\T-N}+ \tilde Z_{22}+ \tilde Z_{22}^*+\tilde Z_{22}\tilde Z_{22}^* \bigr)\right\|<4 \T^{\delta-1}.
\end{equation}
Simultaneously, by Lemma \ref{Lemma_inv_bounded}, the spectral norm of  $(2\, \1_{\T-N}+ Z_{22}+ Z_{22}^*)^{-1}$ is bounded from above by $r^{-1}$ (and, hence, a similar bound holds for $Z$ replaced with $\tilde Z$). We conclude that the part of the increment $M'(Z)-M'(\tilde Z)$ due to the change of the numerator in the right-hand side of \eqref{eq_x19} is bounded from above (in spectral norm) by
\begin{equation}
\label{eq_x21}
 \text{const} \cdot  \T^{\delta-1}.
\end{equation}
For the denominator in \eqref{eq_x19}, we use the matrix identity
\begin{equation}
\label{eq_x20}
 (G+\Delta)^{-1}-G^{-1}=- (G+\Delta)^{-1}  \Delta\, G^{-1}
\end{equation}
with $G=2\, \1_{\T-N}+ Z_{22}+ Z_{22}^*$, $\Delta = \tilde Z_{22}+\tilde Z_{22}^*-Z_{22}- Z_{22}^{*}$. By \eqref{eq_MM_bound}, $\|\Delta\|<2 \T^{\delta-1}$ with probability tending to $1$. Hence, using Lemma \ref{Lemma_inv_bounded} we upper bound the spectral norm of \eqref{eq_x20} by
$\mathrm{const}\cdot r^{-2}\, \T^{\delta-1}$. Since the spectral norm of the numerator in the right-hand side of \eqref{eq_x19} is at most $4$, we conclude that the part of the increment $M'(Z)-M'(\tilde Z)$ due to the change of the denominator also admits a bound \eqref{eq_x21}. Summing up, we have shown that with probability tending to $1$,
$$
 \| M'(O_2 D^{\text{rand}} O_2^*)-M'(O_2 D O_2^*)\|< \text{const} \cdot  \T^{\delta-1}.
$$
This finishes the proof of Proposition \ref{Proposition_H1_error}.
\end{proof}

\begin{proof}[Proof of Lemma \ref{Lemma_inv_bounded}] We would like to show that for some deterministic $r>0$ the following inequality on the norm holds with probability tending to $1$ as $N,\T\to\infty$:
\begin{equation}
\label{eq_x22}
\left\| \frac{Z_{22}+Z_{22}^*}{2}\right\|<1-\tfrac{r}{2}.
\end{equation}
For that we denote $H=\frac{Z+Z^*}{2}$ and notice that $H$ is a Hermitian matrix with spectrum between $-1$ and $1$. In fact, \citet[Section 3.7]{forrest} explains that $\frac{1+H}{2}$ is again a Jacobi ensemble. The law of large numbers for the latter (reviewed in Theorem \ref{Theorem_Jacobi_as}) implies that as $\T\to\infty$, the empirical measure of the eigenvalues $\frac{1}{\T}\sum_{i=1}^\T \delta_{\lambda_i(H)}$ converges (weakly in probability\footnote{A sequence of \emph{random} probability measures $\mu_n$ converges weakly in probability to a measure $\mu$ if for each bounded continuous function $f(x)$, the random variables $\int f(x) \mu_n(dx)$ converge in probability to $\int f(x) \mu(dx)$.}) to a certain explicit deterministic measure on $[-1,1]$ with a density $p(x)$.

The matrix $\frac{Z_{22}+Z_{22}^*}{2}$ is a $(\T-N)\times(\T-N)$ corner of $H$. The largest eigenvalue of this matrix has the following representation:
\begin{equation}
\label{eq_x23}
 \lambda_1\left(\frac{Z_{22}+Z_{22}^*}{2}\right)=\max_{\begin{smallmatrix} u\in \langle e_{N+1},\dots,e_{\T}\rangle\\ |u|=1\end{smallmatrix}}\left( \sum_{i=1}^\T \lambda_i(H)\, \bigl|(u, v_i(H))\bigr|^2\right),
\end{equation}
where $\lambda_i(H)$ is the $i$th largest eigenvalue of $H$, $v_i(H)$ is the corresponding eigenvector and $(u,v_i(H))$ is the scalar product with this eigenvector. Since the vectors $v_i$ are orthonormal, $\sum_{i=1}^N \bigl|(u, v_i(H))\bigr|^2=1$. On the other hand, all $\lambda_i(H)$ are not greater than $1$ and only few of them are close to $1$. Hence,  \eqref{eq_x22} would follow, if we manage to show that the sum in \eqref{eq_x23} is not concentrated on few largest $\lambda_i(H)$. To show that, it suffices to upper-bound $\bigl|(u, v_i(H))\bigr|^2$, which we now do.

Let us fix $k$ with $k<\min(N,\T-N)$ and upper bound \eqref{eq_x23} by
\begin{multline}
\label{eq_x25}
 \max_{\begin{smallmatrix} u\in \langle e_{N+1},\dots,e_{\T}\rangle \\ |u|=1\end{smallmatrix}}\left(\sum_{i=1}^k  \bigl|(u, v_i(H))\bigr|^2 + \lambda_{k+1}(H)\sum_{i=k+1}^\T  \bigl|(u, v_i(H))\bigr|^2\right)\\=
  \lambda_{k+1}(H)+ (1-\lambda_{k+1}(H)) \max_{\begin{smallmatrix} u\in \langle e_{N+1},\dots,e_{\T}\rangle \\ |u|=1\end{smallmatrix}}\left(\sum_{i=1}^k  \bigl|(u, v_i(H))\bigr|^2 \right).
\end{multline}
Since the law of the matrix $H$ is invariant under conjugations with orthogonal (or unitary) matrices, the eigenvectors $v_1(H),v_2(H),\dots,v_\T(H)$ are uniformly-distributed, i.e.~the $\T\times \T$ matrix formed by them is a uniformly-random orthogonal (or unitary) matrix. Hence,
$$
\max_{\begin{smallmatrix} u\in \langle e_{N+1},\dots,e_{\T-N}\rangle \\ |u|=1\end{smallmatrix}}\left(\sum_{i=1}^k  \bigl|(u, v_i(H))\bigr|^2 \right).
$$
is a maximum eigenvalue of the $(\T-N)\times (\T-N)$ corner of the projector in $\T$--dimensional space on random $k$--dimensional subspace chosen uniformly at random. If we denote through $P_k$ and $P_{\T-N}$ the projectors on the first $k$ and $\T-N$ basis vectors, respectively, and take a uniformly random orthogonal (or unitary if $\beta=2$) $\T\times \T$ matrix $O$, then we deal with the maximal eigenvalue of $P_{\T-N} O P_{k} O^* P_{\T-N}$. The maximal eigenvalue of this product is the same as the one of  $O P_{k} O^* P_{\T-N}$, or of $P_k O^* P_{\T-N} O$, or of $P_k O^* P_{\T-N} O P_k$.
Equivalently, this is the maximal eigenvalue of $k\times k$ corner $Y$ of a projector on uniformly-random $(\T-N)$--dimensional space. Such $k\times k$ matrix $Y$ is distributed as Jacobi ensemble with density proportional to
\begin{equation}
\label{eq_x24}
 \det(Y)^{\frac{\beta}{2}(\T-N-k+1)-1} \det(1-Y)^{\frac{\beta}{2}(N-k+1)-1} dY,
\end{equation}
see  \citet[(3.113) and the following formula]{forrest}.
We now choose $k=N/2$. Then the largest eigenvalue of the ensemble \eqref{eq_x24} converges as $N,\T\to\infty$ in probability (see e.g., \citet{Johnstone_Jacobi}, or \citet[Section 2.6.2]{AGZ}, or \citet{Holcomb_MorenoFlores}) to a deterministic number $0<c<1$, depending on the ratio $\T/N$.

Simultaneously, $\lambda_{N/2+1}(H)$ converges as $\T,N\to\infty$ to another deterministic number $0<c'<1$, due to the aforementioned convergence of the empirical measure of $H$.  Hence,  \eqref{eq_x25} implies that the largest eigenvalue of $\frac{Z_{22}+Z_{22}^*}{2}$ is bounded away from 1.
\end{proof}
\begin{remark} Although we are not pursuing this direction, it is possible to upgrade the above argument to the case when $N,\T\to\infty$ in such a way that $\T/N\to\infty$. Then $r$ is  tending to $0$, but we can find the speed of convergence (and then propagate it to the precise bound in Proposition \ref{Proposition_H1_error}). For that we would need to analyze the largest eigenvalue of \eqref{eq_x24} more precisely (when $\T/N\to\infty$, the Jacobi ensemble concentrates near $1$ and degenerates into the Wishart ensemble after proper rescaling), and also use exact asymptotics of $\lambda_{k+1}(H)$ (which can be obtained from the local law of the Jacobi ensemble for $H$ near $1$).
\end{remark}

We now have all the ingredients for Theorem \ref{Theorem_main_approximation} (as well as for its $\beta=2$ version).

\begin{proof}[Proof of Theorem \ref{Theorem_main_approximation}]
 By Theorem \ref{Theorem_Sum_Jacobi}, the eigenvalues $x_1\ge \dots\ge x_N$ of Jacobi ensemble $\J(N; \frac{N}{2},\frac{T-2N}{2})$ have the same distribution as those of $M$ defined in \eqref{eq_M_def}. On the other hand, by Proposition \ref{Prop_gaussian_rotation} the eigenvalues $\lambda_1\ge \dots\ge \lambda_N$ of the matrix $S_{10} S_{00}^{-1} S_{01} S_{11}^{-1}$ have the same distribution as those of $\tilde M$ defined in \eqref{eq_tilde_M}. Hence, Proposition \ref{Proposition_H1_error} gives the desired statement.
\end{proof}

\subsection{Asymptotic of Jacobi ensemble}
\label{Section_Jacobi}

In this section we review the asymptotic results for the Jacobi ensemble $\J(N; p,q)$ introduced in the Definition \ref{Definition_Jacobi} as $N\to\infty$.

We assume that as $N\to\infty$, also $p,q\to\infty$, in such a way that
\begin{equation}
\label{eq_asymptotic_pars}
 \frac{p-1}{N}=\frac{1}{2} (\p-1), \quad \p\ge 1,\qquad \frac{q-1}{N}=\frac{1}{2} (\q-1), \quad \q\ge 1,
\end{equation}
where $\p$ and $\q$ are two parameters, which stay bounded away from $1$ and from $\infty$ as $N\to\infty$.\footnote{The use of $p-1$ and $q-1$ instead of $p$ and $q$ is related to the choice of notations in \eqref{eq_Jacobi_def}. The following theorems would work without this shift as well, however, the shift leads to a faster speed of convergence, cf.~\citet[Discussion before Theorem 1]{Johnstone_Jacobi}.}  We further define the \emph{equilibrium measure} $\mu_{\p,\q}$ of the Jacobi ensemble through:
\begin{equation}
\label{eq_Jacobi_equilibrium}
 \mu_{\p,\q}(x)\, d x = \frac{\p+\q}{2\pi} \cdot \frac{\sqrt{(x-\lambda_-)(\lambda_+-x)}}{x (1-x)} \mathbf 1_{[\lambda_-,\lambda_+]}\, d x,
\end{equation}
where the support $[\lambda_-,\lambda_+]$ of the measure is defined via
\begin{equation}
 \lambda_\pm=\frac{1}{(\p+\q)^2}\left(\sqrt{\p(\p+\q-1)}\pm \sqrt{\q}  \right)^2.
\end{equation}
One can check that $0<\lambda_-<\lambda_+<1$ for every $\p,\q>1$. The law \eqref{eq_Jacobi_equilibrium} is known as Wachter distribution.
Further, define
\begin{equation}
 c_\pm=\frac{(\p+\q)}{2} \frac{\sqrt{\lambda_+-\lambda_-}}{\lambda_\pm (1-\lambda_\pm)},
\end{equation}
and note that
$$
 \mu_{\p,\q}(x)\approx \frac{c_\pm}{\pi} \sqrt{|x-\lambda_{\pm}|}, \text{ as } x\to\lambda_{\pm}\quad \text{ inside }\quad  [\lambda_-,\lambda_+],
$$
where the normalization $\frac{1}{\pi} \sqrt{|x-\lambda_{\pm}|}$ was chosen to match the behavior of the Wigner semicircle law $\frac{1}{2\pi} \sqrt{4-x^2}$ near edges $\pm 2$.
\begin{theorem}
\label{Theorem_Jacobi_as}
 Suppose that $N,p,q\to\infty$ in such a way that $\p\ge 1 $ and $\q\ge 1 $ in \eqref{eq_asymptotic_pars} stay bounded.  For the second conclusions we additionally assume that $\q$ is bounded away from $1$ and for the third conclusion we additionally require $\p$ to be bounded away from $1$. Let $x_1\ge x_2\ge \dots\ge x_N$ be $N$ random eigenvalues of Jacobi ensemble $\J(N;p,q)$. Then
 \begin{enumerate}
  \item[(I)] $\displaystyle \lim_{N\to\infty} \left|\frac{1}{N} \sum_{i=1}^N \delta_{x_i}- \mu_{\p,\q} \right|=0,$ weakly in probability.

      This means that for any continuous function $f(x)$ we have convergence in probability:
      \begin{equation}
      \label{eq_Jacobi_LLN}
        \lim_{N\to\infty} \left|\frac{1}{N}\sum_{i=1}^N f(x_i)-\int_{0}^1 f(x)\mu_{\p,\q}(x)dx\right|=0.
      \end{equation}
  \item[(II)] For $\{\aa_i\}_{i=1}^{\infty}$ as in Proposition \ref{Proposition_Airy_Gauss}, we have convergence in finite-dimensional distributions for the largest eigenvalues:
      \begin{equation}
      \label{eq_Jacobi_up_edge}
        \lim_{N\to\infty} \left\{ N^{2/3} c_+^{2/3} \left(x_i- \lambda_+\right)  \right\}_{i=1}^{N}\to \{\aa_i\}_{i=1}^{\infty}.
      \end{equation}
      In particular, $N^{2/3} c_+^{2/3} \left(x_1- \lambda_+\right)$ converges to the Tracy-Widom distribution $F_1$.
  \item[(III)] We also have convergence in distribution for the smallest eigenvalues\footnote{The limiting processes $\{\aa_i\}_{i=1}^{\infty}$ arising for the largest and smallest eigenvalues are independent.}
    \begin{equation}
    \label{eq_Jacobi_low_edge}
        \lim_{N\to\infty} \left\{ N^{2/3} c_-^{2/3} \left(\lambda_--x_{N+1-i}\right)  \right\}_{i=1}^{N}\to \{\aa_i\}_{i=1}^{\infty}.
      \end{equation}
 \end{enumerate}
\end{theorem}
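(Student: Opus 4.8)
The plan is to reduce all three statements to known results on $\beta$-ensembles, exploiting the explicit product form of the eigenvalue density \eqref{eq_Jacobi_eig} together with the reflection symmetry $M\mapsto\1_N-M$. This reflection carries $\J(N;p,q)$ to $\J(N;q,p)$ (immediate from \eqref{eq_Jacobi_def}, with unit Jacobian), sends the $i$-th largest eigenvalue to $1$ minus the $i$-th smallest, and --- via the elementary identities $\lambda_+(\q,\p)=1-\lambda_-(\p,\q)$, $\lambda_-(\q,\p)=1-\lambda_+(\p,\q)$, $c_+(\q,\p)=c_-(\p,\q)$, each a one-line computation from the definitions of $\lambda_\pm$ and $c_\pm$ above --- converts the lower-edge rescaling in (III) into the upper-edge rescaling in (II). Thus (III) for parameters $(\p,\q)$ is exactly statement (II) for $(\q,\p)$, and the hypothesis ``$\p$ bounded away from $1$'' in (III) becomes ``$\q$ bounded away from $1$'' in (II). It therefore suffices to establish (I) and (II).

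For (I), view \eqref{eq_Jacobi_eig} as a log-gas: its density is proportional to $\exp\!\left(\sum_{i<j}\ln|x_i-x_j|-N\sum_i V(x_i)\right)$ on $(0,1)$ with potential $V(x)=-\tfrac12(\p-1)\ln x-\tfrac12(\q-1)\ln(1-x)$. Standard large-deviation and concentration theory for such ensembles (\citet{BenArous}, \citet[Section 2.6]{AGZ}) gives that $\tfrac1N\sum_i\delta_{x_i}$ converges weakly in probability to the unique minimizer of the corresponding energy functional, i.e. to the equilibrium measure for $V$. It then remains to identify this minimizer with $\mu_{\p,\q}$ of \eqref{eq_Jacobi_equilibrium}: one checks the Euler--Lagrange variational conditions, or, more economically, solves the Schwinger--Dyson (loop) equation, which for this two-logarithm potential is an explicit quadratic in the Stieltjes transform whose admissible branch has exactly the branch points $\lambda_\pm$ and recovers the stated square-root density. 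This law of large numbers also appears, with these or closely related parameter shifts, in \citet{Johnstone_Jacobi} and \citet{onatski_ecta}.

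For (II), the essential structural fact --- noted just before the theorem --- is that $\mu_{\p,\q}$ is a one-cut equilibrium measure that is regular at its right endpoint: $\mu_{\p,\q}(x)\sim\tfrac{c_+}{\pi}\sqrt{\lambda_+-x}$ as $x\uparrow\lambda_+$, with $c_+$ positive and finite precisely because $\q$ bounded away from $1$ forces $\lambda_+<1$ strictly, so that no Bessel-type hard edge forms at $1$. For real-analytic one-cut regular ensembles the edge scaling limit is universal: centering at $\lambda_+$ and rescaling by $N^{2/3}c_+^{2/3}$ --- the power $N^{2/3}$ dictated by the square-root vanishing, the constant $c_+^{2/3}$ chosen so that the local density in the new variable matches the reference normalization $\tfrac1\pi\sqrt{|x|}$ appearing in Proposition \ref{Proposition_Airy_Gauss} --- makes the top eigenvalues converge in finite-dimensional distributions to the Airy$_1$ point process $\{\aa_i\}_{i=1}^\infty$, and in particular the largest one to $F_1$. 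This is available directly from \citet{Johnstone_Jacobi} for the single largest eigenvalue and, at the multi-point level, from the tridiagonal-model and stochastic-Airy-operator analyses (\citet{dumitriu_edelman}, \citet{RRV}, \citet{Holcomb_MorenoFlores}); the complex ($\beta=2$) Jacobi ensemble, needed elsewhere in the paper, is covered by the same argument.

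Two remaining points. The independence of the limiting processes at the two edges (the footnote to (III)) holds because in the tridiagonal / Jacobi-matrix model the right-edge fluctuations depend on the model's entries near one corner and the left-edge fluctuations on entries near the opposite corner, and these decouple asymptotically; equivalently, the two limiting stochastic Airy operators are driven by disjoint portions of the white noise. I expect the only genuinely delicate ingredient to be invoking $\beta=1$ edge universality at the level of \emph{finite-dimensional distributions} (rather than just the largest eigenvalue, and rather than the technically easier $\beta=2$ case) together with the precise scaling constants; the reflection argument reducing (III) to (II) and the identification of the equilibrium measure are routine explicit computations.
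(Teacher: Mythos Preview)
Your proposal is correct in substance, but you should know that the paper does not actually \emph{prove} this theorem: Section~\ref{Section_Jacobi} is explicitly a review of known asymptotics for the Jacobi ensemble, and immediately after the statement the authors simply cite the literature --- \citet{Wachter}, \citet{Bai_Silverstein}, \citet{Dumitriu_Paquette} for (I), and \citet{Johnstone_Jacobi} (with \citet{Deift_Gioev}, \citet{ErdosYau}, \citet{Tao_Vu} for the broader universality context, and \citet{HanPanZhang_2016} to lift a parity restriction) for (II) and (III). So the ``paper's proof'' is just a pointer to existing results, and your sketch is already more detailed than what the paper offers.

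One genuine difference worth noting: your reflection argument $M\mapsto \1_N-M$ reducing (III) to (II) is a clean observation that the paper does not make explicit; the paper instead cites \citet{Johnstone_Jacobi} directly for both edges. Your reduction is arguably more economical and makes the role of the hypotheses ($\q$ bounded away from $1$ for the upper edge, $\p$ bounded away from $1$ for the lower) transparent. Conversely, the paper's approach of citing Johnstone directly has the advantage that the exact scaling constants and the finite-sample refinements (the $-\tfrac{2}{N}$ shifts in $\p,\q$) are already worked out there, whereas your log-gas/universality route would require re-deriving them.
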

 The first proof of \eqref{eq_Jacobi_LLN} appeared in \citet{Wachter}, for other proofs see \citet{Bai_Silverstein} and \citet{Dumitriu_Paquette}; we follow the notations of the last reference. The asymptotics \eqref{eq_Jacobi_up_edge}, \eqref{eq_Jacobi_low_edge} can be found in \citet{Johnstone_Jacobi}, and it is a manifestation of the universality for the distributions of largest/smallest eigenvalues of random matrices holding in much wider generality, see, e.g., \citet{Deift_Gioev}, \citet{ErdosYau}, \citet{Tao_Vu}. We remark that the original article \citet{Johnstone_Jacobi} stated the convergence under an additional parity constraint on the parameters, yet this technical restriction can be removed, see the discussion at the end of Section 2 in \cite{Johnstone_Jacobi} and \cite{HanPanZhang_2016}.

\subsection{Proof of Theorem \ref{theorem_J_stat}} As an intermediate step we prove the following theorem.

\begin{theorem} \label{Theorem_Joh_largest}
 Suppose that $T,N\to\infty$ in such a way that $T/N>2$ remains bounded away from $2$ and from $\infty$. Let $\lambda_1\ge \dots\ge \lambda_N$ be eigenvalues of the matrix $S_{10} S_{00}^{-1} S_{01} S_{11}^{-1}$ defined as in Theorem \ref{theorem_J_stat} and Proposition \ref{Prop_gaussian_rotation}. Then for $\{\aa_i\}_{i=1}^{\infty}$ as in Proposition \ref{Proposition_Airy_Gauss}, we have convergence in finite-dimensional distributions for the largest eigenvalues:
      \begin{equation}
      \label{eq_J_stat_edge}
        \lim_{N\to\infty} \left\{ N^{2/3} c_+^{2/3} \left(\lambda_i- \lambda_+\right)  \right\}_{i=1}^{\infty}\to \{\aa_i\}_{i=1}^{\infty},
      \end{equation}
      where
      \begin{equation}
        \lambda_{\pm}=\frac{1}{(\p+\q)^2}\left[\sqrt{\p(\p+\q-1)}\pm \sqrt{\q}  \right]^2, \qquad
      c_+=\left(\p+\q\right) \frac{\sqrt{\lambda_+-\lambda_-}}{2\cdot \lambda_+\cdot (1-\lambda_+)},
      \end{equation}
      $$ \p=2-\frac{2}{N},\qquad \q=\frac{T}{N}-1-\frac{2}{N}.$$
\end{theorem}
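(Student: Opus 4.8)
The plan is to combine the coupling of Theorem~\ref{Theorem_main_approximation} with the edge asymptotics of the Jacobi ensemble recorded in Theorem~\ref{Theorem_Jacobi_as}. First I would invoke Theorem~\ref{Theorem_main_approximation}: since $T/N$ is bounded away from $2$ and from $\infty$, in particular $T>2N$, so on one probability space we may place the eigenvalues $\lambda_1\ge\dots\ge\lambda_N$ of $S_{10}S_{00}^{-1}S_{01}S_{11}^{-1}$ together with the eigenvalues $x_1\ge\dots\ge x_N$ of the Jacobi ensemble $\J(N;\tfrac N2,\tfrac{T-2N}2)$ in such a way that, for every $\epsilon>0$, $\mathrm{Prob}\bigl(\max_{i}|\lambda_i-x_i|<N^{\epsilon-1}\bigr)\to1$.

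Next I would rewrite the parameters $p=\tfrac N2$, $q=\tfrac{T-2N}2$ in the normalization \eqref{eq_asymptotic_pars}: one gets $\tfrac{p-1}{N}=\tfrac12(\p-1)$ with $\p=2-\tfrac2N$ and $\tfrac{q-1}{N}=\tfrac12(\q-1)$ with $\q=\tfrac TN-1-\tfrac2N$, which are precisely the quantities in the statement. Because $T/N$ stays in a compact subinterval of $(2,\infty)$, the numbers $\p-1$ and $\q-1$ stay in a compact subinterval of $(0,\infty)$ for all large $N$; in particular $\q$ is bounded away from $1$, so the hypotheses of part (II) of Theorem~\ref{Theorem_Jacobi_as} hold (with the $N$-dependent values of $\p,\q$ and the correspondingly $N$-dependent edge $\lambda_+$ and constant $c_+$). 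That part then yields convergence in finite-dimensional distributions
\[
\left\{N^{2/3}c_+^{2/3}\left(x_i-\lambda_+\right)\right\}_{i=1}^{\infty}\xrightarrow{d}\{\aa_i\}_{i=1}^{\infty}.
\]

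Finally I would transfer this to the $\lambda_i$ through the coupling. Since $\p-1$ and $\q-1$ lie in a fixed compact subset of $(0,\infty)$, the edge $\lambda_+$, the left edge $\lambda_-$, and the constant $c_+$ are bounded and bounded away from their degenerate values, so $N^{2/3}c_+^{2/3}=O(N^{2/3})$. Choosing $\epsilon<1/3$, the coupling bound gives $N^{2/3}c_+^{2/3}\max_i|\lambda_i-x_i|=O(N^{\epsilon-1/3})\to0$ in probability; hence for each fixed $k$ the vector $\bigl(N^{2/3}c_+^{2/3}(\lambda_i-x_i)\bigr)_{i=1}^{k}$ tends to $0$ in probability. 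Combining this with the displayed convergence by Slutsky's theorem gives $\bigl\{N^{2/3}c_+^{2/3}(\lambda_i-\lambda_+)\bigr\}_{i=1}^{\infty}\xrightarrow{d}\{\aa_i\}_{i=1}^{\infty}$, which is \eqref{eq_J_stat_edge}.

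The substantive content---the new appearance of the Jacobi ensemble and the rigidity estimate controlling the perturbation---is already packaged in Theorems~\ref{Theorem_Sum_Jacobi} and \ref{Theorem_main_approximation}, so the present argument is essentially bookkeeping. The one place that needs care is that the parameters $\p,\q$ (hence $\lambda_+$ and $c_+$) depend on $N$, so one must use the ``moving-parameter'' form of Johnstone's edge universality for the Jacobi ensemble rather than a fixed-parameter version; this is exactly the generality in which Theorem~\ref{Theorem_Jacobi_as} is quoted, and the uniform two-sided bounds on $\p-1$ and $\q-1$ ensure the $N^{2/3}$ edge scaling does not degenerate.
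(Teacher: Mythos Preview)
Your proposal is correct and follows exactly the paper's own approach: invoke Theorem~\ref{Theorem_main_approximation} to couple the $\lambda_i$ with Jacobi eigenvalues $x_i$ of $\J(N;\tfrac N2,\tfrac{T-2N}2)$, match the parameters to $\p=2-\tfrac2N$, $\q=\tfrac TN-1-\tfrac2N$, and apply part~(II) of Theorem~\ref{Theorem_Jacobi_as}. The paper's proof is just two sentences; your version spells out the parameter identification, the check that $\q$ is bounded away from $1$, the choice $\epsilon<1/3$ so that $N^{2/3}\max_i|\lambda_i-x_i|\to 0$, and the Slutsky step, all of which are implicit there.
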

\begin{proof}
  Theorem \ref{Theorem_main_approximation} implies that the asymptotics of largest eigenvalues $x_1,x_2,\dots$ is the same as the one for the largest eigenvalues of $\J(N; \frac{N}{2}, \frac{T-2N}{2})$. For the latter we use Theorem \ref{Theorem_Jacobi_as} with $\p=2-\frac{2}{N}$, $\q=\frac{T}{N}-1-\frac{2}{N}$.
\end{proof}
\begin{proof}[Proof of Theorem \ref{theorem_J_stat}]
  We use Theorem \ref{Theorem_Joh_largest} and the fact that for small $x$
  $$
   \ln\bigl(1-(\lambda_++x)\bigr)=\ln(1-\lambda_+)-\frac{1}{1-\lambda_+} x + o(x).\qedhere
  $$
\end{proof}

\subsection{Asymptotics under white noise assumption}

\label{Section_white_proof}

In this section we explain how the proof of Theorem \ref{Theorem_white_noise_approximation} is obtained. We follow the notations of Section \ref{Section_proof_of_main_approx} for the deterministic operators and spaces.
We take a uniformly-random orthogonal (or unitary if $\beta=2$) operator $\tilde O$ acting in $(T-1)$--dimensional space $V$ and define an operator $\tilde W$ acting in $V$:
$$
 \tilde W= \1_V- \tilde O (F_V)^{-1} \tilde O^*,
$$
 Let us introduce a symmetric (or Hermitian if $\beta=2$) $N\times N$ matrix $\widehat M$ through
\begin{equation}
\label{eq_tilde_M_white}
 \widehat M= [\tilde W]_{NN}^V ([\tilde W^* \tilde W]_{NN}^V)^{-1} [\tilde W^*]_{NN}^V.
\end{equation}
\begin{proposition}
\label{Prop_gaussian_rotation_white}
Choose an arbitrary positive definite covariance matrix $\Lambda$. Let $\eps$ be $N\times T$ matrix of random variables (real if $\beta=1$ and complex if $\beta=2$), such that $T$ columns of $\eps$ are i.i.d., and each of them is an $N$-dimensional mean zero Gaussian vector with covariance $\Lambda$. Fix an arbitrary $N$--dimensional vector $\mu$. Define the $N\times T$ matrix $X=(X_1,X_2,\dots,X_T)$ as in Eq.~\eqref{eq_wn_hypothesis} via
$$
 X_t=\mu+\eps_t, \quad t=1,\dots,T.
$$
Further set $\Delta^c X_t=X_{t+1}-X_{t}$, $t=1,\dots,T-1$ and $\Delta^c X_T=X_1-X_t$.
 Define $N\times N$ matrices:
\begin{equation}
\label{eq_modified_Joh_matrices_wn_proof}
 S_{00}^{w.n.}=\Delta^c X \mathcal P (\Delta^c X)^{*},\quad S_{01}^{w.n.}=\Delta^c X \mathcal P X^*, \quad S_{10}^{w.n.}=X \mathcal P (\Delta^c X)^*,\quad S_{11}^{w.n.}=X \mathcal P X^*,
\end{equation} Then the eigenvalues of the matrix  $S_{10}^{w.n.} (S_{00}^{w.n.})^{-1} S_{01}^{w.n.} (S_{11}^{w.n.})^{-1}$ have the same distribution as those of $\widehat M$ in \eqref{eq_tilde_M_white}.
\end{proposition}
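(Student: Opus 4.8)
The plan is to follow the proof of Proposition~\ref{Prop_gaussian_rotation} almost verbatim, replacing the summation/de-trending bookkeeping by the (simpler) cyclic-difference bookkeeping, and then to reconcile the linear operator that emerges with the one used to define $\widehat M$. \emph{Step 1 (reduction to $\eps$).} Under $H_0^{w.n.}$ we have $X=\mu\mathbf 1^*+\eps$ with $\mathbf 1=(1,\dots,1)^*$. The cyclic lead operator $F$ fixes $\mathbf 1$, hence $F$ and $F^*$ preserve $V$, commute with $\mathcal P$, and are annihilated after de-meaning, so $\mu$ drops out of every de-meaned quantity. A direct check against the definition of $\Delta^c$ gives $\Delta^c X=X(F^*-\1_T)$, whence $X\mathcal P=\eps\mathcal P$ and $\Delta^c X\mathcal P=\eps\mathcal P(F^*-\1_T)$. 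Substituting into \eqref{eq_modified_Joh_matrices_wn_proof} and using $\mathcal P^2=\mathcal P$, and writing $E:=\mathcal P\eps^*$ and $G:=\1_T-F$, we obtain
\[
 S_{11}^{w.n.}=E^*E,\qquad S_{00}^{w.n.}=(GE)^*(GE),\qquad S_{01}^{w.n.}=(GE)^*E=(S_{10}^{w.n.})^*.
\]
The same ``$AB^*$ versus $B^*A$'' manipulation as in \eqref{eq_product_proj} then identifies the eigenvalues of $S_{10}^{w.n.}(S_{00}^{w.n.})^{-1}S_{01}^{w.n.}(S_{11}^{w.n.})^{-1}$ with those of a product of two orthogonal projectors $P_AP_B$ on $V$, where $P_A$ projects onto the column span of $GE=G_VE$ and $P_B$ onto the column span of $E$ (both spans lie in $V$ since the columns of $E$ do). As before one may assume $\Lambda=\1_N$, since $\eps\mapsto A\eps$ merely conjugates the matrix.

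\emph{Step 2 (randomization and $\widehat M$ as a corner projector).} By rotational invariance of the i.i.d.\ Gaussian law, the column span of $E=\mathcal P\eps^*$ is a Haar-uniform $N$-dimensional subspace of $V$. Conjugating $P_A,P_B$ by a Haar-random orthogonal (unitary if $\beta=2$) operator $O$ of $V$ that maps this span onto $\mathcal S:=\langle\tilde e_1,\dots,\tilde e_N\rangle$ and satisfies $O\stackrel{d}{=}\tilde O$, the eigenvalues of $P_AP_B$ become those of $P_{(\1_V-OF_VO^*)\mathcal S}\,Q_1$, where $Q_1$ projects onto $\mathcal S$ and we used $OG_VO^*=\1_V-OF_VO^*$ together with the irrelevance of the overall sign for a span. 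On the other hand, the block-matrix computation already carried out in the proof of Theorem~\ref{Theorem_Sum_Jacobi} (cf.\ Remark~\ref{Remark_N_to_T_transform}) shows that for any operator $W$ on $V$ the matrix $[W]_{NN}^V([W^*W]_{NN}^V)^{-1}[W^*]_{NN}^V$ is the $\mathcal S$-corner of the orthogonal projector onto $W\mathcal S$; applying this with $W=\tilde W=\1_V-\tilde OF_V^{-1}\tilde O^*$ shows that $\widehat M$ has the same non-zero eigenvalues as $P_{\tilde W\mathcal S}\,Q_1$.

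\emph{Step 3 (matching the operators).} It remains to check that $\1_V-OF_VO^*$ and $\tilde W=\1_V-\tilde OF_V^{-1}\tilde O^*$ have the same law; since $\mathcal S$ is deterministic, equality in distribution of the two corner projectors, and hence of the eigenvalues, then follows. This is the point where the cyclic lead-versus-lag asymmetry in the definition of $\Delta^c$ gets absorbed: the spectrum $\{e^{2\pi ik/T}:1\le k\le T-1\}$ of $F_V$ is invariant under $z\mapsto z^{-1}$, so $F_V$ and $F_V^{-1}=F_V^*$ are conjugate by a fixed orthogonal (unitary) matrix $R$, and then $OF_V^{-1}O^*=(OR)F_V(OR)^*$ with $OR$ again Haar-distributed, which yields $\1_V-OF_VO^*\stackrel{d}{=}\1_V-\tilde OF_V^{-1}\tilde O^*$.

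The parts demanding genuine attention are (i) getting the cyclic structure, the transpose, and the index shift in Step~1 right so that $\Delta^c X\mathcal P$ really equals $\eps\mathcal P(F^*-\1_T)$ and the operator appearing in $P_A$ is — up to sign — the honest difference $\1_V-F_V$ rather than an inverse of it (in contrast to Proposition~\ref{Prop_gaussian_rotation}, where the summation operator produced $(\1_V-F_V)^{-1}$), and (ii) the spectral-symmetry identification of Step~3, which is the only ingredient with no analogue in the proof of Proposition~\ref{Prop_gaussian_rotation}; the remaining steps are a transcription of that proof.
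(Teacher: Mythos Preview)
Your proof is correct and follows essentially the same approach as the paper's. The only cosmetic difference is in the final reconciliation: the paper writes the differencing operator as $G=\mathcal P(F^{-1}-\1_T)\mathcal P$ so that after the Haar rotation one lands directly on $-\tilde W$ and simply observes that the overall sign cancels in $\widehat M$, whereas you take $G=\1_T-F$, arrive at $\1_V-OF_VO^*$, and then invoke the spectral symmetry $F_V\sim F_V^{-1}$ to match $\tilde W=\1_V-\tilde OF_V^{-1}\tilde O^*$. These are two equally short ways of absorbing the same lead-versus-lag asymmetry, and the rest of your argument (reduction to $\eps$, projector identification, Haar rotation onto $\langle\tilde e_1,\dots,\tilde e_N\rangle$) is identical to the paper's.
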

\begin{proof}
The proof is similar to that of Proposition \ref{Prop_gaussian_rotation} and we omit many details. First, note that $\mu$ cancels out both in $X \mathcal P$ and in $\Delta^c X$. Hence, without loss of generality we can assume $\mu=0$.

Next, we define
$$
 G=\mathcal P ( F^{-1} - \1_{T}) \mathcal P.
$$
Note that $\mathcal P$ and $( F^{-1} - \1_{T})$ commute with each other. Using additionally $\mathcal P^2=\mathcal P$, we get:
$$
 S_{00}^{w.n.}=\eps G^* G \eps^{*},\quad S_{01}^{w.n.}=\eps G^* \eps^*, \quad S_{10}^{w.n.}=\eps G \eps^*,\quad S_{11}^{w.n.}=\eps \mathcal P \eps^*.
$$
At this point, arguing as in the proof of  Proposition \ref{Prop_gaussian_rotation}, we can assume that the covariance matrix $\Lambda$ is identical, since any other covariance matrix can be obtained at the cost of  conjugation of  $S_{10}^{w.n.} (S_{00}^{w.n.})^{-1} S_{01}^{w.n.} (S_{11}^{w.n.})^{-1}$, which leaves the eigenvalues unchanged.

We can further identify non-zero eigenvalues of  $S_{10}^{w.n.} (S_{00}^{w.n.})^{-1} S_{01}^{w.n.} (S_{11}^{w.n.})^{-1}$ with those of the product of two projectors $P_2 P_1 P_2$ (similarly to \eqref{eq_product_proj}), where $P_1$ is the orthogonal projector onto the space spanned by $N$ columns of $G\eps^*$ and $P_2$ is the orthogonal projector onto the space spanned by $N$ columns of $\mathcal P \eps^*$. Next, we rotate the space $V$, so that the span of $N$ columns of $\mathcal P \eps^*$ turns into the span of the first $N$ basis vectors $\tilde e_1,\dots,\tilde e_N$ in $V$. At this point we arrive at
\eqref{eq_tilde_M_white} with $\tilde W$ replaced by $-\tilde W$. It remains to notice that the introduction of prefactor $-1$ in front of $\tilde W$ does not change the matrix \eqref{eq_tilde_M_white}.
\end{proof}

\medskip

Let $\T=T-1$ and let $O$ be uniformly random real orthogonal of determinant $1$ (if $\beta=1$) or complex unitary (if $\beta=2$) $\T\times \T$ matrix. Define $W=\1_\T+ O$. Then we set
\begin{equation}
\label{eq_M_def_white}
 M=[W]_{NN} ([W^* W]_{NN})^{-1} [W^*]_{NN}.
\end{equation}
Recall that for a matrix $A$ with real spectrum,  $\lambda_i(A)$ is the $i$th largest eigenvalue of $A$.
\begin{proposition}
\label{Proposition_H1_error_white}
 Assume $\T=T-1\ge 2N$. One can couple $M$ from \eqref{eq_M_def_white} with $\widehat M$ from \eqref{eq_tilde_M_white} in such a way that  for each $\epsilon>0$ we have
 $$
   \mathrm{Prob}\left( \max_{1\le i \le N} |\lambda_i(M)-\lambda_i(\widehat M)|< \frac{1}{N^{1-\epsilon}}\right)\to 1,
 $$
 as $T,N\to\infty$ in such a way that the ratio $T/N$ remains bounded.
\end{proposition}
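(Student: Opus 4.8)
The plan is to follow the proof of Proposition~\ref{Proposition_H1_error} closely; it becomes strictly simpler here because the operator $W=\1_\T+O$ entering \eqref{eq_M_def_white} is bounded (its eigenvalues are $1+e^{i\theta}$, of modulus at most $2$), so the block-inversion manoeuvre that was needed to tame the unbounded $(\1+O)^{-1}$ disappears. The first step is to build the coupling exactly as in the Claim surrounding \eqref{eq_MM_bound}. Block-diagonalize the deterministic operator $-F_V^{-1}$ on $V$ (diagonal with $T$-th roots of unity other than $1$ if $\beta=2$; real block-diagonal with $2\times 2$ rotations, plus one extra $1\times 1$ block when $\T$ is even, if $\beta=1$), with the blocks ordered by increasing $|\arg|$, and call the result $D$. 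Since inversion merely permutes the $T$-th roots of unity, the spectrum of $-F_V^{-1}$ equals that of $-F_V$, so $D$ is precisely the deterministic matrix from the proof of Proposition~\ref{Proposition_H1_error}. Block-diagonalizing the Haar-random $O$ the same way produces a random $D^{\mathrm{rand}}$, and the rigidity bound \eqref{eq_MM_bound} applies verbatim: for each $\delta>0$, $\mathrm{Prob}\bigl(\|D-D^{\mathrm{rand}}\|>\T^{\delta-1}\bigr)$ is super-polynomially small. Conjugating both matrices by an independent Haar matrix $O_2$ gives the coupling
\[
 -\tilde O F_V^{-1}\tilde O^*\ \stackrel{d}{=}\ O_2 D O_2^*,\qquad O\ \stackrel{d}{=}\ O_2 D^{\mathrm{rand}}O_2^*.
\]

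The second step reduces the statement to a quantitative continuity estimate for the map $Z\mapsto M(Z):=[W(Z)]_{NN}\bigl([W^*(Z)W(Z)]_{NN}\bigr)^{-1}[W^*(Z)]_{NN}$ with $W(Z)=\1_\T+Z$ and $Z$ orthogonal/unitary. Writing $\1_\T+Z$ in block form according to $\T=N+(\T-N)$ and using only $Z^*Z=\1_\T$, one finds $[W^*W]_{NN}=[\,2\1_\T+Z+Z^*\,]_{NN}=2\1_N+Z_{11}+Z_{11}^*$, so that
\[
 M(Z)=(\1_N+Z_{11})\,(2\1_N+Z_{11}+Z_{11}^*)^{-1}\,(\1_N+Z_{11}^*),\qquad Z_{11}:=[Z]_{NN}.
\]
With $Z=O$ this is exactly the matrix of Theorem~\ref{Theorem_Var_0}, so $M$ from \eqref{eq_M_def_white} is distributed as $\J\bigl(N;\tfrac{T-N-1}{2},\tfrac{T-2N}{2}\bigr)$ --- this is the exactly-solvable input --- while $\widehat M=M\bigl(\tilde O(-F_V^{-1})\tilde O^*\bigr)$ by \eqref{eq_tilde_M_white}. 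Under the coupling above it therefore suffices to bound $\bigl\|M(O_2 D^{\mathrm{rand}}O_2^*)-M(O_2 D O_2^*)\bigr\|$ in operator norm.

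The only analytic ingredient is a bound on $\|(2\1_N+Z_{11}+Z_{11}^*)^{-1}\|$, and the plan is to reuse the argument of Lemma~\ref{Lemma_inv_bounded}. The smallest eigenvalue of $2\1_N+Z_{11}+Z_{11}^*$ equals $2-2\lambda_{\max}$ of the $N\times N$ corner of the Hermitian matrix $-\tfrac12(Z+Z^*)$, and $\tfrac12\bigl(\1_\T-\tfrac12(Z+Z^*)\bigr)$ is again a Jacobi ensemble, so the chain \eqref{eq_x23}--\eqref{eq_x25} --- with splitting parameter $k=\lfloor N/2\rfloor$, permissible since $\T\ge 2N$ forces $k<\min(N,\T-N)$ --- shows this largest corner-eigenvalue is bounded away from $1$ in probability. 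Hence there is a deterministic $r>0$ with $2\1_N+Z_{11}+Z_{11}^*\ge r\1_N$, and $\|(2\1_N+Z_{11}+Z_{11}^*)^{-1}\|\le r^{-1}$, with probability tending to $1$; this is \emph{easier} than Lemma~\ref{Lemma_inv_bounded}, which concerned the larger corner whose singular values touch $1$, and in fact the crude bound $\|Z_{11}\|<1$ would already do whenever $\T/N$ stays above $2$. Given this, on the intersection of $\{\|D-D^{\mathrm{rand}}\|<\T^{\delta-1}\}$ with the event that $2\1_N+Z_{11}+Z_{11}^*\ge r\1_N$ holds for both matrices, a routine resolvent perturbation --- expand $M(Z)-M(\tilde Z)$ via the displayed formula, use $\|\1_N+Z_{11}\|\le 2$, $\|(2\1_N+Z_{11}+Z_{11}^*)^{-1}\|\le r^{-1}$, and the identity $(G+\Delta)^{-1}-G^{-1}=-(G+\Delta)^{-1}\Delta\,G^{-1}$ as in \eqref{eq_x20} --- yields $\|M(O_2 D^{\mathrm{rand}}O_2^*)-M(O_2 D O_2^*)\|\le C\,\T^{\delta-1}$.

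Since eigenvalues are $1$-Lipschitz in the operator norm and $\T$ is of order $N$, taking $\delta<\epsilon$ gives $\max_{1\le i\le N}|\lambda_i(M)-\lambda_i(\widehat M)|<N^{\epsilon-1}$ with probability $\to1$, as required; combining this with Proposition~\ref{Prop_gaussian_rotation_white} and Theorem~\ref{Theorem_Var_0} then proves Theorem~\ref{Theorem_white_noise_approximation}. The one point that warrants care is keeping the $\beta=1$ block-diagonalization (odd/even $\T$, the $\pm 1$ blocks, and the ordering of the roots of unity) consistent with the proof of Proposition~\ref{Proposition_H1_error}; the operator-norm bound, which was the analytic heart in the VAR($1$) case, is genuinely easier here, so I do not anticipate a serious obstacle.
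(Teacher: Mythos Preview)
Your proposal is correct and follows the approach the paper indicates (the paper omits the proof, pointing to Proposition~\ref{Proposition_H1_error} and the Meckes--Meckes rigidity coupling). Your execution is in fact cleaner than a verbatim transcription of that earlier proof: because $W=\1_\T+Z$ is bounded, you observe directly from $Z^*Z=\1_\T$ that $[W^*W]_{NN}=2\1_N+Z_{11}+Z_{11}^*$, which lands you on the matrix of Theorem~\ref{Theorem_Var_0} without the block-inversion detour \eqref{eq_x18}--\eqref{eq_x19} that was needed to tame $(\1_\T+O)^{-1}$ in the cointegration case. The required operator-norm bound on $(2\1_N+Z_{11}+Z_{11}^*)^{-1}$ is then the $N\times N$ analogue of Lemma~\ref{Lemma_inv_bounded}, and your adaptation of the variational argument \eqref{eq_x23}--\eqref{eq_x25} with $k=\lfloor N/2\rfloor$ goes through; one small caution is that your side remark about the ``crude bound $\|Z_{11}\|<1$'' only gives a uniform gap when $\T/N$ stays bounded \emph{away} from $2$, whereas the stated hypothesis allows $\T/N\to 2$, so the Lemma~\ref{Lemma_inv_bounded}-style argument is indeed what is needed in full generality.
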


The proof is almost identical to that of Proposition \ref{Proposition_H1_error} and we omit it: the key idea is to notice that the only difference between $M$ and $\widehat M$ is in the replacement of $O$ by $- \tilde O (F_V)^{-1} \tilde O^*$; however, by the results of \citet{MeckesMeckes} the latter two matrices can be coupled so that their eigenvectors are the same, while eigenvalues are very close to each other.

\smallskip

Combining Proposition \ref{Prop_gaussian_rotation_white}, Proposition \ref{Proposition_H1_error_white}, and Theorem \ref{Theorem_Var_0}, we arrive at the statement of Theorem \ref{Theorem_white_noise_approximation}.

\subsection{Data}\label{data_subsection}
As of May 12, 2020, S$\&$P100 consists of the following companies: Apple Inc., AbbVie Inc.,	Abbott Laboratories	Accenture, Adobe Inc., American International Group	Allstate, Amgen Inc., American Tower, Amazon.com, American Express	Boeing Co., Bank of America Corp, Biogen, The Bank of New York Mellon, Booking Holdings, BlackRock Inc, Bristol-Myers Squibb, Berkshire Hathaway, Citigroup Inc, Caterpillar Inc., Charter Communications, Colgate-Palmolive, Comcast Corp., Capital One Financial Corp., ConocoPhillips, Costco Wholesale Corp., salesforce.com, Cisco Systems, CVS Health, Chevron Corporation, DuPont de Nemours Inc., Danaher Corporation, The Walt Disney Company, Dow Inc., Duke Energy, Emerson Electric Co., Exelon, Ford Motor Company, Facebook Inc., FedEx, General Dynamics, General Electric, Gilead Sciences,	General Motors,	Alphabet Inc. (Class C), Alphabet Inc. (Class A), Goldman Sachs, Home Depot, Honeywell, International Business Machines, Intel Corp., Johnson $\&$ Johnson, JPMorgan Chase $\&$ Co., Kraft Heinz, Kinder Morgan, The Coca-Cola Company,	Eli Lilly and Company, Lockheed Martin, Lowe's, MasterCard Inc, McDonald's Corp, Mondel\={e}z International, Medtronic plc, MetLife Inc., 3M Company, Altria Group, Merck $\&$ Co., Morgan Stanley, Microsoft, NextEra Energy, Netflix, Nike Inc., NVIDIA Corp., Oracle Corporation, Occidental Petroleum Corp., PepsiCo, Pfizer Inc, Procter $\&$ Gamble Co, Philip Morris International, PayPal Holdings, Qualcomm Inc., Raytheon Technologies, Starbucks Corp., Schlumberger, Southern Company, Simon Property Group, Inc., AT$\&$T Inc, Target Corporation, Thermo Fisher Scientific, Texas Instruments, UnitedHealth Group, Union Pacific Corporation, United Parcel Service, U.S. Bancorp, Visa Inc., Verizon Communications, Walgreens Boots Alliance, Wells Fargo, Walmart, Exxon Mobil Corp.

Eight of the above companies are not available for the entire period under consideration $01.01.2010-01.01.2020$. Those companies are AbbVie Inc. (founded in $2013$), Charter Communications (was bancrupt in $2009$ and got released on NASDAQ in the middle of $2010$), Dow Inc. (was spun off of DowDuPont on April 1, 2019), Facebook Inc. (went on IPO on February 1, 2012), General Motors (approached bankruptcy in 2009 and returned to NASDAQ as a new company at the end of $2010$), Kraft Heinz (Kraft Foods and H.J. Heinz merged into Kraft Heinz in $2015$), Kinder Morgan (was taken in a buyout and began trading again on the NYSE on February 11, 2011), PayPal Holdings (was part of eBay until $2015$).

\section*{Acknowledgements}  The authors are grateful to Donald Andrews, Alexei Borodin, Giuseppe Cavaliere, Alice Guionnet, Bruce Hansen, S{\o}ren Johansen, Grigori Olshanski, and anonymous referees for valuable suggestions. V.G.~acknowledges support from the NSF Grants DMS-1664619 and DMS-1949820.

\bibliographystyle{aer}
\bibliography{large_nT_bib_new}

\end{document}